\pdfoutput=1
\RequirePackage{latexml}

\PassOptionsToPackage{
	pagebackref
}{hyperref}

\iflatexml
	\documentclass{article}
	\usepackage{natbib}
	\usepackage[colorlinks]{hyperref}
	\bibliographystyle{ACM-Reference-Format}
	\usepackage{amssymb,amsthm,tikz}
	\newtheorem{theorem}{Theorem}
	\newtheorem{lemma}{Lemma}
	\newtheorem{proposition}{Proposition}
	\newtheorem{corollary}{Corollary}
	\newtheorem{definition}{Definition}
	\newtheorem{example}{Example}
	\newtheorem{remark}{Remark}
\else
	
	\documentclass[format=acmsmall, review=false, screen, nonacm]{acmart}
	\bibliographystyle{ACM-Reference-Format}
\fi

\usepackage{chrkroer}

\usepackage{booktabs} %
\usepackage[ruled]{algorithm2e} %

\SetAlFnt{\small}
\SetAlCapFnt{\small}
\SetAlCapNameFnt{\small}
\SetAlCapHSkip{0pt}
\IncMargin{-\parindent}

\usepackage{amsmath}
\usepackage{amsthm}
\usepackage{microtype}
\usepackage{thmtools}
\usepackage{thm-restate}
\usepackage[export]{adjustbox}

\usepackage[nameinlink,capitalise]{cleveref}

\crefname{program}{Program}{Programs}
\iflatexml\else
	% handle Cref in section titles
	\usepackage{crossreftools}
	\pdfstringdefDisableCommands{%
		\let\Cref\crtCref
		\let\cref\crtcref
	}
	
\fi

\usepackage{subcaption}
\captionsetup[subfigure]{subrefformat=simple,labelformat=simple}

\usepackage{tikz}
\usetikzlibrary{positioning,decorations.pathreplacing}
\usepackage{pgfplots}
\pgfplotsset{compat=1.18}
\usepackage{xcolor}
\usepackage{nicefrac}
\usepackage{paralist}
\DeclareMathOperator{\relint}{relint}

\usepackage{graphicx} %
\usepackage{multirow} %

\usepackage{wrapfig}
\usepackage{etoolbox}
\makeatletter
\patchcmd\WF@putfigmaybe{\lower\intextsep}{}{}{\fail}%
\AddToHook{env/wrapfigure/begin}{\setlength{\intextsep}{0pt}}
\makeatother

\renewcommand{\epsilon}{\varepsilon}

\setcitestyle{authoryear}

\title[Computing Lindahl Equilibrium for Public Goods with and without Funding Caps]{Computing Lindahl Equilibrium for Public Goods \\ with and without Funding Caps}

\iflatexml
	\author{Christian Kroer \\ Columbia University \and
		Dominik Peters \\ CNRS, LAMSADE, Universit\'e Paris Dauphine - PSL
	}
\else
	\author{Christian Kroer}
	\affiliation{%
		\institution{Columbia University}
		\country{christian.kroer@columbia.edu}
	}
	\email{ck2945@columbia.edu}
	
	\author{Dominik Peters}
	\affiliation{%
		\institution{CNRS, Université Paris Dauphine - PSL}
		\country{dominik.peters@lamsade.dauphine.fr}
	}
	\email{dominik.peters@lamsade.dauphine.fr}
	\authorsaddresses{}
\fi 

\addtolength{\textfloatsep}{-10pt}
\allowdisplaybreaks

\begin{document}

\begin{abstract}
	Lindahl equilibrium is a solution concept for allocating a fixed budget across several divisible public goods. It always lies in the weak core, meaning that the equilibrium allocation satisfies desirable stability and proportional fairness properties.
	We consider a model where agents have separable linear utility functions over the public goods, and the output assigns to each good an amount of spending, summing to at most the available budget.
	
	In the uncapped setting, each of the public goods can absorb any amount of funding. In this case, it is known that Lindahl equilibrium is equivalent to maximizing Nash social welfare, and this allocation can be computed by a public-goods variant of the proportional response dynamics. We introduce a new convex programming formulation for computing this solution and show that it is related to Nash welfare maximization through double duality and reformulation. We then show that the proportional response dynamics is equivalent to running mirror descent on our new formulation, thereby providing a new and very immediate proof of the convergence guarantee for the dynamics.
    Our new formulation has similarities to Shmyrev's convex program for Fisher market equilibrium.
	
	In the capped setting, each public good has an upper bound on the amount of funding it can receive, which is a type of constraint that appears in fractional committee selection and participatory budgeting. In this setting, existence of Lindahl equilibrium was only known via fixed-point arguments. The existence of an efficient algorithm computing one has been a long-standing open question. We prove that our new convex program continues to work when the cap constraints are added, and its optimal solutions are Lindahl equilibria. Thus, we establish that approximate Lindahl equilibrium can be efficiently computed in the capped setting to any desired accuracy. Our result also implies that approximately core-stable allocations can be efficiently computed for the class of separable piecewise-linear concave (SPLC) utilities. 
\end{abstract}

\iflatexml\else
\enlargethispage{30pt}
\addtocontents{toc}{\protect\setcounter{tocdepth}{-1}} %
\maketitle

\newpage
{\tableofcontents}
\addtocontents{toc}{\protect\setcounter{tocdepth}{2}}

\newpage
\fi

\section{Introduction}

We consider a setting where a fixed budget $B > 0$ needs to be spent on $m$ divisible public goods.
Thus, an outcome is a vector $x = (x_1, \dots, x_m) \in \mathbb{R}^m_{\geq 0}$ summing to at most $B$.
Some of the public goods may additionally have \emph{caps}, i.e., upper bounds on the amount of funding they can receive.
How to distribute the spending across the goods is decided based on the preferences of $n$ agents.
We will consider agents with separable linear utility functions over the goods. Agents may have heterogeneous weights (which can be interpreted as endowments).
We will study the solution concept of \emph{Lindahl equilibrium}, which is based on a virtual market with personalized prices \citep{foley1970lindahl}. This equilibrium notion is known to lead to allocations that are fair to voters, formalized via the concept of the core from cooperative game theory \citep{fain2016core}.

The classic economics literature on public goods, starting with \citet{samuelson1954pure}, focuses on how to arrive at the socially efficient amount of spending in the face of free-riding incentives.
In contrast, we consider a fixed budget and are mostly concerned with how to divide it between different public goods.
This approach, sometimes called \emph{portioning} or \emph{fair mixing}, has received increasing attention in computer science over recent years \citep[see, e.g.,][]{fain2016core,aziz2020fairmixing,brandl2021distribution,airiau2023portioning}, due to its many concrete applications. These include \emph{participatory budgeting}, a method used by many cities to let residents vote over how the government will spend a fixed part of its budget \citep{rey2023comsocPBsurvey,aziz2021pbsurvey}, and \emph{donation platforms}, where donors can influence the distribution of a fixed matching fund \citep{brandl2022fundingpublicprojects,brandt2024coordinatingcharitabledonations}.
The model also captures \emph{committee elections} (i.e., multiwinner voting) in its fractional version \citep{aziz2023bestofbothworlds,suzuki2024maxflow}, as well as the \emph{cake sharing} problem \citep{bei2024truthful}.
Voting methods for the public goods model can also be used to settle small-stakes issues such as a lecturer letting students vote over the distribution of class time across topics, or a team to vote over the frequencies with which they will go to different lunch venues.
Companies and non-profit organizations can use the principles derived in this model to decide how to fairly and efficiently divide resources among units or grantees.

In many of these applications, it is desirable to select an outcome that is \emph{representative} of the voters, in that every agent has an equal influence on the overall spending (or an influence that is proportional to the weight assigned to the agent).
This can be formalized as a group fairness guarantee.
In particular, we can require that the spending distribution lie in the (weak) \emph{core}, which means that no subset of voters can construct an alternative way of spending their endowments in a way that they all prefer.
We know that an outcome in the (weak) core always exists thanks to \citet{foley1970lindahl}, who gave a definition of what he called \emph{Lindahl equilibrium} (because he was inspired by ideas of \citet{lindahl1919just}), proved its existence, and showed that it always lies in the core.
This result was introduced to the computer science literature by \citet{fain2016core}.

In the setting where each public good has no cap on funding (we call this the \emph{uncapped setting}), \citet{fain2016core} showed that Lindahl equilibrium is equivalent to the rule that maximizes Nash social welfare (i.e., the product of agent utilities). The Nash rule has its roots in the \citet{nash1950bargaining} bargaining solution, and its objective function has attractive mathematical properties such as scale-freeness \citep{moulin2004fair}. The Nash rule as applied to the public goods model had already been discussed earlier and independently of Lindahl equilibrium due to its attractive group fairness properties \citep{bms2005,GuNe14a}.
The connection between Lindahl equilibrium and the Nash rule is convenient since the latter can be efficiently computed via a convex program reminiscent of the classic Eisenberg--Gale program~\citep{eisenberg1959consensus,eisenberg1961aggregation} for computing a market equilibrium for private goods.
In addition, \citet{brandl2022fundingpublicprojects} showed that the Nash rule can be computed by running a simple proportional response dynamics which converges to the Nash outcome. 
They pointed out that the same convex program had been considered in several unrelated contexts such as in the portfolio selection literature, where this dynamics had also been discovered and shown to converge \citep{cover1984algorithm}.
While the dynamics converges rapidly in practice, a formal bound on the speed of convergence had not been established by 2022, with \citet[page 11]{li2018generalconvergencemirror} noting that the ``algorithm [of Cover] possesses a guarantee of convergence but [no] convergence rate.''

In the \emph{capped setting}, where each public good may have a cap on how much funding it can receive, the Nash rule loses its fairness properties and is not equivalent to Lindahl equilibrium. In contrast, Lindahl equilibrium retains its fairness properties, and its existence is known via fixed-point theorems \citep{foley1970lindahl}. However, this existence result only applies to strictly monotonic utility functions and thus does not allow agents to have valuations equal to 0 for some goods, and it does not allow for caps except through approximating them through appropriate `saturating' utility functions \citep{fain2016core,munagala2022coremultilinear}. Most importantly, the existence result is not algorithmic, and how to compute a Lindahl equilibrium was an open question.
\citet{fain2016core} asked: ``Is computing the Lindahl equilibrium for public goods computationally hard or is there a polynomial time algorithm even [when the public goods are capped]?'' 

Since then there has been no progress on this question. Indeed, \citet{jiang2020approximatelystable} again noted that ``we do not know how to compute the Lindahl equilibrium efficiently''.
It was even open how to compute any allocation that lies in the core, not necessarily a Lindahl equilibrium allocation.
\citet{cheng2020groupfairness} noted that ``it is not known how to compute such a core outcome efficiently even for [...] approval set utilities'', and \citet{suzuki2024maxflow} concluded that ``there is no known polynomial time algorithm for computing fractional core''. 
The need for a practical algorithm was particularly pressing in the work of \citet{munagala2022coremultilinear} who studied \emph{indivisible} public goods and were aiming for allocations that are approximately in the core. Their best result is based on rounding a Lindahl equilibrium allocation and ``yields a 9.27-core, though we do not know how to implement the resulting algorithm in polynomial time''. To obtain a polynomial-time result, \citet{munagala2022coremultilinear} needed to avoid Lindahl equilibrium and in that case only achieved a 67.37-approximation.

\subsection{Contributions}

In the uncapped setting, we prove that the proportional response dynamics converges to a Lindahl equilibrium at a rate of $\log(nm)/t$. We show this by developing a new convex program, distinct from the standard Eisenberg--Gale-style program for Nash welfare maximization, and then showing that applying mirror descent to this program is equivalent to the proportional response dynamics, thereby allowing us to obtain the convergence rate from known results about mirror descent.%
\footnote{While writing the paper, we became aware that \citet{zhao2023convergence} has recently obtained the same convergence rate bound of $\log(nm)/t$. He obtained the convergence rate via a direct first-order analysis of the multiplicative gradient (MG) method. Zhao notes that ``the extraordinary numerical performance of the MG method is rather surprising and somewhat mysterious [because it] is extremely simple''. Our results demystify the performance of the dynamics, by showing that it is equivalent to mirror descent, but on a different convex program.}
We also show that the dynamics converges to an allocation that approximates the proportional fairness condition as well as the core.

\begin{figure}[t]
	\centering
	\iflatexml
	\includegraphics{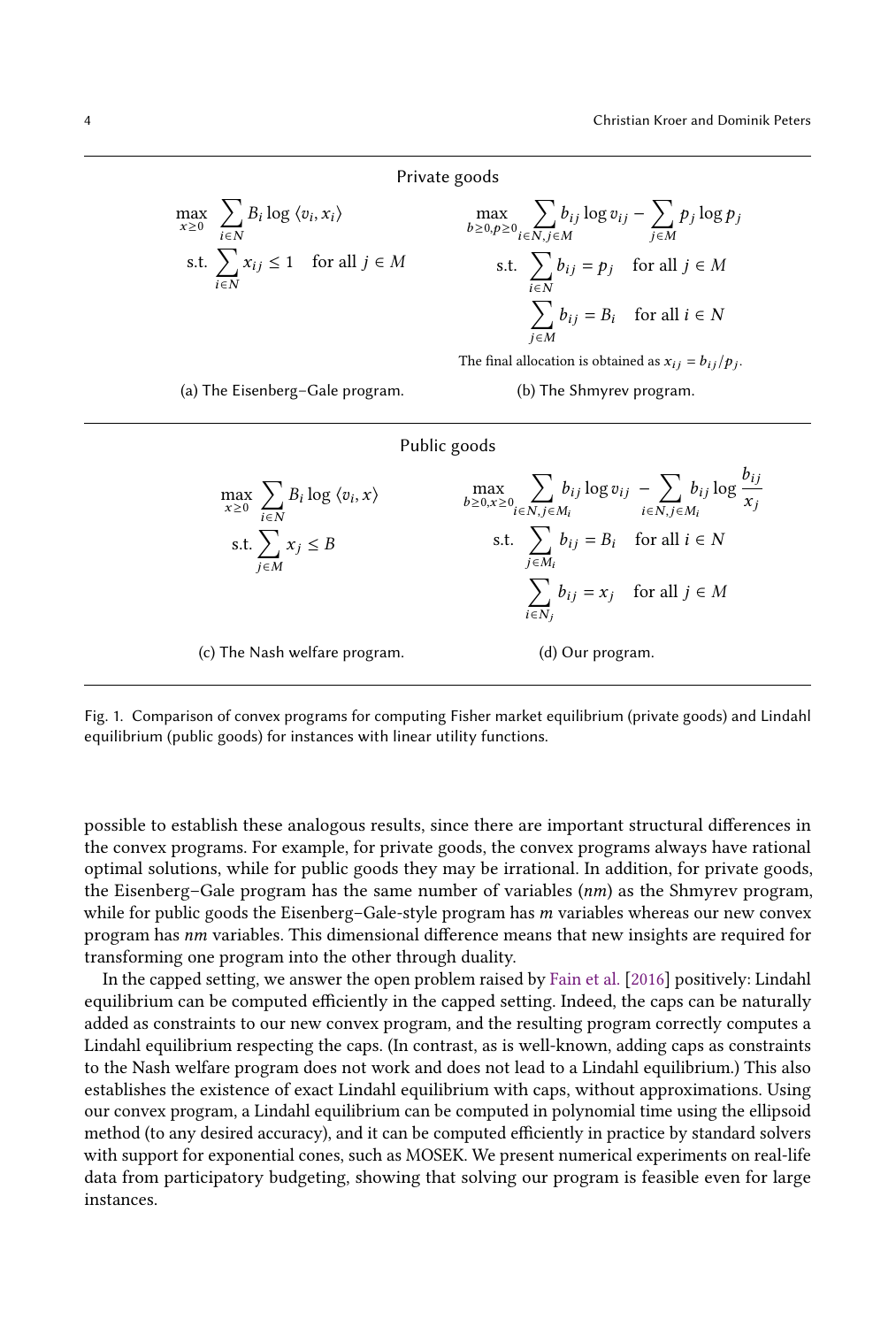}
	\else
	\rule{\linewidth}{0.1ex}\smallskip\\
	\textsf{Private goods} \\
	\begin{subfigure}{0.4\linewidth}
		\begin{align*}
			\max_{x \geq 0} \:&\ \sum_{i\in N} B_i \log\: \langle v_i, x_i \rangle \\
			\text{s.t.} \:& \sum_{i \in N} x_{ij} \leq 1 \quad \text{for all $j \in M$}
		\end{align*}
		\vspace{38pt}
		\caption{The Eisenberg--Gale program.}
	\end{subfigure}
	\qquad
	\begin{subfigure}{0.4\linewidth}
		\begin{align*}
			\max_{b \geq 0, p \geq 0}\:\:&\ \sum_{\mathclap{i\in N, j \in M}} b_{ij} \log v_{ij} - \sum_{j \in M} p_j \log p_j \\
			\text{s.t.}\:\:	& \sum_{j\in M} b_{ij} = B_i \quad \text{for all $i \in N$} \\
			& \sum_{i\in N} b_{ij} = p_j \quad \text{for all $j \in M$}
		\end{align*}
		 {\footnotesize The final allocation is obtained as $x_{ij} = b_{ij} / p_j$.}
		\caption{The Shmyrev program.}
	\end{subfigure} \\
	\rule{\linewidth}{0.1ex}\smallskip\\[1pt]
	\textsf{Public goods} \\
	\begin{subfigure}{0.4\linewidth}
		\begin{align*}
			\max_{x \geq 0}&\ \sum_{i\in N} B_i \log\: \langle v_i, x \rangle \\
			\text{s.t.}& \sum_{j\in M} x_j \leq B \\
			\phantom{\sum_{i \in N}}
		\end{align*}
		\caption{The Nash welfare program.}
	\end{subfigure}
	\qquad
	\begin{subfigure}{0.4\linewidth}
		\begin{align*}
			\max_{b\ge 0, x \ge 0} \:\: & \mathrlap{\displaystyle
				\:\: \sum_{\mathclap{i\in N, j\in M_i}} \: b_{ij} \log v_{ij} 
				\: - \: \sum_{\mathclap{i\in N, j\in M_i}}\: b_{ij} \log \frac{b_{ij}}{x_j}
			} \\
			\text{s.t.}\:\:
			& \sum_{j\in M_i} b_{ij} = B_i \quad \text{for all $i \in N$} \\
			& \sum_{i\in N_j} b_{ij} = x_j \quad \text{for all $j \in M$}
		\end{align*}
		\caption{Our program.}
	\end{subfigure}
	\rule{\linewidth}{0.1ex}
	\fi
	\caption{Comparison of convex programs for computing Fisher market equilibrium (private goods) and Lindahl equilibrium (public goods) for instances with linear utility functions.}
	\label{fig:programs}
\end{figure}
Our new convex program is related to the Eisenberg--Gale-style program through double duality: we show that it can be obtained by taking the dual, introducing new redundant variables, making a change of variable, and performing another dual derivation on this reformulated dual.
The duality and mirror descent relationship that we discover for public goods mirrors existing relationships known in the literature on \emph{private goods} allocation using Fisher market equilibrium. For the private-good setting, equilibrium is also equivalent to maximizing Nash welfare. An alternative convex program for this equilibrium was developed by \citet{shmyrev1983approach,shmyrev2009algorithm}. A proportional response dynamics exists for the private goods case as well~\citep{wu2007proportional,zhang2011proportional}, and \citet{birnbaum2011distributed} showed that it is equivalent to mirror descent on the Shmyrev program. Our new program for the uncapped public goods setting is ``Shmyrev-like'' in its structure. A comparison of these convex programs is shown in \Cref{fig:programs}.
We think it is surprising that it is possible to establish these analogous results, since there are important structural differences in the convex programs. For example, for private goods, the convex programs always have rational optimal solutions, while for public goods they may be irrational. In addition, for private goods, the Eisenberg--Gale program has the same number of variables ($nm$) as the Shmyrev program (after eliminating redundant price variables), while for public goods the Eisenberg--Gale-style program has $m$ variables whereas our new convex program has $nm$ variables. This dimensional difference means that new insights are required for transforming one program into the other through duality.
Finally, the nonlinear term in the objective is a somewhat unusual normalized entropy function in our program, whereas it is the usual entropy on the prices in the private goods case.

In the capped setting, we answer the open problem raised by \citet{fain2016core} positively: Lindahl equilibrium can be computed efficiently in the capped setting. 
Indeed, the caps can be naturally added as constraints to our new convex program, and the resulting program correctly computes a Lindahl equilibrium respecting the caps. (In contrast, as is well-known, adding caps as constraints to the Nash welfare program does not lead to a Lindahl equilibrium.) This also establishes the existence of exact Lindahl equilibrium with caps, without approximations. We also show that approximately optimal solutions to our program form approximate Lindahl equilibria. Thus, using our convex program, an approximate Lindahl equilibrium can be computed in polynomial time using the ellipsoid method (to any desired accuracy), and it can be computed efficiently in practice by standard solvers with support for exponential cones, such as MOSEK. We present numerical experiments on real-life data from participatory budgeting, showing that solving our program is feasible even for large instances.
Finally, we show how to apply our result to compute approximately core-stable allocations for a broader class of utility functions, namely separable piecewise-linear concave utilities (SPLC). 

\subsection{Related Work}
\label{sec:related-work}

\paragraph{Lindahl equilibrium}
Lindahl equilibrium was introduced by \citet{foley1970lindahl}, who named this equilibrium concept after \citet{lindahl1919just} who put forward related ideas of personalized taxation. However, note that there are other distinct ways of formalizing Lindahl's ideas \citep[see][]{vandennouweland2015}, including ratio and cost share equilibrium \citep{kaneko1977ratio,mas1989costshare}. In this work, we use the Foley definition.

\paragraph{Uncapped setting}
Our interest in Lindahl equilibrium is motivated mainly by their proportional fairness properties (notably the core). Such fairness properties have been studied in many related models, notably the ``fair mixing'' or ``portioning'' models  \citep{bms2005,fain2016core,aziz2020fairmixing,brandl2021distribution,airiau2023portioning,gul2020lindahlequilibriumcollectivechoice} that correspond to what we call the uncapped setting. In this setting, Lindahl equilibrium coincides with the maximum Nash welfare solution which has been axiomatically characterized \citep{GuNe14a} and noted for its strong participation incentives \citep{brandl2022fundingpublicprojects} as well as its lowest-possible price of fairness \citep{michorzewski2020price}. The Nash solution is also well-known to provide fair outcomes in other models, such as for private goods \citep{caragiannis2019unreasonable}.

\paragraph{Capped setting}
What we call the capped setting has also been studied in various special cases under various names, such as cake sharing \citep{bei2024truthful}, fractional committee elections \citep{pierczynski2022core,suzuki2024maxflow}, or divisible participatory budgeting \citep{fain2016core,aziz2021pbsurvey}. These works have mostly not considered Lindahl equilibrium, since there was no known way of computing one.

\paragraph{Discrete models}
In discrete models, the public goods can either be fully funded or not at all. This model captures the way many cities run their participatory budgets, and has thus been well-studied including via core-like fairness notions such as EJR \citep{rey2023comsocPBsurvey,peters2021mes}, that were developed in the large literature on approval-based committee elections \citep{lackner2023abc,aziz2017ejr,peters2025core}. There also exist proposals for definitions of Lindahl equilibrium for discrete models \citep{peters2021market,munagala2022auditing}.

\paragraph{Computation}
In the uncapped setting, the maximum Nash welfare solution (and thus Lindahl equilibrium) can be efficiently computed via an Eisenberg--Gale-style convex program. This program has a simple structure (maximizing a natural objective function over the standard simplex), and \citet{zhao2023convergence} has cataloged its appearance in many unrelated areas, including
portfolio selection for maximizing log investment returns \citep{cover1984algorithm},
information theory \citep{csiszar1974computation} and statistics \citep{vardi1993image},
and in medical imaging for positron emission tomography \citep{vardi1985statistical}.
\citet{cover1984algorithm} proposed a dynamics converging to the optimal solution of this program. Convergence proofs were also given by \citet{csiszar1984information} and \citet{brandl2022fundingpublicprojects}. Later, \citet{zhao2023convergence} obtained a convergence rate of $\log(nm)/t$ for this dynamics. This is the same rate that we establish, though his approach does not connect the dynamics to mirror descent.
In the capped setting, very little was known about computation, except for a heuristic algorithm proposed by \citet{fain2016core} that worked well in their experiments.

\paragraph{Donor coordination}
An important application of the public goods allocation problem we study is \emph{donor coordination}, where a collection of donors wish to coordinate their charitable spending, for example by pooling their donations and voting over the division of the pool between different causes. \citet{brandl2022fundingpublicprojects} have argued (using an uncapped model with linear utilities) that the Nash rule and thus Lindahl equilibrium is a good solution concept for this use case \citep[see also][]{greaves2023bargaining}. However, a key reason for coordinating donations is the potential presence of caps: some charities may have a limited ``room for more funding''. This issue is frequently discussed within Effective Altruism, citing cases similar to \Cref{ex:non-unique} \citep{peters2019effectivealtruism}. As our work shows, the Nash solution does not extend well to settings with limited room for more funding, but Lindahl equilibrium as computed by our new convex program does. \citet{brandt2024coordinatingcharitabledonations} also discuss Lindahl equilibrium applied to donor coordination, using Leontief utility functions.

\paragraph{Other applications of the core}
The core has been employed as a fairness property in many other models.
For example, \citet{chaudhury2022federatedlearning} apply it to federated learning.%
\footnote{\citet[Theorem 2]{chaudhury2022federatedlearning} show that the Nash rule satisfies core stability for arbitrary concave utilities. On first sight, this is in contradiction to our claim that Nash fails the core in the capped setting (\Cref{ex:capped-nash-fails-core}). The difference is that \citet{chaudhury2022federatedlearning} use a much weaker notion of the core which involves scaling utilities by the coalition size, rather than scaling the endowment. \citet[Section 1.6]{fain2018indivisiblepublic} discuss the distinction between these concepts.}
It has also been used for clustering \citep{chen2019proportionally,caragiannis2024proportional,kellerhals2024proportional}, peer reviewer assignments \citep{aziz2023peerreview}, and sortition \citep{ebadian2025sortition}.

\section{Setup}
Let $M$ be a set of $m$ \emph{public goods}, which we sometimes refer to as \emph{projects}.  We have an overall \emph{budget} $B > 0$ that we can spend on the public goods.
Let $N = \{1, \dots, n\}$ be a set of $n$ \emph{agents}. 
Each agent $i \in N$ has an individual budget $B_i > 0$ representing $i$'s weight or endowment. These sum to the overall budget, $\sum_{i\in N} B_i = B$. In many applications, the entitlements are equal: $B_i = B/n$.
Each agent $i$ has a valuation $v_{ij} \geq 0$ for each public good $j \in M$. We write $v_i = (v_{ij})_{j \in M}$ for the vector of $i$'s valuations.
The \emph{utility} of an agent $i \in N$ for an outcome $x \in \mathbb{R}^m_{\ge 0}$ is $u_i(x) = \langle v_i, x \rangle = \sum_{j \in M} v_{ij} x_j$. Thus, we use separable linear utilities.
We write $M_i = \{ j \in M : v_{ij} > 0 \}$ for the projects that agent $i \in N$ likes, and we write $N_j = \{ i \in N : v_{ij} > 0 \}$ for the agents that support project $j \in M$. We will assume throughout that $M_i \neq \emptyset$ for all $i \in N$ and that $N_j \neq \emptyset$ for all $j \in M$, since otherwise we may as well remove such agents or public goods from consideration.

In the \emph{uncapped public goods} setting, an \emph{allocation} is a vector $x = (x_j)_{j \in M}$ with $x_j \geq 0$ for all $j \in M$ and $\sum_{j \in M} x_j \le B$. Here, $x_j$ denotes the total spending on project $j$. In this definition, the public goods have no upper bound on how much of them we can spend on them, so in principle the entire budget $B$ could be spent on a single good.

In the \emph{capped public goods} setting, we add the additional constraint that each good $j \in M$ has a maximum amount $\text{cap}_j > 0$ that can be spent on it.
Thus, in this setting, an \emph{allocation} is a vector $x = (x_j)_{j \in M}$ with $0 \le x_j \le \text{cap}_j$ for all $j \in M$ and $\sum_{j \in M} x_j \le B$. 
We assume that $\sum_{j \in M} \text{cap}_j \ge B$ (if not then we simply fully fund all the goods). The uncapped setting is the special case where $\text{cap}_j = \infty$ (or $B$) for each $j \in M$.

\subsection{Lindahl Equilibrium}
\label{sec:lindahl definition}
Our goal is to find a \emph{Lindahl equilibrium} which is known to yield a fair and efficient allocation of public goods, in the sense that it yields an allocation that is Pareto efficient and lies in the (weak) core~\citep{foley1970lindahl,fain2016core}.
Let $p = (p_{ij})_{i \in N, j \in M}$ be a collection of non-negative \emph{personalized prices}, with $p_{ij} \ge 0$ denoting the price that agent $i$ needs to pay per unit of project $j$, and $p_i = (p_{ij})_{j\in M}$ denoting the vector of prices facing $i$.
\begin{definition}[Lindahl Equilibrium]
	\label{def:lindahl-equilibrium}
	Let $x$ be an allocation and let $p$ be a collection of non-negative personalized prices. Then $(x, p)$ is a \emph{Lindahl equilibrium} if 
	\begin{itemize}
		\item $x$ is \emph{affordable}: we have $\langle p_i, x \rangle \le B_i$ for every $i \in N$,
		\item $x$ is \emph{utility-maximizing}: for every $i \in N$ and every $y \in \mathbb{R}_{\ge 0}^m$ such that $0 \le y_j \le \textup{cap}_j$ for all $j \in M$ and such that $\langle p_i, y \rangle\le B_i$, we have $u_i(x) \ge u_i(y)$, 
		\item $x$ is \emph{profit-maximizing}: for every $j \in M$, we have $\sum_{i \in N} p_{ij} \le 1$, and whenever $x_j > 0$ then $\sum_{i \in N} p_{ij} = 1$.
	\end{itemize}
\end{definition}
We say that an allocation $x$ is a \emph{Lindahl equilibrium allocation} if there exist prices $p$ such that $(x,p)$ is a Lindahl equilibrium.

The distinctive property of a Lindahl equilibrium is that prices are personalized, but every agent demands the exact same bundle $x$ of public goods. That is the content of the utility maximization condition: it says that every agent $i$ can afford $x$ given the prices $(p_{ij})_{j \in M}$ and $i$'s budget $B_i$, and prefers $x$ among all affordable allocations satisfying the cap constraint. 

The interpretation of the profit maximization condition is less clear. Its most important effect is that it imposes some amount of efficiency: an equilibrium can only spend a positive amount of budget on projects that have the maximum total price (and generally prices are higher if agent valuations for the project are higher). The condition can be seen as ``profit maximization'' if we imagine that there is a central producer of the public goods who takes in money from the agents and produces the public goods (at a cost of 1 unit of money for 1 unit of public good). This interpretation is made formal in the following simple observation.
\begin{lemma}
	[Equivalent definitions of profit maximization]
	Let $x$ be an allocation and let $p$ be a collection of non-negative personalized prices. Then the following are equivalent:
	\begin{itemize}
		\item[(a)] for every $j \in M$, we have $\sum_{i \in N} p_{ij} \le 1$, and whenever $x_j > 0$ then $\sum_{i \in N} p_{ij} = 1$,
		\item[(b)] for every $y \in \mathbb{R}^m_{\ge 0}$, we have 
		\[\sum_{j\in M}\sum_{i \in N} p_{ij}x_j - \sum_{j\in M} x_j \ge \sum_{j \in M}\sum_{i\in N} p_{ij}y_j - \sum_{j \in M} y_j.\]
	\end{itemize}
\end{lemma}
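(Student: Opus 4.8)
The plan is to recognize that condition (b) is nothing more than the assertion that $x$ maximizes a linear profit functional over the nonnegative orthant, and then to read off (a) as its first-order optimality (complementary slackness) characterization. To set this up, I would first introduce the shorthand $P_j := \sum_{i \in N} p_{ij}$ for the total price of good $j$, and rewrite the profit of an arbitrary bundle $y \in \mathbb{R}^m_{\ge 0}$ as
$\Pi(y) := \sum_{j\in M}\sum_{i \in N} p_{ij} y_j - \sum_{j\in M} y_j = \sum_{j \in M} (P_j - 1)\, y_j$.
With this notation, statement (b) says exactly that $\Pi(x) \ge \Pi(y)$ for all $y \in \mathbb{R}^m_{\ge 0}$, i.e.\ that $x$ maximizes $\Pi$ over the orthant.

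For the direction (a) $\Rightarrow$ (b), I would argue by sign-counting. Assuming $P_j \le 1$ for all $j$ and $P_j = 1$ whenever $x_j > 0$, each term $(P_j - 1)\,y_j$ is $\le 0$ for $y \ge 0$, so $\Pi(y) \le 0$ for every feasible $y$. Meanwhile every term $(P_j - 1)\,x_j$ vanishes, since either $x_j = 0$ or $P_j - 1 = 0$; hence $\Pi(x) = 0 \ge \Pi(y)$, which is precisely (b).

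For the converse (b) $\Rightarrow$ (a), I would proceed in two steps. First I establish $P_j \le 1$ for every $j$ by a perturbation (unboundedness) argument: if some $P_k > 1$, then taking $y = x + t e_k$ (with $e_k$ the $k$-th standard basis vector and $t > 0$) gives a feasible bundle with $\Pi(x + t e_k) = \Pi(x) + t(P_k - 1) \to \infty$, contradicting the maximality of $x$. Once $P_j \le 1$ holds everywhere, each term of $\Pi(x) = \sum_{j}(P_j - 1)\,x_j$ is nonpositive, so $\Pi(x) \le 0$; but plugging $y = 0$ into (b) yields $\Pi(x) \ge \Pi(0) = 0$. Therefore $\Pi(x) = 0$, forcing every term $(P_j - 1)\,x_j = 0$, which is exactly the complementary-slackness statement that $x_j > 0$ implies $P_j = 1$. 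This recovers (a).

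I do not anticipate a genuine obstacle: the claim is the textbook optimality characterization of a linear objective over an orthant, and both implications reduce to tracking signs. The only point that requires a moment of care is ordering the converse argument correctly — the bound $P_j \le 1$ must be derived first (via unboundedness), because the complementary-slackness conclusion $\Pi(x) = 0$ only makes sense once we know the maximum is finite and equal to $0$.
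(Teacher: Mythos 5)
Your proof is correct and follows essentially the same approach as the paper: both directions reduce to tracking the signs of the terms $(\sum_{i\in N} p_{ij}-1)\,y_j$, with the unboundedness argument for ruling out $\sum_{i\in N} p_{ij}>1$. The only cosmetic difference is in the complementary-slackness step of the converse, where the paper zeroes out the single offending coordinate of $x$ while you compare against $y=0$ and use that a sum of nonpositive terms equals zero; both are equally valid.
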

\begin{proof}
	If the prices satisfy (a), then $\sum_{j\in M}\sum_{i \in N} p_{ij}x_j - \sum_{j\in M} x_j = \sum_{j\in M} x_j - \sum_{j\in M} x_j = 0$ and for every  $y \in \mathbb{R}^m_{\ge 0}$, we have $\sum_{j \in M}\sum_{i\in N} p_{ij}y_j - \sum_{j \in M} y_j \le \sum_{j \in M} y_j - \sum_{j \in M} y_j = 0$, establishing (b).
	
	If the prices satisfy (b), but there is some $j \in M$ with $\sum_{i \in N} p_{ij} > 1$, then the profit attained by $y \in \mathbb{R}^m_{\ge 0}$ is unbounded as $y_j \to \infty$, a contradiction. If the prices satisfy (b), but there is some $j \in M$ with $x_j > 0$ but $\sum_{i \in N} p_{ij} < 1$, then taking $y$ to be identical to $x$ but with $y_j = 0$ gives larger profit, a contradiction. These two contradictions establish (a).
\end{proof}
Note that in condition (b), the producer compares $x$ to every other possible vector $y \in \mathbb{R}^m_{\ge 0}$, even if $y$ violates the cap-constraints or the overall budget constraint. Condition (b) is usually used as part of the definition of Lindahl equilibrium, but we have used condition (a) in our definition because it is simpler and more useful in proofs.

\begin{example}[Personal projects]
	\label{ex:personal-projects}
	Consider the uncapped setting, and suppose that each agent likes exactly one project that nobody else likes, so we have $N = M$, with $v_{ii} = 1$ for each $i \in N$ and $v_{ij} = 0$ for all $i \neq j$.
	Let $(x, p)$ be a Lindahl equilibrium. Note first that $p_{ii} > 0$, since if $p_{ii} = 0$ then there would be no utility-maximizing bundle for $i$ who would demand unbounded spending on project $i$.
	Now, for each $i \in N$, the utility maximizing bundles are those that place the highest possible amount on project $i$, given the prices and subject to affordability, which is $B_i/p_{ii}$. Hence, in Lindahl equilibrium $x_i = B_i/p_{ii}$. In particular $x_i > 0$ for each $i$, and the entire endowment of $i$ is spent on project $i$ since $p_{ii}x_i = B_i$. Thus, in order for $x$ to be affordable for $i$, it must be the case that $p_{ij} = 0$ whenever $j \neq i$ (because $x_j > 0$). By profit maximization, since $x_i > 0$, we get that $p_{ii} = 1$. Thus, $B_i/x_i = p_{ii} = 1$ and so $x_i = B_i$. This shows that there is a unique Lindahl equilibrium allocation $x$ with $x_i = B_i$ for each $i \in N$.
 \end{example}
 
Every Lindahl equilibrium $(x,p)$ can be \emph{decomposed}: For each $i \in N$ and $j \in M$, write
\[
b_{ij} = p_{ij}x_j
\]
for the \emph{contribution} of $i$ towards $j$.
This is a decomposition of $x$ (similar to a notion considered by \citet[Definition 2]{brandl2022fundingpublicprojects}) because the values $(b_{ij})_{ij}$ satisfy the following conditions:
\begin{itemize}
	\item For each $j \in M$, we have $x_j = \sum_{i \in N} b_{ij}$. (This is trivial if $x_j = 0$ and if $x_j > 0$ it follows because $\sum_{i \in N} p_{ij} = 1$.)
	\item For each $i \in N$, we have $\sum_{j \in M} b_{ij} \le B_i$. (This is simply a restatement of the affordability condition of the definition of Lindahl equilibrium.)
\end{itemize}
With this interpretation, we can see that $p_{ij}$ equals the fraction of spending on project $j$ that is contributed by agent $i$. This interpretation also appears in the definition of \emph{ratio equilibrium} \citep{kaneko1977ratio} which is equivalent to Foley's Lindahl equilibrium in the simple model we consider: We take the spending on a public good to be the same as the amount of the public good that is provided, which implies constant returns to scale, where several public goods equilibrium notions coincide \citep[see also][]{moore2006generalequilibrium,vandennouweland2015,mas1989costshare}.

\citet{foley1970lindahl} proved the existence of Lindahl equilibrium using a fixed-point theorem, in a model that is more general than ours. However, his result only applies to strictly monotonic preferences, and thus only establishes existence when $v_{ij} > 0$ for all $i \in N$ and $j \in M$. We will allow $v_{ij} = 0$. In the presence of zeros, it makes sense to consider Lindahl equilibria $(x,p)$ that are what we call zero-respecting.
\begin{definition}[Zero-respecting]
	A Lindahl equilibrium $(x,p)$ is \emph{zero-respecting} if for all $i \in N$ and $j \in M$, whenever  $v_{ij} = 0$ and $x_j > 0$ then $p_{ij} = 0$.
\end{definition}
This is a natural condition in view of the decomposition we considered above, because in a zero-respecting Lindahl equilibrium, an agent contributes only to projects with positive utility: if $v_{ij} = 0$ then $b_{ij} = 0$. This condition is also imposed in the decomposability condition of \citet[Definition 2]{brandl2022fundingpublicprojects}.

The following example shows that not every Lindahl equilibrium is zero-respecting, and that zero-respecting Lindahl equilibria may violate Pareto efficiency. This will motivate imposing a certain sufficient condition introduced below that will avoid this result.

\begin{example}[Lindahl equilibrium may underspend]
	\label{ex:zero-respecting Lindahl equilibrium underspends}
	Consider the following instance:
	\begin{center}
		{\upshape
			\begin{tabular}{lccc}
				\toprule
				& $B_i$ & Project 1 & Project 2 \\
				\midrule
				Agent 1 & $0.5$ & $1$ & $0$ \\
				Agent 2 & $0.5$ & $0$ & $1$ \\
				\midrule
				$\text{cap}_j$ & & $0.25$ & $\infty$ \\
				\bottomrule
		\end{tabular}}
	\end{center}
	On this instance, the unique zero-respecting Lindahl equilibrium allocation is $x = (0.25, 0.5)$. To see this, note that each agent will demand the project that the agent likes, no matter the prices. Thus $x_1, x_2 > 0$. By profit maximization and the zero-respecting condition, we have $p_{11} = p_{22} = 1$ and $p_{12} = p_{21} = 0$. Then by the affordability and utility maximization conditions of Lindahl equilibrium, we get $x = (0.25, 0.5)$. 
	Note that the total spending in this instance is $0.75$, strictly less than the available budget of $B = 0.5 + 0.5$. In particular, $x$ is Pareto-dominated by the allocation $y = (0.25, 0.75)$.
	
	If we remove the zero-respecting condition, there exist other Lindahl equilibria. In particular, $x' = (0.25, 0.75)$ forms an equilibrium with the prices $p_1 = (1, \frac13)$ and $p_2 = (0, \frac23)$.%
	\footnote{Suppose we replace zero-valuations by $\epsilon > 0$, i.e., we set $v_{12} = v_{21} = \epsilon$. Then for all $\epsilon > 0$, every Lindahl equilibrium allocation has $x_2 \ge 0.75$ by \Cref{cor:pareto} (Pareto optimality). But for $\epsilon = 0$, in a zero-respecting Lindahl equilibrium, $x_2 = 0.5$. Thus, Lindahl equilibrium does not necessarily converge to a zero-respecting Lindahl equilibrium as $\epsilon \to 0$.%
	\refstepcounter{footnote}\addtocounter{footnote}{-1}\label{fn:discontinuous}%
	}
	
\end{example}

Note that the formal model of \citet{foley1970lindahl} does not directly support caps, but these can be simulated $\epsilon$-approximately through concave utility functions \citep[Footnote 2]{munagala2022coremultilinear}. We will be able to handle caps without any approximations.

\subsection{Pareto-Optimality and the Core}
Next we discuss how the Lindahl equilibrium relates to Pareto optimality and the set of allocations that are in the core. In the uncapped setting with strictly increasing valuations, the relationship between these concepts is straightforward, and was already studied by \citet{foley1970lindahl}. However, as we shall see, there is more nuance in the capped setting and in the presence of valuations equal to 0. We begin by introducing a sufficient condition that excludes examples like \Cref{ex:zero-respecting Lindahl equilibrium underspends} where intuitively the caps of projects that receive non-zero valuations are too low. We will see that under this sufficient condition, every zero-respecting Lindahl equilibrium spends the entire budget, is Pareto efficient, and lies in the core.

For every $i \in N$, write $F_i = \{ f \in N \mid M_i \cap M_f \neq \emptyset \}$ for the set of ``friends'' of $i$ who agree that at least one common project has a positive valuation.
\begin{definition}
	\label{def:cap-sufficient}
	An instance is \emph{cap-sufficient} if we have $\sum_{j \in M_i} \textup{cap}_j \ge \sum_{f \in F_i} B_f$ for all $i \in N$.
\end{definition}

There are many interesting settings in which instances are always cap-sufficient, including:
\begin{itemize}
	\item The uncapped setting where $\text{cap}_j = +\infty$ for all $j \in M$.
	\item All valuations are positive: $v_{ij} > 0$ for all $i \in N$ and $j \in M$. (Proof: In this case, $M_j = M$ and $F_i = N$, so the cap-sufficiency condition is implied by our general assumption that $\sum_{j\in M} \text{cap}_j \ge B$.)
	\item Each agent has positive utility for goods whose total cap reaches the budget: $\sum_{j \in M_i} \text{cap}_j \ge B$.
\end{itemize}

We will show that Lindahl equilibrium has particularly desirable properties on cap-sufficient instances. A key consequence of cap-sufficiency is that every voter spends their entire budget.

\begin{proposition}
	\label{prop:cap-sufficient implications}
	On a cap-sufficient instance, if $(x, p)$ is a zero-respecting Lindahl equilibrium, then
	\begin{enumerate}
		\item[(i)] for every $i \in N$, we have $\langle p_i, x \rangle = B_i$,
		\item[(ii)] we have $\sum_{j \in M} x_j = B$,
		\item[(iii)] for every $i \in N$ and every $y \in \mathbb{R}_{\ge 0}^m$ such that $0 \le y_j \le \textup{cap}_j$ for all $j \in M$ and such that $\langle v_i, y \rangle \ge \langle v_i, x \rangle$, we have $\langle p_i, y \rangle \ge B_i$.
	\end{enumerate}
\end{proposition}
\begin{proof}
	(i) Suppose for a contradiction that $\langle p_i, x \rangle < B_i$. We claim that then for all $j \in M_i$, we have $x_j = \text{cap}_j$: Otherwise, if $x_j < \text{cap}_j$, we can increase $x_j$, thereby increasing the utility of $i$, and a sufficiently small increase is affordable since $i$ does not spend all of $B_i$ and satisfies the cap constraints. This contradicts utility maximization.

	Now, because $(x, p)$ is zero-respecting, for each $j \in M_i$, only friends of $i$ will contribute to $j$ because $N_j \subseteq F_i$. Thus, $b_{i'j} = 0$ if $i' \in N \setminus F_i$. But then
	\[ \sum_{j \in M_i} \text{cap}_j  = \sum_{j \in M_i} x_j = \sum_{j \in M_i} \sum_{f \in F_i} b_{fj} \le \sum_{f \in F_i} \langle p_f, x \rangle < \sum_{f \in F_i} B_f ,  \]
	since $\langle p_i, x \rangle < B_i$ by assumption. This contradicts that the instance is cap-sufficient.
	
	(ii) Using (i), we deduce that
	\[
	\sum_{j \in M} x_j = \langle 1, x \rangle = \sum_{j \in M} \sum_{i \in N} p_{ij} x_j = \sum_{i \in N} \sum_{j \in M} p_{ij} x_j = \sum_{i \in N} B_i = B.
	\]
	The second equality follows by distinguishing the cases $x_j > 0$ (use profit maximization) and $x_j = 0$ (does not contribute to the sum).
	
	(iii) For a contradiction, suppose there is $i \in N$ and a cap-respecting allocation $y$ such that $\langle v_i, y \rangle \ge \langle v_i, x \rangle$ but with $\langle p_i, y \rangle < B_i$. Due to (i), we have $\langle p_i, y \rangle < \langle p_i, x \rangle$. Thus, there exists some $j \in M$ such that $p_{ij} > 0$ and $y_j < x_j$. Since $(x,p)$ is zero-respecting, we have $v_{ij} > 0$. Now consider an allocation obtained from $y$ but with the $j$-coordinate increased by a small amount. For a small enough increase, the resulting allocation respects the cap-constraints (because $y_j < x_j \le \text{cap}_j$) and is affordable for~$i$ (because $\langle p_i, y \rangle < B_i$), but gives $i$ strictly higher utility, contradicting the utility-maximization condition of Lindahl equilibrium. (Note that the constructed allocation may not respect the overall budget constraint, but that is not required by the definition of Lindahl equilibrium.)
\end{proof}

A major reason to be interested in Lindahl equilibrium is that it always lies in the weak core, which is a fairness or stability property formalizing proportional representation.

\begin{definition}[Core]
	An allocation $x$ is in the \emph{core} if there is no ``blocking coalition'' $S \subseteq N$ and no objection $z = (z_j)_{j \in M} \in \mathbb{R}_{\ge 0}^m$ with $0 \le z_j \le \textup{cap}_j$ for all $j \in M$, such that $\sum_{j \in M} z_j \le \sum_{i \in S} B_i$ (it can be afforded by the blocking coalition) and for all $i \in S$, we have $\langle v_i, z \rangle \ge \langle v_i, x \rangle$ (every coalition member weakly prefers the objection) and the inequality is strict for at least one $i \in S$. It is in the \emph{weak core} if there are no such $S$ and $z$ such that $\langle v_i, z \rangle > \langle v_i, x \rangle$ for all $i \in S$.
\end{definition}

\citet[Section 6]{foley1970lindahl} proved that Lindahl equilibrium allocations are in the weak core, though his model implicitly assumed cap-sufficiency. We can more generally show the following.

\begin{proposition}
	\label{prop:lindahl is core}
	Let $(x,p)$ be a Lindahl equilibrium. Then $x$ lies in the weak core. If the instance is cap-sufficient and $(x,p)$ is zero-respecting, then $x$ lies in the core.
\end{proposition}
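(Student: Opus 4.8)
The plan is to prove both parts by contradiction, in each case deriving a violation of profit maximization through a double-counting argument over the personalized prices. For the weak core, suppose towards a contradiction that some coalition $S \subseteq N$ and objection $z$ with $0 \le z_j \le \text{cap}_j$, $\sum_{j\in M} z_j \le \sum_{i\in S} B_i$, and $\langle v_i, z\rangle > \langle v_i, x\rangle$ for all $i \in S$ exists. Since $z$ is cap-respecting and each $i \in S$ strictly prefers it to $x$, the utility-maximization condition of Lindahl equilibrium forces $z$ to be unaffordable for $i$, i.e.\ $\langle p_i, z\rangle > B_i$ (otherwise $z$ would be a feasible bundle beating $x$). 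Summing over $S$ and using $\sum_{i\in S} B_i \ge \sum_{j\in M} z_j$ gives $\sum_{i\in S}\langle p_i, z\rangle > \sum_{j\in M} z_j$. On the other hand, swapping the order of summation and using profit maximization $\sum_{i\in N} p_{ij} \le 1$ together with $z_j \ge 0$ yields $\sum_{i\in S}\langle p_i, z\rangle = \sum_{j\in M} z_j \sum_{i\in S} p_{ij} \le \sum_{j\in M} z_j$, the desired contradiction.

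For the (strong) core under cap-sufficiency and the zero-respecting assumption, I would run the same skeleton, but now the objection only requires $\langle v_i, z\rangle \ge \langle v_i, x\rangle$ for all $i\in S$, with strict inequality for at least one $i^* \in S$. The crucial difference is that utility maximization no longer rules out affordability for the weakly-preferring agents, so for those I invoke \Cref{prop:well-behaved implications}(iii), which—precisely because the instance is cap-sufficient and the equilibrium is zero-respecting—guarantees $\langle p_i, z\rangle \ge B_i$ for every $i$ with $\langle v_i, z\rangle \ge \langle v_i, x\rangle$. For the distinguished agent $i^*$ with strict preference, utility maximization again gives the strict bound $\langle p_{i^*}, z\rangle > B_{i^*}$ as in the weak-core case. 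Summing over $S$, the single strict inequality propagates to $\sum_{i\in S}\langle p_i, z\rangle > \sum_{i\in S} B_i \ge \sum_{j\in M} z_j$, and the same profit-maximization double-counting bound $\sum_{i\in S}\langle p_i, z\rangle \le \sum_{j\in M} z_j$ delivers the contradiction.

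I expect the main obstacle to be the strong-core case, since there I cannot lower-bound the cost $\langle p_i, z\rangle$ of the objection for the merely weakly-preferring agents using utility maximization alone. The argument therefore genuinely rests on \Cref{prop:well-behaved implications}(iii), which is exactly the place where cap-sufficiency and the zero-respecting property are used; \Cref{ex:zero-respecting Lindahl equilibrium underspends} already shows that without such a hypothesis the strong-core conclusion can fail (there the equilibrium underspends and is Pareto-dominated). Everything else reduces to the clean price-summation inequality, in which profit maximization supplies the upper bound $\sum_{i\in S}\langle p_i,z\rangle\le\sum_{j\in M} z_j$ while affordability of the objection by the coalition supplies the matching lower bound, so that a single strict per-agent inequality suffices to close the contradiction.
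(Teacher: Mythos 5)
Your proposal is correct and follows essentially the same route as the paper's proof: utility maximization gives the strict inequality $\langle p_i, z\rangle > B_i$ for strictly-preferring agents, \Cref{prop:well-behaved implications}(iii) supplies $\langle p_i, z\rangle \ge B_i$ for the weakly-preferring agents in the cap-sufficient, zero-respecting case, and summing over the coalition contradicts the chain of inequalities obtained from non-negativity of prices, profit maximization, and the affordability of the objection. The only cosmetic difference is that you phrase the final contradiction as two opposing bounds on $\sum_{i\in S}\langle p_i, z\rangle$ rather than the paper's single cyclic chain of inequalities.
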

\begin{proof}
	Suppose not, and suppose $S \subseteq N$ is a blocking coalition with objection $z$ satisfying $\sum_{j \in M} z_j \le \sum_{i \in S} B_i$ and the cap constraints.
	We now claim that $\sum_{i \in S} \langle p_i, z \rangle > \sum_{i \in S} B_i$.
	\begin{itemize}
		\item Under the assumption that $x$ fails the weak core, note that since for every $i \in S$ we have $\langle v_i, z \rangle > \langle v_i, x \rangle$, the utility maximization condition of Lindahl equilibrium implies that $\langle p_i, z \rangle > B_i$. Summing over $i \in S$ establishes the claim.
		\item Under the assumption that $x$ fails the core, that $(x, p)$ is zero-respecting, and that the instance is cap-sufficient, \Cref{prop:cap-sufficient implications}(iii) implies that since $\langle v_i, z \rangle \ge \langle v_i, x \rangle$ for all $i \in S$, we have $\langle p_i, z \rangle \ge B_i$. Since we have $\langle v_i, z \rangle > \langle v_i, x \rangle$ for at least one $i \in S$, we have $\langle p_i, z \rangle > B_i$ for that $i$ due to utility maximization. Again, summing over $i \in S$ establishes the claim.
	\end{itemize}
	
	Combining the claim with the non-negativity of prices and profit maximization, we have
	\begin{align*}
		\sum_{i \in S} B_i 
		< \sum_{i \in S} \langle p_i, z \rangle 
		\le \sum_{i \in N} \langle p_i, z  \rangle 
		\le \langle 1, z \rangle
		 \le \sum_{i \in S} B_i,
	\end{align*}
	a contradiction.
\end{proof}

As a special case, taking $S = N$ in \Cref{prop:lindahl is core}, we see that Lindahl equilibrium allocations are (weakly) Pareto efficient, establishing a version of the First Welfare Theorem.

\begin{definition}[Pareto-optimality]
	An allocation $x$ is \emph{Pareto-optimal} if there is no allocation $x'$ such that $u_i(x') \ge u_i(x)$ for all $i \in N$ and $u_i(x') > u_i(x)$ for some $i \in N$. It is \emph{weakly} Pareto-optimal if there is no $x'$ with $u_i(x') > u_i(x)$ for all $i \in N$.
\end{definition}

\begin{corollary}
	\label{cor:pareto}
	Let $(x,p)$ be a Lindahl equilibrium. Then $x$ is weakly Pareto optimal. If the instance is cap-sufficient and $(x,p)$ is zero-respecting, then $x$ is Pareto optimal.
\end{corollary}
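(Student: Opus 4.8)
The plan is to obtain both claims as the special case $S = N$ of \Cref{prop:lindahl is core}, exactly as the surrounding text anticipates. The conceptual point is that a Pareto improvement is nothing other than an objection raised by the grand coalition: if some allocation $x'$ dominates $x$, then setting $S = N$ and $z = x'$ should yield a blocking coalition in the sense of the core definition, and the claimed Pareto properties are then immediate contrapositives of core-membership.

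First I would handle weak Pareto optimality via the weak-core conclusion, which holds for every Lindahl equilibrium with no extra hypotheses. Assume for contradiction that $x$ is not weakly Pareto optimal, so there is an allocation $x'$ with $u_i(x') > u_i(x)$ for all $i \in N$. Take $S = N$ and $z = x'$. I then check that $z$ is a legitimate objection: since $x'$ is an allocation we automatically have $0 \le x'_j \le \textup{cap}_j$ for all $j \in M$ and $\sum_{j \in M} x'_j \le B = \sum_{i \in N} B_i = \sum_{i \in S} B_i$, so the cap and affordability requirements in the core definition are satisfied. Because $\langle v_i, z \rangle = u_i(x') > u_i(x) = \langle v_i, x \rangle$ for every $i \in S$, the pair $(S,z)$ witnesses a violation of the weak core, contradicting \Cref{prop:lindahl is core}.

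Next I would handle Pareto optimality under the two extra hypotheses (cap-sufficiency and $(x,p)$ zero-respecting), this time invoking the full-core conclusion of \Cref{prop:lindahl is core}. The argument is identical except that the dominating allocation $x'$ now satisfies $u_i(x') \ge u_i(x)$ for all $i \in N$ with at least one strict inequality, so $(N, x')$ is a blocking coalition for the core (rather than only the weak core), again producing the required contradiction.

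I do not expect a genuine obstacle here; the only thing to be careful about is the bookkeeping that confirms $z = x'$ meets every feasibility requirement imposed on objections in the core definition. In particular, one must use that the grand-coalition budget $\sum_{i \in N} B_i$ equals $B$, which is precisely the overall budget constraint that every allocation already respects, so no additional feasibility checking is needed. Once this identification is made, both statements fall out immediately.
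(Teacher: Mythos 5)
Your proposal is correct and matches the paper exactly: the paper derives \Cref{cor:pareto} precisely by taking $S = N$ in \Cref{prop:lindahl is core}, treating a (weak) Pareto improvement as an objection by the grand coalition, which is what you do. Your additional bookkeeping---checking that a dominating allocation $x'$ satisfies the cap and budget requirements on objections, using $\sum_{i \in N} B_i = B$---is the same implicit verification the paper relies on.
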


As another special case of the core result, it is worth noting that Lindahl equilibria also gives guarantees for individual agents (by considering $S = \{i\}$), leading to an axiom generalizing the \emph{individual fair share} property of \citet{aziz2020fairmixing}.

\begin{corollary}
	Let $x$ be a Lindahl equilibrium allocation, and let $i \in N$. Then for every allocation $z$ with $\sum_{j \in M} z_j \le B_i$ and $z_j \le \textup{cap}_j$ for all $j \in M$, we have $u_i(x) \ge u_i(z)$.
\end{corollary}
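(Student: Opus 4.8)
The plan is to derive this immediately from the weak-core membership guaranteed by \Cref{prop:lindahl is core}, specializing the blocking coalition to the singleton $S = \{i\}$. The point is that the hypotheses the corollary places on $z$ are exactly the conditions required of an objection for that coalition.

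First I would observe that any $z$ as in the statement qualifies as an admissible objection for $S = \{i\}$: it is a nonnegative vector (being an allocation), it respects the caps since $z_j \le \textup{cap}_j$ for all $j \in M$, and it is affordable by the coalition, since $\sum_{j \in M} z_j \le B_i = \sum_{f \in \{i\}} B_f$. Thus the only missing ingredient that would make $(S, z)$ a witness against the weak core is a strict utility improvement for the (single) coalition member.

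Then I would argue by contradiction. Suppose some such $z$ satisfied $u_i(z) > u_i(x)$, i.e.\ $\langle v_i, z \rangle > \langle v_i, x \rangle$. Then $S = \{i\}$ together with the objection $z$ would be exactly the configuration $\langle v_i, z \rangle > \langle v_i, x \rangle$ for all $i \in S$ that is ruled out by weak-core membership. Since \Cref{prop:lindahl is core} guarantees that \emph{every} Lindahl equilibrium allocation lies in the weak core, this is a contradiction, and we conclude $u_i(x) \ge u_i(z)$ for every admissible $z$.

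I do not anticipate any real obstacle here; the only point worth flagging is that this uses only the weak-core half of \Cref{prop:lindahl is core}, so the guarantee holds for \emph{arbitrary} Lindahl equilibria — no cap-sufficiency or zero-respecting hypothesis is needed — which is why the corollary is stated for general Lindahl equilibrium allocations.
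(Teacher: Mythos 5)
Your proof is correct and matches the paper's own argument exactly: the paper derives this corollary as the special case $S = \{i\}$ of the weak-core guarantee in \Cref{prop:lindahl is core}, which is precisely your route. Your observation that only the weak-core half is needed (so no cap-sufficiency or zero-respecting hypothesis is required) is also the reason the paper can state the corollary for arbitrary Lindahl equilibrium allocations.
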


In the uncapped setting, this implies that in any Lindahl equilibrium allocation, we have $u_i(x) \ge B_i \cdot \max_j v_{ij}$ for all $i \in N$.

\section{Convex Optimization Background}
This section gives background on convex optimization and the mirror descent algorithm.

\paragraph{Basic definitions.}
Let $f: \mathbb{R}^n \to (-\infty, \infty]$ be a function. 
It is called \emph{proper} if there exists $x \in \mathbb{R}^n$ with $f(x) < \infty$.
Its \emph{subdifferential} at $x \in \mathbb{R}^n$ is $\partial f(x) = \{ g \in \mathbb{R}^n : f(y) \ge f(x) + \langle g, y - x \rangle \text{ for all } y \in \mathbb{R}^n \}$. Elements of $\partial f(x)$ are called \emph{subgradients}.
The \emph{convex conjugate} of $f$ is the function $f^* : \mathbb{R}^n \to [-\infty, \infty]$ defined by $f^*(y) = \sup_{x \in \mathbb{R}^n} (\langle y, x \rangle - f(x))$. We use the convention $0\log 0 = 0$, and in this paper $\log$ always denotes the natural logarithm. We write $B \cdot \Delta^m = \{ x \in \mathbb{R}_{\ge 0}^m :  \sum_j x_j = B \}$ for the scaled simplex.

\paragraph{KKT optimality conditions.}
We will use the following version of the Karush--Kuhn--Tucker theorem, which also works for non-differentiable objective functions.
\begin{theorem}[\citealp{ruszczynski2011nonlinear}, Thm 3.34]
	\label{thm:subdiff-kkt}
	Let $x^*$ be an optimal solution to the program
	\[ \textup{minimize } f(x) \textup{ subject to } h_i(x) \le 0 \textup{ for $i = 1, \dots, m$} \]
	where $f : \mathbb{R}^n \to (-\infty, \infty]$ and $h_i : \mathbb{R}^n \to \mathbb{R}$, $i = 1, \dots, m$, are proper convex functions. 
	Assume that $f$ is continuous at some feasible point, and that Slater's constraint qualification is satisfied, so that there is a feasible point $x$ with $h_i(x) < 0$ for $i = 1, \dots, m$.
	Then there exist $\lambda_1, \dots, \lambda_m \ge 0$ such that
	\[
	\textstyle
	0 \in \partial f(x^*) + \sum_{i = 1}^m \lambda_i \partial h_i(x^*)
	\quad
	\text{and}
	\quad
	\lambda_i h_i(x^*) = 0 \text{ for $i = 1, \dots, m$.}
	\]
	Conversely, if $x^*$ satisfies the constraints $h_i(x^*) \le 0$ for $i = 1, \dots, m$ and there exist $\lambda_1, \dots, \lambda_m \ge 0$ satisfying the above conditions, then $x^*$ is a global minimum.	
\end{theorem}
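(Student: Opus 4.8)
The plan is to prove the two directions separately, treating sufficiency (the ``conversely'' part) as the elementary direction and necessity as the one requiring convex duality. For sufficiency, I would assume $x^*$ is feasible and that there are multipliers $\lambda_i \ge 0$, together with subgradients $g_0 \in \partial f(x^*)$ and $g_i \in \partial h_i(x^*)$, satisfying $g_0 + \sum_i \lambda_i g_i = 0$ and $\lambda_i h_i(x^*) = 0$. For any feasible $x$, the subgradient inequalities give $f(x) \ge f(x^*) + \langle g_0, x - x^* \rangle$ and $\lambda_i h_i(x) \ge \lambda_i h_i(x^*) + \lambda_i \langle g_i, x - x^* \rangle$. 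Summing the latter over $i$, using complementary slackness and $\sum_i \lambda_i g_i = -g_0$, yields $\langle g_0, x - x^* \rangle \ge -\sum_i \lambda_i h_i(x) \ge 0$, where the last step uses $\lambda_i \ge 0$ and $h_i(x) \le 0$. Substituting back gives $f(x) \ge f(x^*)$, so $x^*$ is a global minimizer.

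For necessity I would pass through the Lagrangian dual. Define $g(\lambda) = \inf_x \bigl[ f(x) + \sum_i \lambda_i h_i(x) \bigr]$ for $\lambda \ge 0$; weak duality gives $g(\lambda) \le f(x^*)$. The crux is strong duality: the existence of $\lambda^* \ge 0$ with $g(\lambda^*) = f(x^*)$. I would obtain it by a supporting-hyperplane argument applied to the convex set $C = \{ (r, s) \in \mathbb{R} \times \mathbb{R}^m : \exists x,\ f(x) - f(x^*) \le r \text{ and } h_i(x) \le s_i \text{ for all } i \}$. Optimality of $x^*$ means no feasible point improves the objective, so $(-\delta, 0) \notin C$ for every $\delta > 0$, while $(0,0) \in C$ via $x = x^*$; hence $(0,0)$ is a boundary point and admits a supporting hyperplane with normal $(\mu_0, \mu_1, \dots, \mu_m)$. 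Because $C$ is upward closed in every coordinate, the normal is nonnegative, which will supply $\lambda_i := \mu_i \ge 0$. Slater's condition then forces $\mu_0 > 0$: if $\mu_0 = 0$ the supporting inequality would read $\sum_i \mu_i h_i(x) \ge 0$ for all $x$, contradicted by the strictly feasible point where all $h_i < 0$. Normalizing $\mu_0 = 1$, the supporting inequality becomes $f(x) + \sum_i \lambda_i h_i(x) \ge f(x^*)$ for all $x$, i.e. $g(\lambda) \ge f(x^*)$, which together with weak duality gives strong duality.

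Once strong duality holds, complementary slackness is immediate: from $f(x^*) = g(\lambda^*) \le f(x^*) + \sum_i \lambda_i^* h_i(x^*)$ and the fact that each $\lambda_i^* h_i(x^*) \le 0$, we force $\sum_i \lambda_i^* h_i(x^*) = 0$, hence every term vanishes. In turn $x^*$ minimizes the Lagrangian $L = f + \sum_i \lambda_i^* h_i$, so $0 \in \partial L(x^*)$. The last step is to commute the subdifferential with the finite sum, i.e. to conclude $\partial L(x^*) = \partial f(x^*) + \sum_i \lambda_i^* \partial h_i(x^*)$; this is the Moreau--Rockafellar sum rule, whose hypotheses are met precisely because $f$ is assumed continuous at a point in the domains of the $h_i$.

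I expect the main obstacle to be the strong-duality step, specifically the verification that the separating hyperplane has a strictly positive objective component $\mu_0$ and thus is non-vertical. This is exactly where Slater's condition is indispensable, and where one must argue carefully using the monotone (upward-closed) structure of $C$ to pin down both the sign of $\mu_0$ and the nonnegativity of the $\lambda_i$. The subdifferential sum rule is the second delicate point, but it is a citable consequence of the stated continuity hypothesis rather than something I would reprove from scratch.
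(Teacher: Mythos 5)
First, a point of comparison: the paper does not prove this statement at all --- it is imported verbatim as Theorem 3.34 of Ruszczynski's textbook and used as a black box. So there is no in-paper argument to measure you against; your attempt has to be judged as a from-scratch proof of the cited result.

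Your sufficiency direction is correct and complete. Your necessity direction is the standard geometric proof (epigraph-type set $C$, supporting hyperplane at the origin, non-verticality via Slater, then Moreau--Rockafellar), and the skeleton is right, but there is a genuine gap at exactly the step you flag as the crux, namely $\mu_0 > 0$. The supporting inequality does \emph{not} read ``$\sum_i \mu_i h_i(x) \ge 0$ for all $x$'': a point $x$ with $f(x) = +\infty$ generates no element of $C$ (no finite $r$ satisfies $f(x) - f(x^*) \le r$), so the inequality you obtain is only ``for all $x \in \operatorname{dom} f$.'' The Slater point is merely assumed to satisfy $h_i(x) < 0$; nothing guarantees it lies in $\operatorname{dom} f$, so no contradiction follows from $\mu_0 = 0$. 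The gap is real, not cosmetic: take $n = m = 1$, $f(x) = -\sqrt{x}$ for $x \ge 0$ and $f(x) = +\infty$ for $x < 0$, and $h_1(x) = x$. Slater holds (take $x = -1$), the unique feasible point of $\operatorname{dom} f$ is $x^* = 0$, which is optimal, yet $\partial f(0) = \emptyset$, so the KKT conclusion fails --- and indeed the only supporting hyperplane to $C$ at the origin is vertical. What rules this out in the theorem is the hypothesis you never actually use: continuity of $f$ at some feasible point $x_0$. That hypothesis puts $x_0$ in the interior of $\operatorname{dom} f$, and then for small $t > 0$ the point $x_t = (1-t)x_0 + t\bar{x}$ (with $\bar{x}$ the Slater point) satisfies $h_i(x_t) \le (1-t)h_i(x_0) + t\,h_i(\bar{x}) < 0$ and lies in $\operatorname{dom} f$; plugging $x_t$ into the supporting inequality restores your contradiction and completes the proof.

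Relatedly, you misattribute the role of the continuity hypothesis: it is not what licenses the subdifferential sum rule at the end. Since each $h_i$ is real-valued and convex on all of $\mathbb{R}^n$, it is continuous everywhere, so Moreau--Rockafellar gives $\partial\bigl(f + \sum_i \lambda_i^* h_i\bigr)(x^*) = \partial f(x^*) + \sum_i \lambda_i^* \partial h_i(x^*)$ for any proper convex $f$, with no continuity assumption on $f$ whatsoever. The continuity hypothesis is needed precisely once, in the non-verticality argument above; once you reroute it there, your proof is correct.
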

The KKT theorem can be used to characterize the optimal solutions of convex programs. Two such optimization problems will appear repeatedly in our derivations, and so we state them here.
\begin{lemma}
	\label{lem:simple simplex optimizations}
	\begin{enumerate}
		\item[(a)] Let $y \in \mathbb R^n$. Suppose $x^*$ minimizes $\sum_{i = 1}^n x_i \log x_i - \langle y, x \rangle$ subject to $x \in B \cdot \Delta^n$. Then $x^*_i =  B \cdot e^{y_i}/(\sum_{j=1}^n e^{y_j})$ for $i = 1, \dots, n$.
		\item[(b)] Let $y \in \mathbb R^n_{\ge 0} \setminus \{0\}$. Suppose $x^*$ maximizes $\sum_{i = 1}^n y_i \log x_i$ subject to $x \in \Delta^n$. Then $x^*_i =  y_i/(\sum_{j=1}^n y_j)$ for $i = 1, \dots, n$.
	\end{enumerate}
\end{lemma}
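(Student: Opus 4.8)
The plan is to treat both parts as convex minimization problems over the simplex $\Delta^n = \{ x \in \mathbb{R}^n_{\ge 0} : \sum_i x_i = 1\}$ and to apply the optimality conditions of \Cref{thm:subdiff-kkt}, augmented with a Lagrange multiplier $\mu$ of unrestricted sign for the equality constraint $\sum_i x_i = 1$ (formally, one encodes the equality as the two inequalities $\sum_i x_i - 1 \le 0$ and $1 - \sum_i x_i \le 0$, whose combined multiplier may take either sign). In both cases the point $(1/n, \dots, 1/n)$ is strictly feasible for the nonnegativity constraints, so Slater's condition holds, and in both cases the objective is convex, so the stationarity and complementary-slackness conditions are both necessary and sufficient for global optimality. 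I would therefore \emph{exhibit} a candidate KKT point and then argue uniqueness, rather than derive the solution from necessity.

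For part (a), write $f(x) = \sum_i x_i \log x_i - \langle y, x\rangle$. First I would note that $f$ is strictly convex on $\Delta^n$ (negative entropy is strictly convex, the linear term is affine), so the minimizer is unique and it suffices to verify one KKT point. The candidate is $\bar x_i = e^{y_i}/Z$ with $Z = \sum_j e^{y_j}$; it is strictly positive, hence lies in the relative interior of $\Delta^n$, so the nonnegativity multipliers vanish and $f$ is differentiable at $\bar x$. The stationarity equation reads $\log \bar x_i + 1 - y_i + \mu = 0$ for every $i$, and since $\log \bar x_i = y_i - \log Z$ this reduces to $1 - \log Z + \mu = 0$ in every coordinate simultaneously; choosing $\mu = \log Z - 1$ satisfies it, and complementary slackness is trivial. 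By the sufficiency direction of \Cref{thm:subdiff-kkt}, $\bar x$ is the global minimizer, and by strict convexity $x^* = \bar x$, the claimed softmax.

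For part (b), I would instead minimize $g(x) = -\sum_i y_i \log x_i$ (convex) over $\Delta^n$, using the convention $0\log 0 = 0$, and set $S = \{ i : y_i > 0\}$, which is nonempty since $y \in \mathbb{R}^n_{\ge 0}\setminus\{0\}$. The candidate is $\bar x_i = y_i/T$ with $T = \sum_j y_j > 0$, so $\bar x_i > 0$ exactly on $S$. On the support, $\partial g/\partial x_i = -y_i/\bar x_i = -T$, so stationarity $-y_i/\bar x_i + \mu - \lambda_i = 0$ holds with $\lambda_i = 0$ and $\mu = T$; off the support $g$ is independent of $x_i$, so its coordinate derivative is $0$ and stationarity becomes $\mu - \lambda_i = 0$, satisfied by $\lambda_i = \mu = T \ge 0$. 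Complementary slackness $\lambda_i \bar x_i = 0$ holds in both cases, so $\bar x$ is a global minimizer. For uniqueness, I would argue that any minimizer must place all its mass on $S$, since diverting mass to an off-support coordinate strictly shrinks the budget available to the support and thereby strictly worsens $\sum_{i \in S} y_i \log x_i$; restricted to the support coordinates $g$ is strictly convex, forcing the minimizer to be unique and equal to $\bar x$, i.e.\ $x^*_i = y_i/\sum_j y_j$.

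Setting aside the routine algebra, I expect the only genuine subtlety to be the degeneracy in part (b) caused by the zeros of $y$: there the objective is flat in the off-support directions while the corresponding nonnegativity constraints are active, so those multipliers must be handled through complementary slackness, and uniqueness must be argued separately rather than read off from strict convexity on all of $\Delta^n$. The analogous difficulty in part (a)---the blow-up of the gradient of $x\log x$ at the boundary---is sidestepped entirely by verifying a strictly interior candidate instead of deriving optimality from necessity.
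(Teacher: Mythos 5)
Your proof is correct, but it uses \Cref{thm:subdiff-kkt} in the opposite direction from the paper. The paper disposes of part (a) by citation to \citet[Example 3.71]{beck2017firstorder}, and proves part (b) via the \emph{necessity} direction of the KKT theorem: it first discards the coordinates with $y_i = 0$ (setting $x_i = 0$ there is clearly optimal), notes that any maximizer of the reduced problem must be strictly positive (otherwise the objective is $-\infty$), concludes that the nonnegativity multipliers vanish, and then reads $x_i \propto y_i$ off the stationarity condition. You instead exhibit the candidate solution and invoke the \emph{sufficiency} direction, supplying uniqueness separately: strict convexity in (a), and in (b) the rescaling argument showing that no mass can sit off the support, plus strict convexity restricted to the support. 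What your route buys: a self-contained proof of (a), and freedom from any constraint qualification, since the converse part of \Cref{thm:subdiff-kkt} needs no Slater condition. That freedom matters more than you acknowledge: your preliminary claim that Slater's condition holds is false under your own encoding of $\sum_i x_i = 1$ as two opposite inequalities (they can never both be strict), so the necessity direction is in fact unavailable in the setup you describe; since you only ever invoke sufficiency, this is a blemish in the framing rather than a gap. What the paper's route buys: brevity in (b), because eliminating the zero coordinates of $y$ at the outset removes exactly the boundary degeneracy that forces you to handle the off-support multipliers $\lambda_i = \mu = T$ and to argue uniqueness by hand.
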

\begin{proof}
	For (a), see  the book by \citet[Example 3.71]{beck2017firstorder}.
	For (b), let us assume that $y_i > 0$ for all $i$, since for $i$ with $y_i = 0$, it is optimal to set $x_i = 0$, and we can ignore these indexes when optimizing the others. Under this assumption, note that any optimal solution must have $x_i > 0$ for all $i$. Applying the KKT theorem, this means that the multipliers of the non-negativity constraints are 0. Thus, the stationarity condition implies that $0 = -y_i /x_i + \lambda$, where $\lambda \in \mathbb{R}$ is the multiplier for the constraint $\sum_{i} x_i = 1$. Thus, $x_i = \lambda y_i$. Summing over all $i$, we see that $\lambda = 1/(\sum_{i} y_i)$, which gives the result.
\end{proof}

\paragraph{Convex programming duality.}
We will use the following recipe for deriving the dual of convex programs with linear constraints. The recipe was explicitly given in \citet{cole2017convex}, though it is also a direct consequence of Fenchel duality \citep[Theorem 31.1]{rockafellar1970}, as we show for completeness.
\begin{theorem}
	[Fenchel duality for linearly constrained convex programs]
	\label{thm:fenchel dual}
	Let $f$ be a proper convex function.
	The following programs are dual:
	\begin{itemize}
		\item $\min_{x} f(x) - \langle c, x \rangle$ subject to $Ax \ge b$,
		\item $\max_{y, z} \langle b, z \rangle - f^*(y)$ subject to $A^Tz = y - c$ and $z \ge 0$.
	\end{itemize}
	In particular, the two programs have the same objective value, provided that there is a point $x$ with $f(x) < \infty$ and $Ax > b$.
\end{theorem}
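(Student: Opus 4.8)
The plan is to derive this as a direct consequence of Fenchel--Rockafellar duality by encoding the constraint $Ax \ge b$ through an indicator function and then computing two convex conjugates. First I would write the primal objective in the composite form $\phi(x) + \psi(Ax)$, where $\phi(x) = f(x) - \langle c, x\rangle$ and $\psi$ is the indicator of the polyhedron $\{u : u \ge b\}$ (so $\psi(u) = 0$ if $u \ge b$ and $\psi(u) = +\infty$ otherwise). Then $\inf_x \{\phi(x) + \psi(Ax)\}$ is exactly the primal value, since $\psi(Ax)$ vanishes when $Ax \ge b$ and is $+\infty$ when the constraint is violated. Both $\phi$ and $\psi$ are proper convex: $\phi$ because $f$ is proper convex and only a linear term was subtracted, and $\psi$ because $\{u \ge b\}$ is a nonempty polyhedron.

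Next I would invoke Fenchel duality in the form
\[
\inf_x \{\phi(x) + \psi(Ax)\} = \sup_z \{-\phi^*(-A^\top z) - \psi^*(z)\},
\]
which follows from \citep[Theorem 31.1]{rockafellar1970} after incorporating the linear map $A$ (e.g.\ by applying the basic theorem on the product space $\mathbb{R}^n \times \mathbb{R}^m$ to the variable $(x, u)$ with the linear relation $u = Ax$); the inequality direction here is the unconditional weak-duality half of the theorem.

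The substantive content is the two conjugate computations. For $\phi$, a shift gives $\phi^*(w) = \sup_x \{\langle w + c, x\rangle - f(x)\} = f^*(w + c)$, hence $\phi^*(-A^\top z) = f^*(c - A^\top z)$. For $\psi$, I compute the support function of $\{u \ge b\}$: $\psi^*(z) = \sup_{u \ge b} \langle z, u\rangle$, which is $+\infty$ unless $z \le 0$ (otherwise push the offending $u_j \to +\infty$) and equals $\langle b, z\rangle$ when $z \le 0$ (the supremum being attained at $u = b$). Plugging these in gives $\sup_{z \le 0} \{-f^*(c - A^\top z) - \langle b, z\rangle\}$; replacing $z$ by $-z$ flips the sign constraint to $z \ge 0$ and yields $\sup_{z \ge 0} \{\langle b, z\rangle - f^*(c + A^\top z)\}$. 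Introducing the redundant variable $y = c + A^\top z$ — which in the paper's row-vector convention is precisely the stated constraint $zA = y - c$ — rewrites this as $\max_{y,z:\, zA = y - c,\, z \ge 0} \langle b, z\rangle - f^*(y)$, matching the claimed dual.

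Finally, for the equality of objective values I would verify the constraint qualification. The requirement is a point of $\operatorname{dom}\phi$ whose image under $A$ lies in $\operatorname{ri}(\operatorname{dom}\psi)$; since $\{u \ge b\}$ is full-dimensional, $\operatorname{ri}(\operatorname{dom}\psi) = \{u > b\}$, so this is exactly the hypothesis that some $x$ has $f(x) < \infty$ and $Ax > b$. I expect the only real subtlety to be here: the clean Slater condition $Ax > b$ — rather than a relative-interior condition on $\operatorname{dom} f$ — is licensed because $\psi$ is \emph{polyhedral}, so the polyhedral refinement of the qualification applies and removes any requirement that $x$ lie in $\operatorname{ri}(\operatorname{dom} f)$. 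The conjugate computations themselves are routine; the care needed is essentially in the sign bookkeeping and in matching the row/column conventions of $zA = A^\top z$.
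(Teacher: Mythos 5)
Your proof is correct, and it takes a genuinely different route from the paper's. The paper folds both the linear objective term and the constraint into a single concave function $h(x) = \langle c, x\rangle$ on $\{x : Ax \ge b\}$ (and $-\infty$ elsewhere), applies the plain two-function Fenchel duality between $f$ and $h$ (with $y$ as the dual variable), and then needs \emph{LP duality} as a second ingredient to convert $h_*(y) = \inf_{Ax \ge b} \langle y - c, x\rangle$ into $\sup_{z \ge 0,\, zA = y - c} \langle b, z\rangle$ --- that is where $z$ is born. You instead keep the linear term with $f$ (inside $\phi$), encode the constraint as the indicator $\psi$ of $\{u : u \ge b\}$, and apply the Fenchel--Rockafellar composite form through the map $A$, so $z$ appears directly as the dual variable and $y$ enters only at the end as a redundant renaming $y = c + A^\top z$. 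Your route trades the paper's LP-duality step for two elementary conjugate computations (the shift formula for $\phi$ and the support function of $\{u \ge b\}$), at the cost of needing the linear-map version of Fenchel duality (or the product-space reduction you sketch) rather than the plain version; both derivations are clean, and yours arguably makes the origin of the dual constraints $zA = y - c$, $z \ge 0$ more transparent.

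One imprecision is worth fixing: your justification of the constraint qualification. The polyhedral refinement of Fenchel duality relaxes the relative-interior requirement on the \emph{polyhedral} function's domain --- i.e., it lets you demand only $Ax \in \operatorname{dom}\psi$ while still requiring $x \in \operatorname{ri}(\operatorname{dom}\phi)$ --- not the other way around, so it does not by itself remove the condition $x \in \operatorname{ri}(\operatorname{dom} f)$ as you claim. What actually licenses the hypothesis ``$f(x) < \infty$ and $Ax > b$'' is that $\{u : u > b\}$ is \emph{open}: pick any $x_1 \in \operatorname{ri}(\operatorname{dom}\phi)$ and slide along the segment from $x_1$ toward your point $x$; points on the segment sufficiently close to $x$ lie in $\operatorname{ri}(\operatorname{dom}\phi)$ by the line-segment principle, and their images under $A$ remain in the open set $\{u > b\}$, which yields the standard relative-interior qualification. (The paper glosses over the identical step: its hypothesis also only directly produces a point of $\operatorname{dom} f \cap \operatorname{int}(\operatorname{dom} h)$, not of the intersection of relative interiors.) With that one-line repair your argument is complete.
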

\begin{proof}
	Consider the concave function
	\[
	h(x) =
	\begin{cases}
		\langle c, x \rangle & \text{if $Ax \ge b$,} \\
		-\infty & \text{otherwise.}
	\end{cases}
	\]
	Its concave conjugate is $h_*(y) =_{\text{def}} \inf_x \langle y, x \rangle - h(x) = \inf_{x : Ax \ge b} \: \langle y, x \rangle - \langle c, x \rangle = \inf_{x : Ax \ge b} \: \langle y-c, x \rangle$.
	Now, applying Fenchel duality and LP duality, we have
	\begin{align*}
		\min_{x: Ax \ge b} f(x) - \langle c, x \rangle
		&= \inf_x \: f(x) - h(x) \\
		&= \sup_y \: h_*(y) - f^*(y) \tag{Fenchel's duality theorem} \\
		&= \sup_y \: \big( \: \inf_{\mathclap{x\, :\, Ax \ge b}}\quad \langle y-c, x \rangle  - f^*(y) \:\big) \\
		&= \sup_y \: \big( \: \sup_{\mathclap{z \ge 0 : A^Tz = y - c}}\quad \langle b, z \rangle - f^*(y) \:\big) \tag{LP duality} \\
		&= \sup_{\mathclap{y, z\,:\, z\ge 0,\, A^Tz = y - c}} \quad \langle b, z \rangle - f^*(y),
	\end{align*}
	showing that the two programs are dual. Fenchel's duality theorem applies provided that $f$ and $h$ have domains whose relative interior intersects \citep[Theorem 31.1]{rockafellar1970}, which follows from the existence of some $x$ with $f(x) < \infty$ and $Ax > b$.
\end{proof}

\paragraph{Mirror descent.}
The mirror descent (MD) algorithm is a first-order method for convex minimization which generalizes projected gradient descent to allow for more general notions of distance. 
Given a convex set $X$ and a convex function $f$, the goal is to minimize $f$ over $X$ via first-order updates. MD relies on a Bregman divergence $D_h(x\|y)$, which is a convex function that measures the difference between $x$ and $y$. 
The function $D_h$ is constructed from some 1-strongly convex \emph{reference function} $h$ as $D_h(x\|y) = h(x) - h(y) - \langle \nabla h(y), x-y \rangle$.
For example, taking the negative entropy reference function $h(x) = \sum_i x_i \log x_i$, the Bregman divergence becomes the KL divergence, $D_h(x\|y) = \sum_i x_i \log(x_i/y_i)$, for $x,y \in \Delta^n$.
The update rule for MD is 
\begin{equation}
    x^{t+1} = \arg\min_{x\in X} \: \langle \nabla f(x^t), x \rangle + \frac{1}{\eta}D_h(x\|x^t),
    \label{eq:mirror descent}
\end{equation}
where $\eta > 0$ is a stepsize parameter.
There are a variety of convergence results for MD. We will specifically be interested in the case where a special relationship holds between the objective $f$ and the reference function $h$, known as \emph{relative smoothness}.
The function $f$ is said to be 1-smooth relative to the reference function $h$ when it holds for all $x,y\in \relint X$ that
    \[
    f(x) \leq f(y) + \langle \nabla f(y), x-y \rangle + D_h(x\| y).
    \]
The following theorem from \citet{birnbaum2011distributed} shows that when the reference function $h$ is chosen such that relative smoothness holds, the sequence of iterates generated by mirror descent converges at a rate of $O(1/t)$:
\begin{theorem}[\citealp{birnbaum2011distributed}, Theorem 3]
\label{thm:md rel lip}
    Suppose that $f$ is 1-smooth relative to the reference function $h$, and we run mirror descent using $h$ as the distance-generating function with $\eta = 1$. Let $x^*$ be an optimal solution. Then the sequence of iterates generated by mirror descent satisfies:
    \[
    f(x^t) - f(x^*) \leq \frac{D_h(x^*\| x^0)}{t}
    \]
\end{theorem}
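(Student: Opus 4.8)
The plan is to follow the standard template for analyzing mirror descent under relative smoothness, combining a one-step descent inequality with a telescoping argument. Throughout I take the stepsize to be $\eta = 1$, which is the regime in which relative $1$-smoothness is the correct assumption and which produces the stated bound. The key tool is the \emph{three-point identity} for Bregman divergences, which for every $u \in X$ reads
\[
D_h(u \| x^t) = D_h(u \| x^{t+1}) + D_h(x^{t+1} \| x^t) + \langle \nabla h(x^{t+1}) - \nabla h(x^t), u - x^{t+1} \rangle,
\]
and follows immediately from the definition $D_h(x\|y) = h(x) - h(y) - \langle \nabla h(y), x - y\rangle$.

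First I would establish a one-step inequality bounding the suboptimality at $x^{t+1}$. Applying relative $1$-smoothness with $a = x^{t+1}$ and $b = x^t$ gives $f(x^{t+1}) \le f(x^t) + \langle \nabla f(x^t), x^{t+1} - x^t \rangle + D_h(x^{t+1} \| x^t)$, while convexity of $f$ gives $f(x^t) - f(x^*) \le \langle \nabla f(x^t), x^t - x^* \rangle$. Adding these two and collapsing the gradient terms yields
\[
f(x^{t+1}) - f(x^*) \le \langle \nabla f(x^t), x^{t+1} - x^* \rangle + D_h(x^{t+1} \| x^t).
\]

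Next I would control the inner-product term using the optimality of the proximal update \eqref{eq:mirror descent}. Since $x^{t+1}$ minimizes $\langle \nabla f(x^t), x \rangle + D_h(x \| x^t)$ over $X$, its first-order condition reads $\langle \nabla f(x^t) + \nabla h(x^{t+1}) - \nabla h(x^t), u - x^{t+1} \rangle \ge 0$ for all $u \in X$. Substituting this into the three-point identity with $u = x^*$ gives
\[
\langle \nabla f(x^t), x^{t+1} - x^* \rangle \le D_h(x^* \| x^t) - D_h(x^* \| x^{t+1}) - D_h(x^{t+1} \| x^t).
\]
Plugging this into the one-step inequality, the two $D_h(x^{t+1} \| x^t)$ terms cancel, leaving the clean telescoping bound $f(x^{t+1}) - f(x^*) \le D_h(x^* \| x^t) - D_h(x^* \| x^{t+1})$.

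Finally I would sum this inequality over $t = 0, \dots, T-1$; the right-hand side telescopes to $D_h(x^* \| x^0) - D_h(x^* \| x^T) \le D_h(x^* \| x^0)$. To convert this into a bound on the last iterate, I would separately verify that $f(x^t)$ is non-increasing: evaluating the proximal objective at the feasible point $x^t$ shows $\langle \nabla f(x^t), x^{t+1} - x^t \rangle + D_h(x^{t+1} \| x^t) \le 0$, which together with relative smoothness gives $f(x^{t+1}) \le f(x^t)$. Monotonicity then yields $f(x^T) - f(x^*) \le \tfrac{1}{T}\sum_{t=0}^{T-1}\big(f(x^{t+1}) - f(x^*)\big) \le D_h(x^* \| x^0)/T$. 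I expect the main obstacle to be the middle step: correctly pairing the proximal optimality condition with the three-point identity so that the $D_h(x^{t+1}\|x^t)$ terms cancel and a genuine telescope emerges. Everything else is routine bookkeeping once that cancellation is secured.
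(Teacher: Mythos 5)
Your proof is correct and complete: the one-step bound from relative smoothness plus convexity, the proximal optimality condition combined with the three-point identity to produce the telescoping inequality $f(x^{t+1}) - f(x^*) \le D_h(x^*\|x^t) - D_h(x^*\|x^{t+1})$, and the monotonicity argument to pass from the averaged bound to the last iterate are all valid. Note that the paper itself gives no proof of this statement---it is quoted directly from \citet{birnbaum2011distributed}---and your argument is essentially the standard proof appearing in that reference, so there is nothing to contrast.
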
 %

\section{Uncapped Public Goods}

We begin by analyzing the uncapped setting, and begin by characterizing the Lindahl equilibrium prices, which will be helpful for understanding the convex programs we discuss. Let $(x, p)$ be a Lindahl equilibrium. Then $\langle v_i, x\rangle > 0$ for each $i \in N$ by utility maximization (since there exists some affordable bundle that gives $i$ positive utility, as prices are finite). In the uncapped setting, one can show that every Lindahl equilibrium is zero-respecting: if $v_{ij} = 0$ and $x_j > 0$, the utility maximizing bundle for agent $i$ contains a positive amount on project $j$, which is only possible if $p_{ij} = 0$ (if the price was positive, the agent could increase utility by redirecting the spending to a good with positive valuation). Now, each agent will only demand public goods that maximize the ``bang-per-buck'' ratio $v_{ij}/p_{ij}$ among projects with positive valuation. Thus, the quantity $v_{ij}/p_{ij}$ must be equal for all projects $j$ with $x_j > 0$ and $v_{ij} > 0$. Therefore, $p_{ij} \propto v_{ij}$, say with factor of proportionality $\alpha_i$. Because each agent spends their entire budget (\Cref{prop:cap-sufficient implications} (i), which applies since in the uncapped setting Lindahl equilibrium is always zero-respecting), we have $B_i = \sum_{j \in M_i} p_{ij} x_j = \sum_{j \in M_i} \alpha_i v_{ij} x_j = \alpha_i \langle v_i, x\rangle$ and thus $\alpha_i = B_i/\langle v_i, x\rangle$. Hence, in the uncapped setting,
\begin{equation}
	\label{eq:uncapped prices}
	p_{ij} = B_i \cdot \frac{v_{ij}}{\langle v_i, x \rangle} \quad \text{for all $j \in M$ with $x_j > 0$}.
\end{equation}
(Note that when $v_{ij} = 0$ and $x_j > 0$, \eqref{eq:uncapped prices} just says $p_{ij} = 0$ which follows because $(x,p)$ is zero-respecting, so \eqref{eq:uncapped prices} also holds for $v_{ij} = 0$.)

Now consider a project $j \in M$ with $x_j = 0$. Because $i$ does not demand it, its bang-per-buck must be weakly below the bang-per-buck of funded projects. From \eqref{eq:uncapped prices}, it then follows that
\begin{equation}
	\label{eq:uncapped prices for unfunded projects}
	p_{ij} \ge B_i \cdot \frac{v_{ij}}{\langle v_i, x \rangle} \quad \text{for all $j \in M$ with $x_j = 0$}.
\end{equation}

As we explained in \Cref{sec:lindahl definition}, any Lindahl equilibrium can be decomposed into individual contributions $b_{ij} = p_{ij} x_j$. From $\eqref{eq:uncapped prices}$, it follows that in Lindahl equilibrium, 
\begin{equation}
	\label{eq:bij condition}
	b_{ij} = B_i \cdot \frac{v_{ij} x_j}{\langle v_i, x \rangle} \quad \text{for all $j \in M$},
\end{equation}
or more simply that $b_{ij} \propto v_{ij} x_j$, so contributions are proportional to the utility $i$ obtains in $x$ from $j$. One can view \eqref{eq:bij condition} as a kind of fixed-point property implied by Lindahl equilibrium \citep{GuNe14a}, and it suggests the proportional response dynamics that we will study later.

\subsection{Nash Welfare and the Eisenberg--Gale Program}
In the uncapped setting (i.e. $\text{cap}_j = +\infty$ for all $j \in M$), Lindahl equilibrium allocations can be nicely characterized as those maximizing the Nash social welfare $\prod_i u_i(x)$ \citep{fain2016core}. Such an allocation can be computed by solving the following convex program:
\begin{equation}
\label[program]{eq:EG}
    \begin{aligned}
        \max_{x \geq 0}&\ \sum_{i\in N} B_i \log\: \langle v_i, x \rangle \\
        \text{s.t.}& \sum_{j\in M} x_j \leq B
    \end{aligned}
\end{equation}
This program is the public-goods analog of the Eisenberg--Gale convex program for computing a Fisher market equilibrium with private goods~\citep{eisenberg1959consensus,eisenberg1961aggregation}.
By analyzing the KKT conditions of \Cref{eq:EG}, one can show that it exactly computes Lindahl equilibrium. We include a proof for completeness.

\begin{theorem}[\citealp{fain2016core}, Corollary 2.3]
	\label{thm:nash-is-lindahl}
	In the uncapped setting, an allocation $x$ is a Lindahl equilibrium allocation if and only if it is an optimal solution to \Cref{eq:EG}.
\end{theorem}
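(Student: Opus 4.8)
The plan is to prove both directions through the KKT conditions of \Cref{eq:EG}, leaning on the price characterizations \eqref{eq:uncapped prices} and \eqref{eq:uncapped prices for unfunded projects} derived above. Writing the program as the convex minimization of $f(x) = -\sum_{i} B_i \log\langle v_i, x\rangle$ subject to the linear constraints $\sum_j x_j \le B$ and $-x_j \le 0$, \Cref{thm:subdiff-kkt} says that a feasible $x$ is optimal if and only if there exist multipliers $\mu \ge 0$ (for the budget) and $\lambda_j \ge 0$ (for $-x_j \le 0$) satisfying stationarity and complementary slackness. First I would observe that at any optimal solution each $\langle v_i, x\rangle > 0$, since otherwise the objective is $-\infty$, which cannot be optimal as long as every agent likes at least one good; hence $f$ is differentiable there and stationarity reads $\sum_i B_i v_{ij}/\langle v_i, x\rangle = \mu - \lambda_j$ for every $j$.

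The key computation is to pin down the budget multiplier as $\mu = 1$. I would multiply the stationarity equation by $x_j$, use complementary slackness $\lambda_j x_j = 0$, and sum over $j$. The left-hand side becomes $\mu \sum_j x_j = \mu B$, using that the budget is tight at any optimum (shifting unspent money onto a good liked by some agent strictly increases the objective). The right-hand side telescopes to $\sum_i B_i \langle v_i, x\rangle/\langle v_i, x\rangle = \sum_i B_i = B$. Therefore $\mu = 1$, and after substituting $\lambda_j = 0$ for funded goods, the KKT conditions state exactly that $\sum_i B_i v_{ij}/\langle v_i, x\rangle = 1$ whenever $x_j > 0$ and $\le 1$ whenever $x_j = 0$.

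For the forward direction (optimal $\Rightarrow$ Lindahl), given an optimal $x$ I would define the candidate prices $p_{ij} = B_i v_{ij}/\langle v_i, x\rangle$ as in \eqref{eq:uncapped prices} and verify the three conditions of \Cref{def:lindahl-equilibrium}. Affordability holds with equality, since $\langle p_i, x\rangle = B_i \langle v_i, x\rangle/\langle v_i, x\rangle = B_i$. Utility maximization is immediate because these prices are proportional to $i$'s valuations, $p_{ij} = \alpha_i v_{ij}$ with $\alpha_i = B_i/\langle v_i, x\rangle$, so every good $i$ values has the same bang-per-buck and any affordable $y$ satisfies $\langle v_i, y\rangle = \alpha_i^{-1}\langle p_i, y\rangle \le \alpha_i^{-1} B_i = \langle v_i, x\rangle$. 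Profit maximization is precisely the KKT statement obtained above with $\mu = 1$. For the reverse direction (Lindahl $\Rightarrow$ optimal), I would use that in the uncapped setting every Lindahl equilibrium is zero-respecting, so its prices satisfy \eqref{eq:uncapped prices} on funded goods and \eqref{eq:uncapped prices for unfunded projects} on unfunded goods; combined with profit maximization ($\sum_i p_{ij} = 1$ for funded, $\le 1$ for unfunded) and the full-spending guarantee of \Cref{prop:well-behaved implications}(ii) (the uncapped setting is cap-sufficient), the triple $(x, \mu, \lambda)$ with $\mu = 1$ and $\lambda_j = 1 - \sum_i B_i v_{ij}/\langle v_i, x\rangle \ge 0$ satisfies the KKT conditions, so the sufficiency direction of \Cref{thm:subdiff-kkt} yields optimality of $x$.

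The main obstacle I anticipate is the bookkeeping around zero valuations. I must confirm that the optimum keeps every $\langle v_i, x\rangle$ strictly positive so the gradient exists, argue the analogous positivity in a Lindahl equilibrium (each agent with $M_i \neq \emptyset$ can buy a valued good at finite price and so attains positive utility), and check that the coordinates with $v_{ij} = 0$ drop out harmlessly because the zero-respecting property forces $p_{ij} = 0$. I would also note Slater's condition for the necessity direction of the KKT theorem, which holds since any interior allocation spending slightly less than $B$ on a commonly liked good is strictly feasible.
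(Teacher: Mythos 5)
Your proposal is correct and follows essentially the same route as the paper's proof: a KKT analysis of \Cref{eq:EG} pinning the budget multiplier to $1$, defining prices $p_{ij} = B_i v_{ij}/\langle v_i, x\rangle$ and checking the three Lindahl conditions for the forward direction, and constructing KKT multipliers from \eqref{eq:uncapped prices}, \eqref{eq:uncapped prices for unfunded projects}, profit maximization, and \Cref{prop:well-behaved implications}(ii) for the converse. If anything, you are slightly more careful than the paper in spelling out why the budget constraint is tight at the optimum and in noting Slater's condition.
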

\begin{proof}
	Let $x$ be an optimal solution to \Cref{eq:EG}. Note first that optimality implies $\sum_j x_j = B$ since otherwise we could increase the objective by increasing $x_j$ for some project liked by some agent (except in the trivial case where $v_{ij} = 0$ for all $i$ and $j$ for which there is nothing to prove). Further, every agent has strictly positive utility at $x$ since otherwise the objective value would be $-\infty$. Thus, the objective function is differentiable at $x$. By \Cref{thm:subdiff-kkt}, there exists $\lambda \ge 0$ (corresponding to the budget constraint) and $(\mu_j)_{j \in M} \ge 0$ (corresponding to the non-negativity constraints) such that for every $j \in M$, we have $0 = -\sum_{i \in N} B_i  \frac{v_{ij}}{\langle v_i, x \rangle} + \lambda - \mu_j$. Thus $\lambda \ge  \sum_{i \in N} B_i  \frac{v_{ij}}{\langle v_i, x \rangle}$, with equality if $x_j > 0$. Multiplying by $x_j$, summing over $j$, and rearranging, we get $\sum_{j \in M} \lambda x_j = \sum_{j \in M}\sum_{i \in N} B_i  \frac{v_{ij}x_j}{\langle v_i, x \rangle}$. This simplifies to $\lambda B = \sum_{i \in N} B_i = B$, so $\lambda = 1$. Hence for each $j \in M$, we have
	\[
	\sum_{i \in N} B_i  \frac{v_{ij}}{\langle v_i, x \rangle} \le 1 \text{, with equality if $x_j > 0$.}
	\]
	Set $p_{ij} = B_i  \frac{v_{ij}}{\langle v_i, x \rangle}$. Then $(x,p)$ is a Lindahl equilibrium: The above inequality immediately establishes the profit maximization condition. For utility maximization, note that the ``bang-per-buck'' of project $j$ to agent $i$ is $v_{ij} / p_{ij} = \langle v_i, x \rangle / B_i$, which is constant, so that all allocations that use up the agent's entire budget are utility maximizing. Since $\langle p_i, x \rangle = B_i$, it follows that $x$ is utility maximizing for $i$.
	
	Conversely, suppose $(x, p)$ is a Lindahl equilibrium. Set $\lambda = 1$ and $\mu_j = 1 - \sum_{i\in N} B_i  \frac{v_{ij}}{\langle v_i, x \rangle}$ for each $j \in M$. For complementary slackness, note that if $x_j > 0$, then from \eqref{eq:uncapped prices} we have $\mu_j = 1 - \sum_{i \in N_j} p_{ij}$, and thus by the profit maximization condition of Lindahl equilibrium, we get $\mu_j = 0$. Complementary slackness also holds for $\lambda$ since $\sum_{j \in M} x_j = B$ by \Cref{prop:cap-sufficient implications}(ii). We also have $\mu_j \ge 0$, combining \eqref{eq:uncapped prices}, \eqref{eq:uncapped prices for unfunded projects} and the profit maximization condition of Lindahl equilibrium. Finally, it is easy to check stationarity; for every $j \in M$ we have
	\[
	-\sum_{i \in N} B_i  \frac{v_{ij}}{\langle v_i, x \rangle} + \lambda - \mu_j = -\sum_{i \in N} B_i  \frac{v_{ij}}{\langle v_i, x \rangle} + \sum_{i\in N} B_i  \frac{v_{ij}}{\langle v_i, x \rangle} = 0.
	\]
	Thus, by \Cref{thm:subdiff-kkt}, $x$ is an optimal solution to \Cref{eq:EG}.
\end{proof}
\citet[Theorem 2.2]{fain2016core} also present Eisenberg--Gale-style programs for computing Lindahl equilibria for certain non-linear utility functions called ``scalar separable non-satiating'' including CES and Cobb-Douglas utilities.

Interestingly, for Fisher market equilibrium, the Eisenberg--Gale program always admits a rational solution \citep{devanur2008market,vazirani2012rationalconvexprogram}. However, this is not the case in our public goods setting,%
\footnote{For Fisher markets, the proof sets up a system of linear inequalities whose variables correspond to the reciprocals of equilibrium prices, $1/p_j$. However, profit maximization in Lindahl equilibrium (which has no analog in Fisher markets) involves the sum $\sum_i p_{ij}$ which is not linear in the reciprocals of prices. Thus, the Fisher market argument does not generalize.}
as the following example shows \citep[see also][Theorem 5]{airiau2023portioning}.

 \begin{example}[Irrational Lindahl equilibrium allocation]
 	\label{ex:irrational}
	Consider the uncapped setting with 4 agents with equal budgets $B_i = \frac14$ and with three projects. The agents have the following valuations:
    \begin{center}
        {\upshape
            \begin{tabular}{lcccc}
                \toprule
                & $B_i$ & Project 1 & Project 2 & Project 3 \\
                \midrule
                Agent 1 & $0.25$ & $1$ & $0$ & $0$ \\
                Agent 2 & $0.25$ & $1$ & $0$ & $1$ \\
                Agent 3 & $0.25$ & $1$ & $1$ & $0$ \\
                Agent 4 & $0.25$ & $0$ & $1$ & $1$ \\
                \midrule
                $\text{cap}_j$ & & $\infty$ & $\infty$ & $\infty$ \\
                \bottomrule
        \end{tabular}}
    \end{center}
	By \Cref{thm:nash-is-lindahl}, a Lindahl equilibrium allocation $x$ forms an optimal solution to \Cref{eq:EG}. Since projects $2$ and $3$ are symmetric and the objective function of \Cref{eq:EG} is strictly convex, we have $x_2 = x_3$. Since $B = 1$, we deduce that $x_1 = 1 - 2x_2$. Thus, the objective function of \Cref{eq:EG} can be written as $\frac{1}{4}(\log(1 - 2x_2) + 2\log(1-x_2) + \log(2x_2))$. Exponentiating, this is equivalent to maximizing $(1 - 2x_2)(1-x_2)^2(2x_2)$. Setting its derivative to $0$, we find that it \href{https://www.wolframalpha.com/input?i=maximize+\%281-2x\%29+*+\%281-x\%29\%5E2+*+2x+subject+to+0+\%3C\%3D+x+\%3C\%3D+1}{has its unique maximum} at $x_2 = \frac{1}{16}(7 - \sqrt{17}) \approx 0.1798$. Thus, $x$ is irrational and the unique Lindahl equilibrium allocation.
\end{example}

\subsection{A New Convex Program}

We will present a new convex program which also captures the Lindahl equilibrium concept in the uncapped setting. As we will see, this convex program will yield several useful results. First, we will use it to show that the proportional response dynamics for uncapped public goods can indeed be interpreted as mirror descent with the entropy distance, just as in the Fisher market setting. Secondly, extending this convex program will allow us to give the first computational results for the capped public goods setting.
Our new convex program is in the spirit  of the Shmyrev convex program for Fisher markets for private goods~\citep{shmyrev1983approach,shmyrev2009algorithm}, though there are important differences.
The convex program is as follows:
\begin{equation}
    \begin{aligned}
        \max_{b\ge 0, x \ge 0} \quad & \mathrlap{\displaystyle
        	\:\: \sum_{\mathclap{i\in N, j\in M_i}} \: b_{ij} \log v_{ij} 
        	\: - \: \sum_{\mathclap{i\in N, j\in M_i}}\: b_{ij} \log\left(b_{ij} / x_j\right)
        } \\
        \text{s.t.}\quad
        & \sum_{j\in M_i} b_{ij} = B_i, \qquad &&\forall i\in N \\
        & \sum_{i\in N_j} b_{ij} = x_j, &&\forall j\in M
        &\hspace{0.5cm}
    \end{aligned}
    \label[program]{eq:shmyrev cp}
\end{equation}
 The convexity of this program can be deduced, for example, from the \emph{log sum inequality}.

The program has two sets of variables, though one is implied by the other. The $x_j$ variable has the same interpretation as in \Cref{eq:EG}: it is the amount of budget allocated to project $j$.
The $b_{ij}$ variables can be interpreted as the share of agent $i$'s budget $B_i$ that they allocate towards project $j$. 
Note that each $x_j$ variable is directly implied by the choice of the $b_{ij}$ variables across agents $i$. It is only there as a convenience variable, and we could replace each occurrence of it in \Cref{eq:shmyrev cp} by $\sum_{i\in N_j} b_{ij}$. Indeed, in our proofs, we will mostly work directly with this formulation that optimizes only over the $b_{ij}$ variables.

To gain some intuition for \Cref{eq:shmyrev cp}, suppose that we already knew the optimum value of the $x_j$ variables, and thus can treat them as constants and use \Cref{eq:shmyrev cp} to merely compute the values of the $b_{ij}$ variables. If we drop the bottom constraint, the program decomposes into a separate optimization for each $i \in N$. From \Cref{lem:simple simplex optimizations}(a), these optimum values satisfy $b_{ij}^* \propto v_{ij} x_j$, which exactly matches the condition \eqref{eq:bij condition} that we derived earlier from the definition of Lindahl equilibrium.

While our program has some similarity to the Shmyrev program for private goods~\citep{shmyrev2009algorithm,birnbaum2011distributed}, it has the following important differences.
First, the Shmyrev program contains variables corresponding to prices, which do not appear in our program.
Second, the original primal variables $x_j$ appear directly in our program, whereas in Shmyrev's program these are a non-linear function of the corresponding $b$ variables.
Third, we have a somewhat unusual term that looks like a partially-normalized entropy in our objective, whereas Shmyrev's program only requires using a typical negative entropy term over prices.

\subsection{Connecting the Nash Social Welfare Program to Our New Program via Duality}
\cref{eq:EG} and \cref{eq:shmyrev cp} can be related to each other through ``double duality''. We need the following lemma, which derives the convex conjugate of a convex function that will appear in the dual of \Cref{eq:EG}. The proof is in \Cref{app:proofs}.
\begin{restatable}{lemma}{conjugatemaxexp}
    Consider some $q \in \mathbb R^{n\times m}$ and let $q_j$ be the $j$-th column of $q$.
    For $j \in M$, let $g_j(q_j) = \sum_{i\in N} e^{q_{ij}}$ and let $g(q) = B \cdot \max_{j\in M} g_j(q_j)$.
    Then the convex conjugate of $g$ satisfies
    \[
            g^*(\beta) = \sum_{ij} \beta_{ij} \log \tfrac{\beta_{ij}}{\tilde x_j(\beta)} - \beta_{ij} \quad \text{for all $\beta \ge 0$,}
    \]
    where $\tilde x_j(\beta) = B \cdot (\sum_{i} \beta_{ij}) / (\sum_{ij'} \beta_{ij'})$.
    \label{lem:conjugate max exp}
\end{restatable}

Now we can state the result formalizing the relationship between the two programs: their dual programs are equivalent.
This result is an analog of a result for private goods, where the Shmyrev and the Eisenberg--Gale program also share a dual after reformulation \citep{cole2017convex}.

\begin{theorem}
	[Double Duality]
	\label{thm:dual of dual}
    \cref{eq:shmyrev cp} is the dual of the dual of \cref{eq:EG}, after reformulation.
\end{theorem}
\begin{proof}
	For convenience, we rewrite \cref{eq:EG} by adding variables encoding agent utilities:
	\begin{equation}
		\label[program]{eq:EG with u variables}
		\begin{aligned}
			\min_{x \geq 0, u}&\ -\sum_{i\in N} B_i \log u_i \\
			\text{s.t. }& u_i \leq \langle v_i, x \rangle, \forall i \in N \\
			& \sum_{j\in M} x_j \leq B
		\end{aligned}
	\end{equation}
    Using the Fenchel duality in \cref{thm:fenchel dual}, we can compute the dual of \cref{eq:EG with u variables} by noting that the convex conjugate of $f(u_i) = -B_i \log u_i$ is $f^*(\mu_i) = B_i \log B_i - B_i - B_i \log(-\mu_i)$, and introducing dual variables $\beta_i$ for the constraints involving $u_i$, dual variable $\lambda$ for the budget constraint, and dual variables $\eta_{j}$ for the non-negativity constraints on the $x_j$.
    \begin{equation}
    	\begin{aligned}
    		\max_{\lambda\geq 0, \beta\geq 0, \eta \ge 0, \mu}&\ -B\lambda - \sum_{i\in N} (B_i \log B_i - B_i - B_i \log(-\mu_i)) \\
    		\text{s.t. } & -\sum_{i\in N} \beta_i v_{ij} + \lambda - \eta_j = 0, \forall j\in M \\
    		& - \beta_i = \mu_i, \forall i \in N
    	\end{aligned}
    	\label[program]{eq:raw dual EG}
    \end{equation}
    We can clean up this program by eliminating the $\mu_i$ variables (replacing them by $-\beta_i$), eliminating the $\eta_j$ variables (thereby turning equality into inequality constraints), and dropping constant terms from the objective. Thereby we obtain the following program as the ``official'' dual of \cref{eq:EG}.
    \begin{equation}
        \begin{aligned}
            \min_{\lambda\geq 0, \beta\geq 0}&\ B\lambda - \sum_{i\in N} B_i \log \beta_i \\
            \text{s.t. } & \sum_{i\in N} \beta_i v_{ij} \leq \lambda, \forall j\in M
        \end{aligned}
        \label[program]{eq:dual EG}
    \end{equation}
    Now we rewrite \cref{eq:dual EG} by introducing a redundant set of variables $\lambda_{ij}$, which will represent the ``bid'' $\beta_i v_{ij}$ that agent $i$ makes on project $j$. This gives the following program:
    \begin{equation}
        \begin{aligned}
            \min_{\lambda \geq 0, \beta \geq 0}&\ B\lambda - \sum_{i\in N} B_i \log \beta_i \\
            \text{s.t. } & \sum_{i\in N} \lambda_{ij} \leq \lambda, \forall j\in M \\
            & \beta_i v_{ij} \leq \lambda_{ij}, \forall i\in N, j\in M
        \end{aligned}
        \label[program]{eq:dual EG redundant}
    \end{equation}
    (One can show that in optimal solutions to this program, we have $\lambda = 1$, and $\lambda_{ij}$ corresponds to the Lindahl price of $j$ for $i$, and $\beta_i$ corresponds to the maximum bang-per-buck ratio facing agent $i$. The first constraint corresponds to profit maximization of Lindahl equilibrium.)
    
    Having removing the summation from the bottom constraint in \cref{eq:dual EG}, we can now apply the logarithm to the bottom constraint of program \eqref{eq:dual EG redundant} and get a separation into individual terms. 
    We note that in optimum, the value of $\lambda$ is the maximum over $j \in M$ of the value of $\sum_{i\in N} \lambda_{ij}$, so we can replace $\lambda$ in the objective function by a maximum.
    In addition, we perform two changes of variable: we write $\gamma_i = -\log\beta_i$ and $q_{ij} = \log \lambda_{ij}$. Thereby we arrive at the following program:
    \begin{equation}
        \begin{aligned}
            \min_{q, \gamma}&\ B \cdot \left( {\textstyle \max_{j \in M} \sum_{i\in N} e^{q_{ij}}} \right) + \sum_{i\in N} B_i \gamma_i \\
            \text{s.t. } & \log v_{ij} \leq q_{ij} + \gamma_i, \forall i\in N, j\in M
        \end{aligned}
        \label[program]{eq:dual EG max rewrite}
    \end{equation}
    
    Next we derive the dual of \eqref{eq:dual EG max rewrite}, again using the Fenchel dual from \cref{thm:fenchel dual}. 
    For $j \in M$, let $g_j(q) = \sum_{i\in N} e^{q_{ij}}$ and let $g(q) = B \cdot \max_{j\in M} g_j(q)$. Write $b_{ij}$ for the dual variables for the constraints in \cref{eq:dual EG max rewrite}. Then the dual of \eqref{eq:dual EG max rewrite} is
    \begin{equation}
        \begin{aligned}
            \max_{b \ge 0} &\ \sum_{ij} b_{ij} \log v_{ij} - g^*(b) \\
            \text{s.t. } & \sum_{j \in M} b_{ij} = B_i,\ \forall i\in N
        \end{aligned}
        \label[program]{eq:dual with conjugate}
    \end{equation}
    By \cref{lem:conjugate max exp}, the term $g^*(b)$ equals 
   $\sum_{ij} b_{ij} \log \tfrac{b_{ij}}{\tilde x_j(\beta)} - b_{ij}$,
    where $\tilde x_j(b) = B \cdot \sum_{i} b_{ij} / \sum_{ij'} b_{ij'}$. In \cref{eq:dual with conjugate}, the denominator $\sum_{ij'} b_{ij'}$ is constrained to equal $B$, so we can simplify the expression to $\tilde x_j(b) = \sum_{i} b_{ij} = x_j(b)$.
    This yields the desired \cref{eq:shmyrev cp}.
\end{proof}

Since the EG program and \cref{eq:shmyrev cp} are connected via the same dual, we know that the solution to \cref{eq:shmyrev cp} must imply a solution to the primal EG (through computing the implied dual solutions via KKT conditions, which can easily be done) and thus a Lindahl equilibrium. One can also show directly that \cref{eq:shmyrev cp} yields Lindahl equilibria. We defer this proof to the section on the capped setting, where we show it for that more general case  (\Cref{thm:lindahl shmyrev capped}).

\subsection{A Possible Path to T\^atonnement for Public Goods}
Finally, we briefly remark that the dual program \eqref{eq:dual EG redundant} can be rewritten in a way that eliminates the variables $\lambda$ and $\beta_i$ and thereby turns it into an unconstrained minimization problem. This yields the following program:
\begin{equation}
    \min_{\{\lambda_{ij}\}_{ij}} \ B \cdot \left( {\textstyle \max_{j \in M} \sum_{i\in N} \lambda_{ij}} \right) - \sum_{i\in N} B_i \min_{j \in M} \log \tfrac{\lambda_{ij}}{v_{ij}}
    \label[program]{eq:dual EG variables eliminated}
\end{equation}
One interesting property of this program is that it has a t\^atonnement-like interpretation. The $\lambda_{ij}$ variables can be viewed as personalized prices offered to each agent $i$ for project $j$. 
In this interpretation, each agent chooses their favorite projects among those minimizing $\lambda_{ij} / v_{ij}$, i.e. ones that maximize their bang-per-buck, and spends their entire budget $B_i$ on such projects.
Formally, let $y_i \in \Delta^m$ be such that $y_{ij} > 1$ only when project $j$ minimizes $\lambda_{ij} / v_{ij}$. Then $B_i\cdot y_{ij}$ specifies how much of their budget agent $i$ allocates to an optimal bang-per-buck project $j$.
Similarly, let $x\in B\cdot \Delta^m$ be such that $x_j > 0$ only if $\sum_{i\in N} \lambda_{ij} = \max_{j\in M} \sum_{i\in N} \lambda_{ij}$. Then $x$ specifies a budget allocation proposed by the price-setter.
Then we have that a subgradient is any $g\in \R^{n\times m}$ such that
\[
    g_{ij} = x_j -  y_{ij} \frac{B_i}{\lambda_{ij}},
\]
for any pair $(x,y)$ satisfying the above conditions.
This subgradient can be interpreted as a measure of discrepancy. The price-setter is proposing a set of per-agent prices $\lambda$ and a corresponding budget allocation $x$. In turn, agent $i$ computes their preferred allocation under $\lambda_i$, where $y_{ij}\frac{B_i}{\lambda_{ij}}$ is the amount they would have to spend to obtain $B_i\cdot y_{ij}$ units of project $j$ at price $\lambda_{ij}$. The subgradient $g_{ij}$ is then the discrepancy between the price-setter's proposal and the agent's preferred allocation. It is positive (and thus suggests an increase in price) if the agent spends less than the proposed allocation on the project; it is negative (and thus suggests a decrease in price) if the agent spends more.
The subgradient is zero exactly when the price-setter's proposed allocation is optimal for each agent, meaning the proposed prices support the allocation.
Deriving some form of convergence results for this program would be an interesting direction for future work.

\subsection{Proportional Response Dynamics as Mirror Descent}

It is known that the Lindahl equilibrium for the uncapped public goods setting can be computed by a simple dynamics \citep{brandl2022fundingpublicprojects}
which we call the \emph{proportional response} dynamics in analogy to a similar dynamics for private-good Fisher markets \citep{wu2007proportional,zhang2011proportional}.
At each iteration $t$, the proportional response dynamics have some current budget allocation $x^t = (x_1^t,\ldots,x_m^t)$ summing to $B$ and with $x_j^t > 0$ for all $j \in M$. Let $u_i^t = \langle v_i, x^t \rangle$ be the current utility of agent $i$ under this allocation. Note that $u_i^t > 0$ because $x_j^t > 0$ for all $j$ and we assume that $M_i \neq \emptyset$.
Then the next budget allocation in the dynamics is
\[
x_j^{t+1} = \sum_{i\in N} B_i \frac{x_j^t v_{ij}}{u_i^t}.
\]
This dynamics can be interpreted as each agent $i$ independently deciding how they wish to allocate their share of the budget $B_i$ in the next round. 
Specifically, agent $i$ allocates spending proportional to how much utility each project provided them at round $t$.
This spending allocation matches the property in \eqref{eq:bij condition} we derived earlier from the definition of Lindahl equilibrium.
We will show that the proportional response dynamics is the mirror descent algorithm applied to our new convex program.

In order to derive this relationship, we first reformulate \cref{eq:shmyrev cp} to an equivalent version: we eliminate the redundant $x_j$ variables, convert the problem into a minimization problem, and define the shorthand function $x_j(b) = \sum_{i\in N_j} b_{ij}$. Then we get the following convex program:
\begin{equation}
    \begin{aligned}
        \min_{b \ge 0}\quad &f(b) \defeq - \sum_{\mathclap{i\in N, j\in M_i}} \: b_{ij} \left( \log v_{ij} - \log \left(b_{ij} / x_j(b)\right) \right) \\[3pt]
        \text{s.t.}\quad & \sum_{j \in M_i} b_{ij} = B_i,\ \forall i\in N
    \end{aligned}
    \label[program]{eq:shmyrev cp min}
\end{equation}

\begin{theorem}
	[Equivalence of mirror descent and PR dynamics]
    Assume that the mirror descent algorithm on \cref{eq:shmyrev cp min} is initialized at a point $b^0 = (b^0_{ij})_{i \in N, j \in M_i}$ such that $\sum_{j \in M_i} b^0_{ij} = B_i$ and $b^0_{ij} > 0$ for all $i \in N$ and $j\in M_i$, and that the proportional response dynamics is initialized at the corresponding allocation $x^0 = (x_j(b^0))_{j \in M}$.
    Then the proportional response dynamics is equivalent to applying the mirror descent algorithm with the negative entropy reference function and stepsize $\eta=1$ to \cref{eq:shmyrev cp min}.
\end{theorem}
\begin{proof}
    Note first that $f$ is differentiable at any $b$ such that $b_{ij} > 0$ for all $i \in N$ and $j \in M_i$. This holds by assumption for $b^0$, and we will see that if it holds for the initial point then it holds throughout.
    We can rewrite the objective function $f$ as follows:
    \begin{align}
    	\notag
    	f(b) &= - \sum_{\mathclap{i\in N, j\in M_i}} \: b_{ij} \log v_{ij} + \sum_{\mathclap{i\in N, j\in M_i}} \: b_{ij} \log b_{ij} - \sum_{j\in M} x_j(b) \log x_j(b).
	    \intertext{The derivative of the objective in \cref{eq:shmyrev cp min} with respect to $b_{ij}$ is}
    	\label{eq:derivative of objective function}
    	\nabla_{ij} f(b) &= - \log v_{ij} + (\log b_{ij} + 1) - (\log x_j(b) + 1) = -\log v_{ij} + \log(b_{ij} / x_j(b)).
    \end{align}
        
    Let $x_j^t = \sum_{i\in N_j} b_{ij}^t$.
    We now apply the mirror descent update rule (given by \cref{eq:mirror descent}) using the negative entropy reference function $h$ and stepsize $\eta=1$. The induced Bregman divergence $D_h(b_i \| b_i^t)$ is the KL divergence (since $b_i$ and $b_i^t$ both sum to $B_i$). Thus, we get the following update for each $i \in N$, where the arg min is taken over all $b_i = (b_{ij})_{j \in M_i}$ such that $\sum_{j \in M_i} b_{ij} = B_i$:
    \begin{align*}
        b_{i}^{t+1} &= \arg\min_{b_i} \sum_{j\in M_i} b_{ij} \left(-\log v_{ij} + \log(b_{ij}^t / x_j^t)\right) + \sum_{j\in M_i} b_{ij} \log(b_{ij} / b_{ij}^t) \\
        &= \arg\min_{b_i} \sum_{j\in M_i} b_{ij} \left( -\log v_{ij} + \log(b_{ij} / x_j^t) \right)
    \end{align*}
		By \Cref{lem:simple simplex optimizations}(a), for each $i \in N$, the solution to this optimization problem satisfies
	    \[
	        b_{ij}^{t+1} \propto \exp\left(\log (v_{ij}x_j^t) \right) \propto v_{ij}x_j^t.
	    \]
	    In order to get a feasible solution we must normalize the above such that $\sum_{j\in M_i} b_{ij}^{t+1} = B_i$. 
	    Let $u_i^t = \sum_{j\in M_i} v_{ij} x_j^t$, noting that this is positive under our running assumption that $M_i \neq \emptyset$.
	    Applying normalization, we get
	    \begin{align*}
	        b_{ij}^{t+1} = \frac{B_i}{u_i^t} v_{ij}x_j^t,%
	    \end{align*}
	    Now if we sum over $i$ we get the proportional response dynamics. Moreover, we see that  $b_{ij}^{t+1} > 0$ for all $i \in N$ and $j \in M_i$ since $v_{ij} > 0$ and $x_j^t > 0$.
\end{proof}

\subsection{Convergence Rates}

We have shown that the proportional response dynamics is equivalent to mirror descent with unit stepsize. Next we wish to apply the convergence-rate result from \cref{thm:md rel lip}. Thus, we need to show that the objective in \cref{eq:shmyrev cp min} is 1-smooth relative to the entropy reference function. 

\begin{lemma}
	[1-smoothness]
	\label{lem:relative lipschitz}
    The objective function $f$ from \cref{eq:shmyrev cp min} is 1-smooth relative to the reference function $h(b) = \sum_{i\in N, j\in M_i} b_{ij}\log b_{ij}$, that is, for all $a = (a_{ij})_{i \in N, j \in M_i}$ and $b= (b_{ij})_{i \in N, j \in M_i}$ such that $a_{ij}, b_{ij} > 0$ for all $i \in N$ and $j \in M_i$ and such that $\sum_{j \in M_i} a_{ij} = \sum_{j \in M_i} b_{ij} = B_i$ for all $i \in N$, we have
    \[
    f(a) \leq f(b) + \langle \nabla f(b), a-b \rangle + D_h(a\| b)
    \]
\end{lemma}
\begin{proof}
    Note that $f(b)$ and $h(b)$ are differentiable when $b_{ij} > 0$ for all $i \in N$ and $j \in M_i$.
    For $a$ and $b$ as in the lemma statement, using \eqref{eq:derivative of objective function}, we have 
    \begin{align*}
        &f(a) - f(b) - \langle \nabla f(b), a-b \rangle \\
        =\: & \!-\!\sum_{\mathclap{i\in N, j\in M_i}} \: a_{ij} (\log v_{ij} - \log \tfrac{a_{ij}}{x_j(a)} ) 
        + \sum_{\mathclap{i\in N, j\in M_i}} \: b_{ij} (\log v_{ij} - \log \tfrac{b_{ij}}{x_j(b)} ) 
        - \! \sum_{\mathclap{i\in N, j\in M_i}} \: (a_{ij} - b_{ij})(\log\tfrac{b_{ij}}{x_j(b)} - \log v_{ij}) \\
        =\: & \sum_{\mathclap{i\in N, j\in M_i}} \: a_{ij} \log(a_{ij}/x_j(a)) - \sum_{\mathclap{i\in N, j\in M_i}} \: b_{ij} \log(b_{ij}/x_j(b))
            - \sum_{\mathclap{i\in N, j\in M_i}} \: (a_{ij} - b_{ij}) \log(b_{ij}/x_j(b)) \\
        =\: & D_g(a\| b),
    \end{align*}
    where $D_g$ is the Bregman divergence for the entropy-like function $g(b)=\sum_{i\in N, j\in M_i} b_{ij}\log (b_{ij} / x_j(b))$. %
    It remains to show that $D_g(a\|b) \leq D_h(a\|b)$:
    \begin{align*}
        &D_h(a\| b) - D_g(a\|b) \\
        =\: & \sum_{\mathclap{i\in N, j\in M_i}} \: a_{ij}\log a_{ij} - \sum_{\mathclap{i\in N, j\in M_i}} \: b_{ij}\log b_{ij} - \sum_{\mathclap{i\in N, j\in M_i}} \: (a_{ij} - b_{ij}) \log b_{ij}\\
        & - \sum_{\mathclap{i\in N, j\in M_i}} \: a_{ij}\log (a_{ij} / x_j(a)) + \sum_{\mathclap{i\in N, j\in M_i}} \: b_{ij}\log( b_{ij} / x_j(b)) + \sum_{\mathclap{i\in N, j\in M_i}} \: (a_{ij} - b_{ij}) \log(b_{ij} / x_j(b)) \\
        =\: & \sum_{\mathclap{i\in N, j\in M_i}} \: a_{ij}\log x_j(a) - \sum_{\mathclap{i\in N, j\in M_i}} \: b_{ij}\log x_j(b) - \sum_{\mathclap{i\in N, j\in M_i}} \: (a_{ij} - b_{ij}) \log x_j(b) \\
        =\: & \sum_{\mathclap{i\in N, j\in M_i}} \: a_{ij}\log(x_j(a) / x_j(b)) = \sum_{j \in M} x_j(a)\log(x_j(a) / x_j(b))
        \geq 0
    \end{align*}
    The last step follows by noting that the second-to-last expression is the KL divergence between $x(a)$ and $x(b)$, which is always nonnegative.
\end{proof}

Now we can combine \cref{lem:relative lipschitz} with \cref{thm:md rel lip} to get a $D_h(b^*\| b^0)/t$ rate of convergence for the proportional response dynamics.
If we start the dynamics at the uniform allocation $b_{ij}^0 = B_i / |M_i|$, we can upper bound the Bregman divergence $D_h(b^*\| b^0)$ as follows:
\begin{align*}
    D_h(b^*\| b^0) 
    &= h(b^*) - h(b^0) \\
    &= \sum_{i\in N} \sum_{j\in M_i} b^*_{ij}\log\frac{b^*_{ij}}{B_i/|M_i|} \tag{definitions of $h$ and $b^0$} \\
    &= \sum_{i\in N} B_i \sum_{j\in M_i} q_{ij}\log\frac{q_{ij}}{1/|M_i|} \tag{writing $q_{ij} := b^*_{ij}/B_i$} \\
    &= \sum_{i\in N} B_i \log |M_i| + \sum_{i\in N} B_i \sum_{j\in M_i} q_{ij}\log q_{ij} \tag{using $\sum_{j \in M_i} q_{ij} = 1$} \\
    &\le B \log m. \tag{using $|M_i| \le m$ and $\sum_{j\in M_i} q_{ij}\log q_{ij} \le 0$}
\end{align*}
Combining this with \cref{thm:md rel lip} and \cref{lem:relative lipschitz}, we obtain the desired rate of convergence for the proportional response dynamics.
\begin{corollary}
	[Convergence rate of the PR dynamics]
	\label{cor:convergence-rate-shmyrev}
	Let $b^*$ be an optimal solution to \Cref{eq:shmyrev cp min}. Then the sequence of iterates of the proportional response dynamics satisfies
	\[
	f(b^t) - f(b^*)
	\le
	\frac{B \log(m)}{t}.
	\]
\end{corollary}
For the particular case of $B = 1$, we therefore see that proportional response dynamics converges at a rate of $\log(m)/t$.

\subsubsection{Convergence of Nash Social Welfare}
The convergence rate of \Cref{cor:convergence-rate-shmyrev} is stated in terms of the objective function $f$ of our new convex program (\Cref{eq:shmyrev cp min}). We can deduce that the same convergence rate applies with respect to the Nash welfare objective. For an allocation $x$, let us write $\varphi(x) = -\sum_{i \in N} B_i \log \langle v_i, x\rangle$ for its objective value in \Cref{eq:EG} (interpreted as a minimization problem). Then we can establish the following connection between the objective value of a solution in the two programs, which we can think of as a consequence of the double duality between the two programs, though we give a direct proof in \Cref{app:approximations-uncapped}.
\begin{restatable}
	{lemma}
	{nashboundedbyshmyrev}
	\label{lem:nash-bounded-by-shmyrev}
	Let $b$ be a feasible solution to \Cref{eq:shmyrev cp min}. Then
	\[
	f(b) \ge \varphi(x(b)) + \sum_{i \in N} B_i \log B_i,
	\]
	which holds with equality if $b$ is an optimal solution to \Cref{eq:shmyrev cp min}.
\end{restatable}

\Cref{lem:nash-bounded-by-shmyrev} implies that $\varphi(x(b^t)) - \varphi(x(b^*)) \le f(b^t) - f(b^*)$.
Plugging this into \Cref{cor:convergence-rate-shmyrev}, we obtain the following convergence bound for the (log) Nash product objective.

\begin{corollary}
	[Convergence of Nash product objective]
	\label{cor:convergence-rate-nash}
	Let $b^*$ be an optimal solution to \Cref{eq:shmyrev cp min}. Then the sequence of iterates of the proportional response dynamics satisfies
	\[
	\varphi(x(b^t)) - \varphi(x(b^*))
	\le
	\frac{B \log(m)}{t}
	\]
\end{corollary}

The convergence rate of \Cref{cor:convergence-rate-nash} was recently independently obtained by \citet{zhao2023convergence}, after it had been an open question for almost forty years after the discovery of the dynamics by \citet{cover1984algorithm}.  \citet{zhao2023convergence} derived this rate directly. Our result gives a deeper explanation of the performance of the proportional response dynamics: it is equivalent to mirror descent with the entropy reference function applied to \cref{eq:shmyrev cp min}.

\subsubsection*{Convergence of Proportional Fairness and Core}
\label{sec:convergence-pf-core}

Finally, we will give a convergence bound in terms of fairness properties of the allocation $x^t$. We will do this by showing that an allocation that has approximately optimal Nash social welfare must also approximate these fairness properties.

The first such property is known as \emph{proportional fairness} (PF). 
This property was first introduced in the context of fairly allocating bandwidth in  networks \citep{kelly1998rate}, and has also been studied in the allocation of indivisible private goods \citep[App.~D]{caragiannis2019unreasonable} as well as for public goods \citep{banerjee2023proportionally} and in the distortion of voting rules \citep{ebadian2024optimized}.

An allocation $x$ is proportionally fair if
\[
	\sum_{i \in N} B_i \frac{u_i(y)}{u_i(x)} \le B \text{ for all allocations $y$}
	\iff
	\sum_{i \in N} B_i \frac{v_{ij}}{u_i(x)} \le 1 \text{ for all $j \in M$.}
\]
That these two conditions are equivalent follows from linearity of $u_i(y)$ and the definition of allocations requiring that $\sum_{j \in M} y_j \le B$.
Intuitively, the PF condition says that the multiplicative utility gain of moving from $x$ to another allocation cannot be too high. Note for example that if some agent has utility 0 in allocation $x$, then the left-hand side is infinite (by considering some $j$ with $v_{ij} > 0$), and thus $x$ is not proportionally fair. 

It is known that Lindahl equilibrium allocations are proportionally fair, and indeed our proof of \Cref{thm:nash-is-lindahl}, which shows that allocations that maximize the Nash social welfare are Lindahl equilibrium allocations, establishes proportional fairness as an intermediate step. 

Let us define the \emph{PF value} of an allocation $x$ as
\[
\textup{PF}(x) = \max_{j \in M} \sum_{i \in N} B_i \frac{v_{ij}}{u_i(x)},
\]
so that an allocation is proportionally fair if and only if $\textup{PF}(x) \le 1$.

We will show that allocations that approximately maximize the Nash social welfare satisfy an approximate version of the PF condition, in the sense of having a PF value not much higher than 1.
Recall that we write $\varphi(x) = -\sum_{i \in N} B_i \log \langle v_i, x\rangle$ for the (negative) Nash social welfare of an allocation $x$. Let $\varphi^*$ denote Nash social welfare of the optimum allocation (i.e., the minimum value of $\varphi$).
Write $B_{\min} = \min_{i \in N} B_i$.

\begin{restatable}
	[Approximate Nash gives approximate PF]
	{theorem}
	{approxnashpf}
	\label{thm:approx-nash-pf}
	Let $0 \le \epsilon < \frac12$ such that $\epsilon < \frac{B_{\min}}{2}$, and consider an allocation $x$ with $\varphi(x) \le \varphi^* + \epsilon$, i.e., that is $\epsilon$-optimal w.r.t. Nash social welfare. Then
	\[
		\textup{PF}(x) \le \frac{1 + 2n\sqrt{\epsilon / B} + n\epsilon/B}{1-\sqrt{2\epsilon / B_{\min}}}.
	\]
\end{restatable}

The proof uses $\epsilon$-KKT conditions and appears in \Cref{app:approximations-uncapped}.

For the special case where $B =1$ and $B_i = 1/n$ for all $i \in N$, the bound of \Cref{thm:approx-nash-pf} reads as
\begin{align*}
	\textup{PF}(x) = \max_{j \in M} \sum_{i \in N} \frac1n \frac{v_{ij}}{u_i(x)}
	\le 
	\frac{1 + 2n \sqrt{\epsilon} + n\epsilon}{1-\sqrt{2n\epsilon}} 
	&= 1 + \frac{1 + 2n \sqrt{\epsilon} + n\epsilon - (1-\sqrt{2n\epsilon})}{1-\sqrt{2n\epsilon}}.
\end{align*}
For $\epsilon < \frac{1}{8n}$, one can show that the right-hand side is upper bounded by $1 + 8n\sqrt{\epsilon}$ and thus
\[
\textup{PF}(x)
\le 
1 + 8n\sqrt{\epsilon}.
\]

It is known that an approximately proportionally fair allocation lies in a (multiplicatively) approximate core \citep[see, e.g.,][Prop.~2.6]{ebadian2024optimized}. We include a proof for completeness.

\begin{lemma}[Approximate PF implies approximate core]
	\label{lem:pf-implies-core}
	Let $x$ be an allocation with $\textup{PF}(x) \le \alpha$. Then $x$ lies in the $\alpha$-approximate core, in the sense that there is no blocking coalition $S$ with objection $z$ such that $\sum_{j \in M} z_j \le \sum_{i \in S} B_i$ and such that $u_i(z) > \alpha \cdot u_i(x)$ for all $i \in S$.
\end{lemma}
\begin{proof}
	We show the contrapositive.
	Suppose that $x$ admits a blocking coalition $S$ with objection $z$ such that $\sum_{j \in M} z_j \le \sum_{i \in S} B_i =: B_S$ and $u_i(z) > \alpha u_i(x)$ for all $i \in S$. Consider the allocation $y = \frac{B}{B_S} z$. Then $u_i(y) = \frac{B}{B_S} u_i(z)$ by linearity of $u_i$. Thus $u_i(y) > \alpha \frac{B}{B_S} u_i(x)$ for all $i \in S$ and hence
	\[
	\sum_{i \in N} B_i \frac{u_i(y)}{u_i(x)} \ge \sum_{i \in S} B_i \frac{u_i(y)}{u_i(x)} > \sum_{i \in S} B_i\, \alpha \frac{B}{B_S} = \alpha B.
	\]
	Dividing by $B$, this implies that $\textup{PF}(x) > \alpha$.
\end{proof}

We can combine all these results into convergence results with respect to PF value and core approximation.

\begin{corollary}
	[Convergence of PF value and Core]
	For each $t \ge 1$, let 
	\[
	\epsilon_t = \frac{B \log(m)}{t}
	\quad \text{and} \quad
	\alpha_t = \frac{1 + 2n\sqrt{\epsilon_t / B} + n\epsilon_t / B}{1-\sqrt{2\epsilon_t / B_{\min}}}.
	\]
	Note that $\epsilon_t \to 0$ and $\alpha_t \to 1$ as $t \to \infty$.
	The sequence $(b^t)_{t \ge 1}$ of iterates of the proportional response dynamics induces a sequence of allocations $x^t = x(b^t)$ such that for each $t \ge 1$ that is large enough such that $\epsilon_t < \min\{\frac{1}{2}, \frac{B_{\min}}{2}\}$, we have that $\textup{PF}(x^t) \le \alpha_t$, and $x^t$ lies in the $\alpha_t$-approximate core.
\end{corollary}

This follows by chaining together \Cref{cor:convergence-rate-nash} (convergence rate in terms of Nash welfare), \Cref{thm:approx-nash-pf} (PF value guaranteed by Nash approximation), and \Cref{lem:pf-implies-core} (PF value implies core approximation).

\begin{remark}[Computation of approximate core solutions]
	The results in this section also imply that in the uncapped public goods setting, an approximate core allocation can be computed in polynomial time by finding an approximately Nash welfare optimal allocation, for example using the ellipsoid algorithm \citep[see, e.g.,][Theorem 13.1]{vishnoi2021algorithms}. Another way to compute an approximate core solution is to find an allocation with low PF value \citep[Appendix B]{fain2018indivisiblepublic}. An entirely different method of computing core allocations is proposed by \citet{martinez2025cooperation}, who give a cutting plane algorithm for computing core outcomes in a class of NTU games that includes the uncapped public goods model.
\end{remark}

It is also possible to show that an allocation $x$ with a low PF value (such as $\text{PF}(x) \le 1 + \epsilon$) forms an approximate Lindahl equilibrium using the prices $p_{ij} = B_i v_{ij} / u_i(x)$ which is well-defined since $u_i(x) > 0$ when the PF value is finite. Note that an approximate version of profit maximization is then satisfied by definition of PF value since $\sum_{i \in N} p_{ij} = \text{PF}(x) \le 1 + \epsilon$ for each $j \in M$, and affordability and utility maximization hold exactly (observing that for each agent, the bang-per-buck ratio is constant across projects). Note that this is a different notion of approximate Lindahl equilibrium than the one we will consider later (\Cref{def:epsilon-lindahl-equilibrium}).

\section{Capped Public Goods}
Next we study the capped public goods setting, where we have a constraint $x_j \leq \text{cap}_j$ for each good $j \in M$. 
One may na\"ively attempt to add this constraint to \cref{eq:EG} maximizing Nash welfare, but this will not lead to a Lindahl equilibrium and not even to a weak core solution.

\begin{example}[Max Nash welfare is not Lindahl]
	\label{ex:capped-nash-fails-core}
	Consider the following instance:%
	\footnote{This example is similar to well-known instances in (indivisible) approval-based multi-winner voting where the PAV rule fails the core \citep{aziz2017ejr,peters2020welfarism,peters2025core}.}
	\begin{center}
		{\upshape
			\begin{tabular}{lccccc}
				\toprule
				& $B_i$ & Project 1 & Project 2 & Project 3 & Project 4 \\
				\midrule
				Agent 1 & $2$ & $1$ & $1$ & $0$ & $0$ \\
				Agent 2 & $2$ & $1$ & $0$ & $1$ & $0$ \\
				Agent 3 & $2$ & $0$ & $0$ & $0$ & $1$ \\
				\midrule
				$\text{cap}_j$ & & $3$ & $\infty$ & $\infty$& $\infty$ \\
				\bottomrule
		\end{tabular}}
	\end{center}
	The allocation that maximizes Nash welfare subject to the cap constraints is $x = (3, 0, 0, 3)$. This allocation violates the weak core: consider the blocking coalition $S = \{1, 2\}$ and the objection $z = (3, 0.5, 0.5, 0)$ which gives each $i \in S$ a utility of $u_i(z) = 3.5$ which is strictly higher than $u_i(x) = 3$. Thus, by \Cref{prop:lindahl is core}, $x$ is not a  Lindahl equilibrium. This is not an artifact of having zero-valuations; replacing 1s by 10 and 0s by 1 leads to the same situation.
\end{example}

This failure of the Nash rule to extend to capped settings has been noted several times. \citet[Proposition 4.1]{suzuki2024maxflow} provide an example similar to the one above. \citet[Comment A.1]{garg2021markets} write that Lindahl equilibrium ``does not transform into a Fisher market''.
While the Nash optimum fails the core, it can be shown that it satisfies a 2-approximation to it \citep[Corollary 3.5]{munagala2022coremultilinear}.

\subsection{Adapting the Convex Program}
We will show in this section that \cref{eq:shmyrev cp} can be used to compute a Lindahl equilibrium in the capped public goods setting through a simple modification: we simply add a constraint $x_j\leq \text{cap}_j$ for all $j\in M$. Surprisingly, we will show that this works, even though the exact same constraint does not work for the original EG program (\cref{eq:EG}) for maximizing Nash welfare. Thus, we obtain the first efficient algorithm for capped public goods, thereby resolving an open problem first posed by \citet{fain2016core}.

Our modified program for capped public goods is as follows, where as before we write $x_j(b) = \sum_{i \in N_j} b_{ij}$ as a shorthand:
\begin{equation}
\begin{aligned}
	\max_{b \ge 0} \quad & \sum_{\mathclap{i \in N, j \in M_i}} \: b_{ij} \, \left(\log v_{ij} - \log (b_{ij} / x_j(b)) \right) \\
	\text{s.t.} \quad &\sum_{\mathclap{j \in M_i}} \: b_{ij} \le B_i \text{ for all $i \in N$ } \\
	&x_{j}(b) \le \text{cap}_j \text{ for all $j \in M$}
\end{aligned}
\label[program]{eq:shmyrev cp capped}
\end{equation}
Note that the agents' budget constraints are written as inequalities in this version of the problem, since the caps may preclude feasible solutions from exhausting all agents' budgets (except when the instance is cap-sufficient, see \Cref{def:cap-sufficient} and \Cref{prop:cap-sufficient implications}).

We will require that all valuations have been rescaled such that $v_{ij} > 1$ for all $v_{ij} \ne 0$,%
\footnote{A similar normalization is used by \citet{brandl2022fundingpublicprojects}.}
to ensure that the terms ``$b_{ij} \log v_{ij}$'' in the objective function have a positive coefficient. Without the rescaling, optimal solutions may not spend the entire budget of an agent even when it is possible to do so. Rescaling is without loss of generality, since the Lindahl equilibrium is invariant to scaling valuations by a positive constant (even different constants for different agents). We did not have to do this rescaling in the uncapped version of the program, because we used an equality constraint for agent budgets in that version.

\subsection{The Convex Program Computes a Lindahl Equilibrium}

In this section, we will prove that \cref{eq:shmyrev cp capped} computes a zero-respecting Lindahl equilibrium. This provides the first existence proof of Lindahl equilibrium in the capped public goods setting that does not use a fixed-point theorem. It also proves the existence of an equilibrium that is zero-respecting, which does not quite follow from the existence result of \citet{foley1970lindahl}, since his model does not allow for caps and does not allow for valuations equal to 0 (since it assumes strictly monotonic valuations). In addition, as explained in \Cref{fn:discontinuous}, zero-respecting equilibria cannot be obtained by a limit argument that replaces zero valuations by $\epsilon$ and letting $\epsilon \to 0$. The zero-respecting property is intuitively attractive, since it prevents agents from having to contribute their budget to goods that they do not have a positive valuation for.

Our proof proceeds by analyzing the KKT conditions (\Cref{thm:subdiff-kkt}) applied to \cref{eq:shmyrev cp capped}. In the notation of \Cref{thm:subdiff-kkt}, the objective function is obtained by multiplying by $-1$, giving $f(b) = -\sum_{i\in N, j\in M_i} b_{ij} \log v_{ij} + \sum_{j \in M} h(b_j)$ where $b_j$ is the vector $(b_{ij})_{i \in N_j}$ and $h$ is the function defined as $h(x) = \sum_i x_i \log\frac{x_i}{\sum_k x_k}$ when all $x_i$ are non-negative (and $h(0) = 0$), and $+\infty$ otherwise.

To write down the KKT conditions, we need to compute the subdifferential of $f$. 
\begin{restatable}
	[Subdifferential of objective function]
	{lemma}{shmyrevsubdiff}
 	\label{lem:shmyrev subdiff}
 	Let $f$ be the negative of the objective function of \cref{eq:shmyrev cp capped}, and let $b = (b_{ij})_{i\in N, j \in M_i}$ be a feasible point. If there is some $i \in N$ and $j \in M_i$ such that $b_{ij} = 0$ but $x_j(b) > 0$, then $\partial f(b) = \emptyset$. Otherwise, for all vectors $g = (g_{ij})_{i \in N, j \in M_i}$, we have
	\[
		g \in \partial f(b) 
		\iff
		\left\{
			\begin{array}{ll}
				\text{for all } j\in M \text{ s.t. } x_j(b) > 0 \text{, and all $i \in N_j$}, & g_{ij} = - \log v_{ij} + \log \frac{b_{ij}}{x_j(b)}, \\
				\text{for all } j\in M \text{ s.t. } x_j(b) = 0, & \sum_{i\in N_j} e^{g_{ij} + \log v_{ij}} \leq 1.
			\end{array}
		\right.
	\]
\end{restatable}
The detailed proof is in \Cref{app:proofs}.

Based on this computation of the subdifferential of the objective function of \cref{eq:shmyrev cp capped}, we can now prove that an optimal solution to the program will form a Lindahl equilibrium.

\begin{theorem}
	[Optimal Solutions are Lindahl]
    \label{thm:lindahl shmyrev capped}
	Assume that valuations are rescaled such that $v_{ij} > 1$ for all $v_{ij} \ne 0$.
	Let $b^*$ be an optimal solution to \cref{eq:shmyrev cp capped} with induced allocation $x^* = (x_j(b^*))_{j \in M}$. Then there exist zero-respecting prices $p$ such that $(x^*,p)$ forms a Lindahl equilibrium for the capped public goods setting.
\end{theorem}
\begin{proof}
We apply the KKT conditions of \Cref{thm:subdiff-kkt} to \cref{eq:shmyrev cp capped}, noting that Slater's constraint qualification is satisfied by a solution setting all $b_{ij}$ to a sufficiently small but positive value.
For convenience, let us label the constraints of the program using the notation of the KKT conditions:
\begin{alignat*}{3}
	h_i^{(1)}(b) &= \textstyle\sum_{j \in M_i} b_{ij} - B_i &\qquad& \text{for all $i \in N$,} \\
	h_j^{(2)}(b) &= \textstyle\sum_{i \in N_j} b_{ij} - \text{cap}_j && \text{for all $j \in M$,} \\
	h_{ij}^{(3)}(b) &= -b_{ij} && \text{for all $i \in N$ and $j \in M_i$.}
\end{alignat*}
All these functions are affine and thus differentiable, with singleton subdifferentials.

Let $b^*$ be an optimal solution to \cref{eq:shmyrev cp capped}. By \Cref{thm:subdiff-kkt}, we know that there exists a subgradient $g^* \in \partial f(b^*)$ such that
\begin{equation}
g^* + \sum_{i\in N} \lambda_i \nabla h_i^{(1)}(b^*) + \sum_{j\in M} \mu_j \nabla h_j^{(2)}(b^*) \: + \: \sum_{\mathclap{i\in N, j\in M_i}} \: \eta_{ij} \nabla h_{ij}^{(3)}(b^*) = 0,
\label{eq:zero-in-subdiff}
\end{equation}
with $\lambda_i, \mu_j, \eta_{ij} \ge 0$ and such that complementary slackness holds. 
\Cref{eq:zero-in-subdiff} can be equivalently stated as
\begin{equation}
	g^*_{ij} + \lambda_i  + \mu_j -  \eta_{ij}  = 0 \quad \text{for all $i \in N$ and $j \in M_i$.}
\end{equation}

Let us now understand the implications of \eqref{eq:zero-in-subdiff}. We go through each project $j \in M$, and distinguish the cases where $x_j^* = \text{cap}_j$, where $0 < x_j^* < \text{cap}_j$, and where $x_j^* = 0$.

\begin{itemize}
	\item Consider a project $j \in M$ with $x_j^* = \text{cap}_j$. Let $i \in N_j$. 
	Note that $b_{ij}^* > 0$, since otherwise the subdifferential is empty, contradicting $g^* \in \partial f(b^*)$. Thus, by complementary slackness, $\eta_{ij} = 0$. 
	Then the $ij$'th component of \eqref{eq:zero-in-subdiff} implies
	\[
	0 = -\log v_{ij} + \log \tfrac{b_{ij}^*}{x_j^*} + \lambda_i + \mu_j
	\]
	and thus, because $\mu_j \ge 0$, that
	\[
	0 \ge -\log v_{ij} + \log \tfrac{b_{ij}^*}{x_j^*} + \lambda_i.
	\]
	Rearranging and exponentiating both sides, we conclude that
	\[
		v_{ij} \ge e^{\lambda_i} \frac{b_{ij}^*}{x_j^*}.
	\]
	
	\item Consider a project $j \in M$ with $0 < x_j^* < \text{cap}_j$. Let $i \in N_j$. 
	Again note that $b_{ij}^* > 0$ since otherwise the subdifferential $\partial f(b^*)$ is empty.
	Then, by complementary slackness, $\mu_j = \eta_{ij} = 0$. Thus, the $ij$'th component of \eqref{eq:zero-in-subdiff} implies
	\[
	0 = g^*_{ij} + \lambda_i = - \log v_{ij} + \log \tfrac{b^*_{ij}}{x_j^*} + \lambda_i.
	\]
	Rearranging and	exponentiating both sides, we conclude that
	\[
		v_{ij} = e^{\lambda_i} \frac{b_{ij}^*}{x_j^*}.
	\]
	
	\item Finally, consider a project $j \in M$ with $x_j^* = 0$. Thus by complementary slackness, $\mu_j = 0$. Let $i \in N_j$. 
	By definition of $x_j^*$, we have $b^*_{ij} = 0$. 
	The $ij$'th component of \eqref{eq:zero-in-subdiff} implies
	\[
	0 = g^*_{ij} + \lambda_i - \eta_{ij}.
	\]
	Writing $w_{ij} = g^*_{ij} + \log v_{ij}$ and noting that $\eta_{ij} \ge 0$, we get
	\[
	0 \le w_{ij} - \log v_{ij} + \lambda_i.
	\]
	Rearranging and exponentiating both sides, we conclude that
	\[
		v_{ij} \le e^{\lambda_i} e^{w_{ij}}.
	\]
\end{itemize}
Collecting all our conclusions, we have found that for $i \in N$ and $j \in M_i$,
\begin{equation}
	v_{ij} \begin{cases}
		\ge e^{\lambda_i} \frac{b_{ij}^*}{x_j^*} & \text{if $x_j^* = \text{cap}_j$,} \\
		=   e^{\lambda_i} \frac{b_{ij}^*}{x_j^*} & \text{if $0 < x_j^* < \text{cap}_j$, } \\
		\le e^{\lambda_i} e^{w_{ij}}         & \text{if $x_j^* = 0$.} \\
	\end{cases}
	\label{eq:kkt-conclusions}
\end{equation}

We now form a Lindahl equilibrium using the following prices for $i \in N$ and $j \in M_i$:
\begin{equation}
p_{ij} = 
\begin{cases}
	\frac{b_{ij}^*}{x_j^*} & \text{if $x_j^* > 0$,} \\
	e^{w_{ij}} & \text{if $x_j^* = 0$.} \\
\end{cases}
\label{eq:price-definition}
\end{equation}
For $i,j$ such that $v_{ij} = 0$, we set $p_{ij} = 0$ and $b_{ij}^* = 0$.
It follows from these definitions that the prices are zero-respecting.
Note that with these prices, the identity $p_{ij}x_j^* = b_{ij}^*$ holds for all $i \in N$ and $j \in M_i$ (by case analysis on whether $x_j^* = 0$).

We claim that $(x^*, p)$ forms a Lindahl equilibrium.

For profit maximization, note that if $x_j^* > 0$, then $\sum_{i \in N} p_{ij} = \sum_{i \in N_j} b_{ij}^*/x_j^* = 1$ by definition of $x_j^*$ and using $p_{ij} = 0$ when $v_{ij} = 0$, and if $x_j^* = 0$, then $\sum_{i \in N} p_{ij} = \sum_{i \in N_j} e^{w_{ij}} \le 1$ from the subdifferential characterization in \Cref{lem:shmyrev subdiff}.

For the affordability condition, for each $i \in N$ we have
\[
\langle p_i, x^* \rangle = \sum_{j \in M} p_{ij}x_j^* = \sum_{j \in M_i} b_{ij}^* \le B_i,
\]
using the identity $p_{ij}x_j^* = b_{ij}^*$ and the feasibility of $b^*$ in \cref{eq:shmyrev cp capped}.

It remains to prove utility maximization. Fix an agent $i \in N$. We will show that $x^*$ is utility maximizing subject to the budget constraint $\langle p_i, x \rangle \le B_i$.
We divide the proof of this into two parts, based on whether agent $i$ spends their entire budget under $p_i$ or not.

First suppose that $\sum_{j \in M_i} b_{ij}^* < B_i$. 
We want to show that this only occurs when all projects $j\in M$ with $v_{ij} > 0$ have $x_j^* = \text{cap}_j$.
Suppose for a contradiction that $x_j^* < \text{cap}_j$ for some $j \in M_i$.
By complementary slackness, we have $\lambda_i = \mu_j = 0$. Thus, \eqref{eq:zero-in-subdiff} implies $g^*_{ij} \geq 0$. 
If $x_{j}^* > 0$, then we have $g^*_{ij} = -\log v_{ij} + \log(b^*_{ij}/x^*_j)$ and thus $g^*_{ij} < 0$ because $v_{ij} > 1$ by assumption and $b^*_{ij}/x^*_j \le 1$ by definition of $x_j^*$, a contradiction.
Otherwise, if $x_{j}^* = 0$, then \smash{\raisebox{-0.25ex}{$\raisebox{0.25ex}{$e$}\,^{g^*_{ij} + \log v_{ij}}$}} $\leq 1$ by the subdifferential characterization in \Cref{lem:shmyrev subdiff}, which implies $g^*_{ij} + \log v_{ij} \leq 0$. Hence $g^*_{ij} \leq -\log v_{ij} < 0$ since $v_{ij} > 1$, again a contradiction.

Thus, we have shown that if agent $i$ does not spend their whole budget, then they are already achieving the maximal possible utility under \emph{any} feasible allocation (because $x^*_j = \text{cap}_j$ for all $j\in M$ such that $v_{ij} > 0$), and thus $x^*$ is utility maximizing for $i$.

Next consider the case where $\sum_{j \in M_i} b_{ij}^* = B_i$.
Combining \eqref{eq:kkt-conclusions} and \eqref{eq:price-definition}, we have that the ``bang per buck'' of project $j \in M_i$ satisfies
\begin{equation}
	\frac{v_{ij}}{p_{ij}} \begin{cases}
		\ge e^{\lambda_i} & \text{if $x_j^* = \text{cap}_j$,} \\
		=   e^{\lambda_i} & \text{if $0 < x_j^* < \text{cap}_j$, } \\
		\le e^{\lambda_i} & \text{if $x_j^* = 0$.} \\
	\end{cases}
	\label{eq:bang-per-buck}
\end{equation}

Now, an affordable bundle $y$ is utility maximizing for $i$ (among bundles satisfying the cap constraints) if and only if (i) for every project with $\smash{\frac{v_{ij}}{p_{ij}}} > e^{\lambda_i}$, we have $y_j = \text{cap}_j$, and (ii) for every project with $\smash{\frac{v_{ij}}{p_{ij}}} < e^{\lambda_i}$, we have $y_j = 0$, and (iii) the whole budget is spent. Because $x^*$ is such a bundle, it is utility maximizing for $i$.

More formally, let $y = (y_j)_{j\in M}$ be an allocation such that $0 \le y_j \le \text{cap}_j$ for all $j \in M$ and $\langle p_i, y \rangle \le B_i$. Then for every $j \in M$ with $x_j^* = 0$ we have $y_j - x_j^* \ge 0$, and for every $j \in M$ with $x_j^* = \text{cap}_j$ we have $y_j - x_j^* \le 0$. Thus
\begin{align*}
	u_i(y) - u_i(x^*) \:
	&\begin{alignedat}[t]{4}
		&= 
		&& \sum_{j \in M} v_{ij} (y_j - x_j^*)\\
		&= 
		&& \sum_{\substack{j \in M_i \\ x_j^* = 0}} v_{ij} (y_j - x_j^*) 
		&&+\: \sum_{\mathclap{\substack{j \in M_i \\ 0 < x_j^* < \text{cap}_j}}} v_{ij} (y_j - x_j^*) 
		&&+\: \sum_{\mathclap{\substack{j \in M_i \\ x_j^* = \text{cap}_j}}} v_{ij} (y_j - x_j^*) \\
		&\overset{\eqref{eq:bang-per-buck}}{\le}
		&&\sum_{\substack{j \in M_i \\ x_j^* = 0}} e^{\lambda_i} p_{ij}(y_j - x_j^*) 
		&&+\: \sum_{\mathclap{\substack{j \in M_i \\ 0 < x_j^* < \text{cap}_j}}} e^{\lambda_i} p_{ij} (y_j - x_j^*) 
		&&+\: \sum_{\mathclap{\substack{j \in M_i \\ x_j^* = \text{cap}_j}}} e^{\lambda_i} p_{ij} (y_j - x_j^*)
	\end{alignedat}
	 \\
	&= \: e^{\lambda_i} \sum_{j \in M_i} p_{ij} (y_j - x_j^*)
	= e^{\lambda_i} (\langle p_i, y \rangle - \langle p_i, x^* \rangle)
	\le 0.
\end{align*}
In the last line we used that $(x^*,p)$ is zero-respecting for the second equality, and we used that $\langle p_i, y \rangle \le B_i = \langle p_i, x^* \rangle$ for the last inequality. It follows that $u_i(y) \le u_i(x^*)$, establishing the utility maximization condition.
\end{proof}

\subsection{Approximately Optimal Solutions}
\label{sec:capped-approximate}

We have now seen in \Cref{thm:lindahl shmyrev capped} that an optimal solution to our convex program forms a Lindahl equilibrium in the capped public goods setting. This makes for a direct existence proof (since the feasible set of the program is compact and the objective function is continuous), but it does not immediately give us a way to compute a Lindahl equilibrium allocation. Indeed, even what it means to compute a Lindahl equilibrium is ill-defined since, as we saw in \Cref{ex:irrational}, there are simple rational instances where all Lindahl equilibria are irrational, so that no algorithm could write down the Lindahl equilibrium exactly. In addition, algorithms for convex programming do not produce exactly optimal but only $\epsilon$-optimal solutions.

To get around this issue, we will show that an approximate solution to our convex program produces an approximate Lindahl equilibrium. Our proof needs to relax the conditions of budget feasibility, utility maximization, and profit maximization in the definition of Lindahl equilibrium, leading to the following approximate equilibrium concept.

\begin{restatable}
	[$\epsilon$-Lindahl equilibrium]
	{definition}
	{defepsilonlindahl}
	\label{def:epsilon-lindahl-equilibrium}
	Let $\epsilon \ge 0$. A pair $(x,p)$ of a vector $x = (x_j)_{j \in M}$ such that $0 \le x_j \le \text{cap}_j$ for all $j \in M$ and a set of personalized prices $p$ forms an $\epsilon$-Lindahl equilibrium if
	\begin{itemize}
		\item $x$ is \emph{approximately budget-feasible}: $\sum_{j \in M} x_j \le B + \epsilon$ (i.e., $x$ may overshoot the total budget),
		\item $x$ is \emph{affordable}: we have $\langle p_i, x \rangle \le B_i$ for every $i \in N$,
		\item $x$ is \emph{approximately utility-maximizing}: for every $i \in N$ and every $y \in \mathbb{R}_{\ge 0}^m$ such that $0 \le y_j \le \textup{cap}_j$ for all $j \in M$ and such that $\langle p_i, y \rangle\le B_i$, we have $u_i(x) \ge u_i(y) - \epsilon$, 
		\item $x$ is \emph{approximately profit-maximizing}: for every $j \in M$ we have $\sum_{i \in N} p_{ij} \le 1$, and whenever $x_j > 0$ then $\sum_{i \in N} p_{ij} \ge 1 - \epsilon/x_j$.
	\end{itemize}
\end{restatable}

We are not claiming that this is the most natural definition of $\epsilon$-Lindahl equilibrium (especially because the way we relax profit maximization is not natural), but it does imply that the associated allocation approximately satisfies the core, where the core definition is relaxed in an additive sense by $\epsilon$ and the allocation itself overshoots the budget by $\epsilon$. 

\begin{lemma}
	[$\epsilon$-Lindahl equilibrium implies $\epsilon$-core]
	Suppose $(x,p)$ forms an $\epsilon$-Lindahl equilibrium in the sense of \Cref{def:epsilon-lindahl-equilibrium}, with $\sum_{j \in M} x_j \le B + \epsilon$. Then there does not exist a blocking coalition $S \subseteq N$ and objection $z = (z_j)_{j \in M} \in \mathbb{R}_{\ge 0}^m$ with $0 \le z_j \le \textup{cap}_j$ for all $j \in M$, such that $\sum_{j \in M} z_j \le \sum_{i \in S} B_i$ and $u_i(z) > u_i(x) + \epsilon$ for all $i \in S$.
\end{lemma}
\begin{proof}
	The proof is similar to the exact case (\Cref{prop:lindahl is core}). 
	
	Suppose $S \subseteq N$ is a blocking coalition with objection $z$ satisfying $\sum_{j \in M} z_j \le \sum_{i \in S} B_i$.
	Since for every $i \in S$ we have $u_i(x) < u_i(z) - \epsilon$, the utility maximization condition of $\epsilon$-Lindahl equilibrium implies that $z$ cannot be affordable for $i$, so that $\langle p_i, z \rangle > B_i$. Summing over $i \in S$, we get that $\sum_{i \in S} \langle p_i, z \rangle > \sum_{i \in S} B_i$.
	Thus, due to the non-negativity of prices and the first part of the approximate profit maximization condition of $\epsilon$-Lindahl equilibrium, we have
	\begin{align*}
		\sum_{i \in S} B_i 
		< \sum_{i \in S} \langle p_i, z \rangle 
		\le \sum_{i \in N} \langle p_i, z  \rangle 
		\le \langle 1, z \rangle
		\le \sum_{i \in S} B_i,
	\end{align*}
	a contradiction.
\end{proof}

We are now ready to state our approximation result.

\begin{restatable}
	[Approximate solutions give $\epsilon$-Lindahl equilibrium]
	{theorem}
	{approximatelindahl}
	\label{thm:approximate-lindahl}
	Let $0 < \epsilon \le \min\{1, \textup{cap}_{\min}\}$ where $\textup{cap}_{\min} = \min_{j \in M} \textup{cap}_j$. Assume that $v_{ij} > 1$ whenever $v_{ij} > 0$, that $B \ge 1$, and that $\text{cap}_j \le B$ for all $j \in M$. There is an algorithm computing an $\epsilon$-Lindahl equilibrium for the capped public goods setting running in polynomial time (with a logarithmic dependence on $1/\epsilon$).
\end{restatable}

The algorithm of \Cref{thm:approximate-lindahl} works by slightly perturbing the input instance, and then solving \Cref{eq:shmyrev cp capped} to an appropriately chosen accuracy of $C \cdot \epsilon^3$, where $C$ depends single-exponentially on problem parameters (number of public goods, agent endowments, caps, largest $v_{ij}$).
Thus, the runtime claim of \Cref{thm:approximate-lindahl} is based on using an algorithm for solving the convex program that has a logarithmic or polylogarithmic dependence on the desired accuracy $\frac1\epsilon$ of the solution, such as the ellipsoid algorithm or interior-point methods. If using a solution algorithm for convex programming with a polynomial dependence on the accuracy (like most first-order methods), the runtime would be exponential in problem parameters.

The proof of \Cref{thm:approximate-lindahl} appears in \Cref{app:approximations-capped}. It makes use of $\epsilon$-KKT conditions. As a first step, we prove that approximately optimal solutions to \Cref{eq:shmyrev cp capped} approximately maximize agent utilities, under the additional assumption that every project receives a total funding of at least $t$, i.e., $x_j > t$ for all $j \in M$, for some given $t > 0$. We then remove the latter assumption by perturbing the input instance slightly so that every project is guaranteed at least $t$ funding; this perturbation causes the approximation errors in the budget feasibility and profit maximization conditions.

\subsection{Discussion of the Convex Program}

\paragraph{Comparison to Fisher markets.}
It is interesting to contrast our program with the Fisher market setting with private goods. There, the Eisenberg--Gale program also does not allow the introduction of saturating constraints on the primal variables (which correspond to a maximum amount of a good that an agent may receive). Yet it is not possible to add such constraints to the Shmyrev program for Fisher markets either, because that program does not contain the original primal variables encoding the allocation (in contrast to our public-goods program). Instead, the allocation is obtained through a nonlinear function of the optimization variables in the Shmyrev program.%
\footnote{It is known that \emph{spending constraints} on a per-buyer basis can be introduced to the Shmyrev program~\citep{birnbaum2011distributed}, but these are very different from saturating constraints on the primal variables.} 
Thus, \cref{eq:shmyrev cp} allows for a type of saturating consumption constraint that has previously never been possible for either private or public goods.

\paragraph{Scale-freeness.}
Subject to our requirement that $v_{ij} > 1$ whenever $v_{ij} \neq 0$, the optimal solution to our program does not change if the valuations of some agent $i$ are multiplied by a positive constant $\alpha > 0$, since this operation just adds a constant term $B_i \log \alpha$ to the objective. The program is also invariant to changing the ``currency'' by multiplying all the budgets $B_i$ and all the caps by a common positive constant $\beta > 0$. This is because the feasible solutions of the program before the change are in bijection with the feasible solutions of the program after the change (mapping $b_{ij} \mapsto \beta\cdot b_{ij}$), and this bijection preserves the objective value of each solution up to multiplying it by $\beta$.

\paragraph{Discontinuity as $v_{ij} \to 0$.}
Given that our program computes a Lindahl equilibrium that is zero-respecting, its output is not continuous as $v_{ij} \to 0$. This is unavoidable due to \Cref{ex:zero-respecting Lindahl equilibrium underspends} (see \Cref{fn:discontinuous}), and unsurprising in light of our normalization of valuations.

\paragraph{Not all Lindahl equilibria are optimal solutions.}
In the uncapped setting, \emph{every} Lindahl equilibrium forms an optimum of both \cref{eq:shmyrev cp} and the Eisenberg--Gale program. As the following example shows, this is not the case for the capped setting, where \cref{eq:shmyrev cp capped} captures only a strict subset of Lindahl equilibria. The example also shows that Lindahl equilibria are not unique in utilities.

\begin{example}[Lindahl equilibrium is not unique in utilities]
	\label{ex:non-unique}
	Consider the following instance:
	\begin{center}
		{\upshape
			\begin{tabular}{lcccc}
				\toprule
				& $B_i$ & Project 1 & Project 2 & Project 3 \\
				\midrule
				Agent 1 & $1$ & $1$ & $1$ & $0$ \\
				Agent 2 & $1$ & $1$ & $0$ & $1$ \\
				\midrule
				$\text{cap}_j$ & & $1$ & $\infty$ & $\infty$ \\
				\bottomrule
		\end{tabular}}
	\end{center}
	This instance is cap-sufficient, since each agent has a positive valuation for an uncapped project. 
	Let us determine the set of zero-respecting Lindahl equilibria $(x,p)$.
	By \Cref{cor:pareto}, $x$ is Pareto-optimal, and therefore $x_1 = 1$ and $x_2 + x_3 = 1$.
	For each $\gamma \in [0,1]$, one can check that $x = (1, 1-\gamma, \gamma)$ forms a Lindahl equilibrium together with the prices $p_1 = (\gamma, 1, 0)$ and $p_2 = (1- \gamma, 0, 1)$.
	It follows that, in the capped setting, Lindahl equilibria are not unique in utilities: in the equilibrium allocation $(1,1,0)$, agent 1 obtains utility $2$, but in the equilibrium allocation $(1,0,1)$, agent 1 obtains utility $1$.
	
	Note that the allocation $x^* = (1,\frac12,\frac12)$ is the unique allocation that is intuitively fair and respects the symmetry of the instance, but this allocation is not the only Lindahl equilibrium.
	However, \cref{eq:shmyrev cp capped} uniquely selects $x^*$, because on this instance its objective function simplifies to $-b_{11} \log b_{11}
	-b_{21} \log b_{21}$ which is maximized by $b_{11} = b_{21} = 0.5$, leaving each agent a budget of $0.5$ to spend on other projects.
\end{example}

An intuitive reason for why our program does not capture all Lindahl equilibria is that the KKT conditions that we analyzed in the proof of \Cref{thm:lindahl shmyrev capped} impose an additional constraint on the contributions of agents to projects that are fully funded ($x_j = \text{cap}_j$), saying that every agent's bang-per-buck ratio for that good should exceed their ``normal'' bang-per-buck ratio $e^{\lambda_i}$ by a common agent-independent factor $e^{\mu_j}$. On the above example, this leads the program to select the most natural equilibrium (and this remains the case if the caps and endowments are varied), suggesting that our convex program might define a desirable decision rule for selecting Lindahl equilibria.

\subsection{Computation and Experiments}

Let us briefly discuss how to solve \cref{eq:shmyrev cp capped}. 
Numerically, the program can be solved using any conic convex optimization solver supporting exponential cones, such as MOSEK, COPT, Clarabel, ECOS, or SCS, by formulating the program as
\begin{equation*}
	\begin{aligned}
		\max_{b \ge 0, x \ge 0, t} \quad & \textstyle\sum_{i \in N, j \in M_i} b_{ij}\log v_{ij} - t_{ij} \\[-1pt]
		\text{s.t.} \quad & (x_j, b_{ij}, -t_{ij}) \in \mathcal{K}_{\text{exp}} \text{ for all $i \in N$, $j \in M_i$} \\
		&\textstyle \sum_{j \in M_i} b_{ij} \le B_i \text{ for all $i \in N$ } \\
		&x_{j} = \textstyle \sum_{i \in N_j} b_{ij} \text{ for all $j \in M$} \\
		&x_{j} \le \text{cap}_j \text{ for all $j \in M$}
	\end{aligned}
\end{equation*}
where $\mathcal{K}_{\text{exp}} = \{ (x_1, x_2, x_3) : x_1 \ge x_2 e^{x_3/x_2} \}$ is the (primal) exponential cone.
We built a simple online tool for solving moderate-size instances with the SCS solver~\citep{donoghue2016conic}, available at \href{https://dominik-peters.de/demos/lindahl/}{dominik-peters.de/demos/lindahl/}.
From a complexity-theoretic perspective, an $\epsilon$-optimal solution to \cref{eq:shmyrev cp capped} can be computed in polynomial time using the ellipsoid method
\citep[see, e.g.,][Theorem 13.1]{vishnoi2021algorithms}.

\begin{figure}[t]
	\centering
	\includegraphics[width=\linewidth,trim=0 23cm 0.5cm 0]{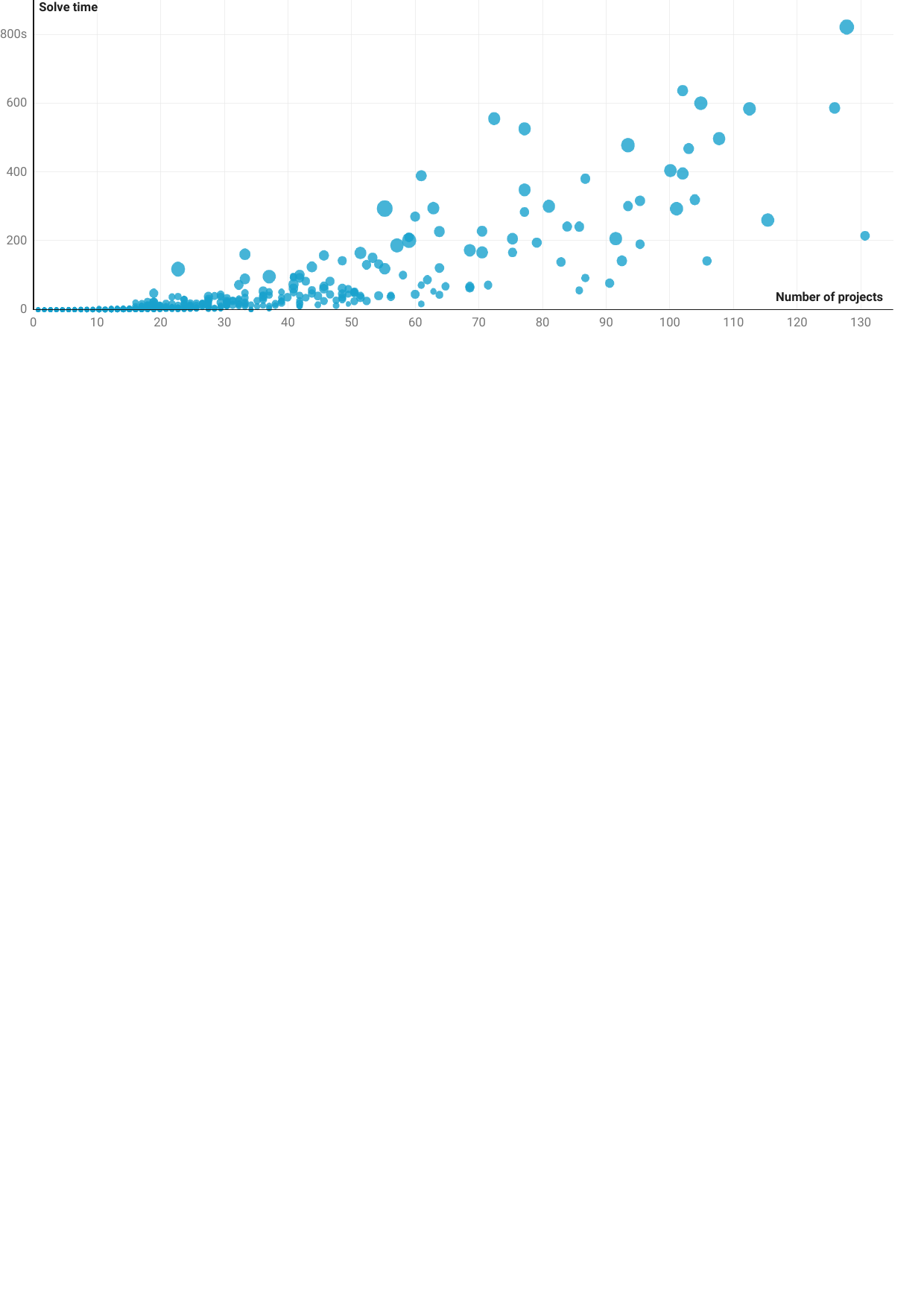}
	\label{fig:computation-time-scatter-plot}
	\caption{Results of our experiments on Pabulib instances, showing the solve time of the MOSEK solver as a function of the number of projects in the instance. The largest instances are from Warsaw and Amsterdam.}
\end{figure}

To evaluate the performance of computing Lindahl equilibrium via \cref{eq:shmyrev cp capped}, we implemented it using the MOSEK solver and applied it to the participatory budgeting datasets in the Pabulib repository \citep{faliszewski2023pabulib}. We find that the program can be solved quite quickly, with solve times shown in \Cref{fig:computation-time-scatter-plot}. The longest solve time we encountered was 822s (or 1489s including the time to write down the encoding) for an instance from Warsaw with 14\,897 voters (with 11\,426 distinct approval sets) and 134 projects.

For the uncapped setting, \citet[Section 4.2]{zhao2023analysis} present some experiments on the performance of the proportional response dynamics, and find that it outperforms several alternative solution methods.

\subsection{Computing Core Allocations for Separable Piecewise-Linear Concave Utilities}

We have set up the capped setting with the caps interpreted as an exogenous constraint. An alternative interpretation is as a capped utility function $u_i(x) = \sum_{j \in M} v_{ij}\min(x_j, \text{cap}_j)$. This view suggests a variety of generalizations: for example, we might want to allow different agents to specify different caps. We can generalize further to \emph{separable piecewise-linear concave utilities} (SPLC). These are utility functions that can be written as a sum over goods (separable), with the term corresponding to a good being a (non-decreasing) piecewise-linear concave function of $x_j$. 
See \Cref{fig:plc} for an example.

This class of utility functions is well-studied for private goods, both for Fisher markets and Arrow--Debreu exchange markets.
For these markets, just as for linear utilities, equilibrium exists and is rational under mild conditions; however computing an equilibrium becomes PPAD-complete \citep{vazirani2011separablePLC,chen2009spending,deligkas2024constant}.
A complementary pivot algorithm for computing an equilibrium has been proposed \citep{garg2015complementary}. This algorithm is based on linear complementarity \citep{eaves1971basic,eaves1976finite}, which interestingly can also be used to show existence of Lindahl equilibria \citep[Appendix A]{munagala2022coremultilinear}.

We leave the problem of computing Lindahl equilibria for SPLC utilities open, but we show how our result for the capped setting can be used to at least compute an allocation in the core (up to any desired approximation factor). We do this by reducing an SPLC instance to a capped instance (with each piece of the piecewise-linear utilities becoming its own separate good), and show that a Lindahl equilibrium for this instance is core-stable with respect to the SPLC utilities.%
\footnote{However, it is not clear that it is also a Lindahl equilibrium with respect to SPLC utilities, since we assign different prices to different segments of a public good, while Lindahl equilibrium requires a single set of individualized prices for each good.}

We begin with formal definitions.
A function $f : [0, B] \to \mathbb R_{\ge 0}$ with $f(0) = 0$ is \emph{piecewise-linear concave} if it can be decomposed into a finite number of linear \emph{segments}, specified by their lengths $\ell^1, \dots, \ell^k$ (with $\ell^1 + \dots + \ell^k = B$) and slopes $s^1, \dots, s^k$ with $s^1 \ge \dots \ge s^k \ge 0$. Explicitly, for each $i \in [k]$, writing $a^i = \ell^1 + \dots + \ell^{i-1}$ for the left endpoint of the $i$th segment, we have $f(a^i + x) = f(a^i) + s^i x$ for $x \in [0, \ell^i]$. \Cref{fig:plc} shows an example.
\begin{figure}[t]
	\begin{tikzpicture}[yscale=0.8, font=\small]
		\draw[->] (0,0) -- (6.5,0) node[right]{$x$};
		\draw[->] (0,0) -- (0,3) node[above]{$f(x)$};

		\coordinate (A) at (0,0);
		\coordinate (B) at (1.8,1.5);
		
		\draw[dashed] (A) node[below]{$a^1$};
		
		\draw[thick,blue] (A) -- (B) node[midway, above, sloped]
		{slope = \(s^1\)};
		\draw[dashed] (B) -- ++(0,-1.5) node[below]{$a^2$};

		\coordinate (C) at (3.5,2.2);
		\draw[thick,blue] (B) -- (C) node[midway, above, sloped]
		{slope = \(s^2\)};
		\draw[dashed] (C) -- ++(0,-2.2) node[below]{$a^3$};

		\coordinate (D) at (6,2.5);
		\draw[thick,blue] (C) -- (D) node[midway, above, sloped]
		{slope = \(s^3\)};

		\node[left] at (0,0) {$0$};
		
		\draw [decorate, decoration={brace, amplitude=5pt}, yshift=2pt] (1.9,0) -- (3.4,0) 
		node[midway, above=5pt] {$\ell^2$};
		
	\end{tikzpicture}
	\caption{An example of a piecewise-linear concave function.}
	\label{fig:plc}
\end{figure}
A utility function $u_i$ is called \emph{separable piecewise-linear concave} (SPLC) if there exist piecewise-linear concave functions $f_{ij}$ for all $j \in M$ such that $u_i(x) = \sum_{j \in M} f_{ij}(x_j)$ for all allocations $x$.
Note that by subdividing segments if necessary, we may assume that for each project $j$, all agents $i$ agree on the total number $k_j$ of segments in $f_{ij}$ as well as their lengths. At the same time, we can choose this common subdivision in a minimal way, so that from one segment to the next, there is always at least one agent whose slope \emph{strictly} decreases. Let us write $s^t_{ij}$ for the slope of the $t$th segment of $f_{ij}$, and let us write $\ell_j^t$ for the length of the $t$th segment for project $j$.

Any instance of the public goods problem with SPLC utility functions can be translated into an instance with linear utility functions and with caps, using the following construction.
\begin{definition}
	[SPLC Instances]
	\label{def:splc-reduction}
	Suppose we are given an SPLC instance $I$ specified by the slopes $(s^t_{ij})_{i\in N, j \in M, t \in [k_j]}$ and lengths $(\ell^t_{j})_{j \in M, t \in [k_j]}$ of the segments.
	We construct a public goods instance $I'$ with linear utility functions on the same set of agents $N$ and with a new set of projects given by the disjoint union $M' = \bigcup_{j\in M} \{j^1, \dots, j^{k_j}\}$, with $\textup{cap}_{j^t} = \ell_j^t$. The project $j^t$ will describe how much of project $j$ will be funded in the area of its $t$th segment. Finally, we take valuations $v_{ij^t} = s^t_{ij}$.
\end{definition}

Let us say that an SPLC instance is \emph{well-behaved} if the derived instance is cap-sufficient in the sense of \Cref{def:cap-sufficient}. This is guaranteed to be the case, for example, if for every agent $i$, the total length of segments with positive slope across all projects is at least $B$. Similar sufficient conditions are used for private goods equilibria \citep[e.g.,][Section 2]{vazirani2011separablePLC}.

\begin{proposition}
	[Reduction from SPLC to caps]
	\label{thm:splc-reduction}
	Let $I$ be a well-behaved instance of the public goods problem with SPLC utilities. Then any zero-respecting Lindahl equilibrium for the instance $I'$ as constructed in \Cref{def:splc-reduction} can be transformed into a core allocation for the SPLC instance $I$.
\end{proposition}
\begin{proof}
	Suppose that $x'$ is a zero-respecting Lindahl equilibrium allocation for $I'$.
	By \Cref{cor:pareto}, $x'$ is Pareto-optimal. Note that we always have $s_{ij}^t \le s_{ij}^{t-1}$ by concavity, and for at least one agent the inequality is strict (by minimality of the chosen common subdivision).
	Thus, Pareto-optimality implies that if $x'_{j^t} > 0$ then $x'_{j^{t-1}} = \text{cap}_{j^{t-1}} = \ell_j^{t-1}$. This allows us to define an allocation $x$ for the instance $I$ by setting $x_j = \smash{\sum_{t = 1}^{k_j}} x'_{j^t}$. We now argue that $x$ is in the core.
	
	Suppose not, and there is some blocking coalition $S \subseteq N$ and objection $z = (z_j)_{j \in M}$ such that $\sum_{j \in M} z_j \le \sum_{i \in S} B_i$ and for all $i \in S$, we have $u_i(z) \ge u_i(x)$, with strict inequality for at least one $i \in S$. We construct a core objection $z'$ for $x'$, contradicting its core stability (\Cref{prop:lindahl is core}). For each $j \in M$, let $t$ be chosen minimal such that $\ell_j^1 + \cdots + \ell_j^{t-1} \le z_j < \ell_j^1 + \cdots + \ell_j^{t}$. Then set $z'_{j^r} = \text{cap}_{j^r} = \ell_j^r$ for $r = 1, \dots, t-1$ and $z'_{j^t} = z_j - (\ell_j^1 + \cdots + \ell_j^{t-1})$, as well as $z'_{j^r} = 0$ for $r = t+1, \dots, k_j$. Then it is easy to see that $u'_i(z') = u_i(z)$ for all $i$, and similarly $u'_i(x') = u_i(x)$ for all $i$, and thus $z'$ is a core deviation to $x'$, a contradiction.
\end{proof}

Thus, by (approximately) solving our convex program on the derived instance $I'$, we can find an (approximate) core allocation for the SPLC instance $I$. %
\section{Conclusion}
\label{sec:conclusion}

We have developed a new class of convex programs that can be used to efficiently compute Lindahl equilibria both in the uncapped and the capped setting. These new programs open up many opportunities for future research.

In the uncapped setting, our new program might lead to new proofs of known results for the well-studied maximum Nash welfare rule. This might include the result about participation incentives of \citet{brandl2022fundingpublicprojects} or the axiomatic characterization of  \citet{GuNe14a}. Perhaps our program could also shed light on the other uses of the Eisenberg--Gale program across statistics, information theory, and medical imaging, as discussed in \Cref{sec:related-work}. For the capped setting, our computability result has implications for the discrete public goods model, because it allows the efficient implementation of the 9.27-approximation to the core obtained by \citet{munagala2022coremultilinear}, rather than having to rely on their 67.37-approximation. \citet{munagala2022coremultilinear} used Lindahl equilibrium as a black box to obtain their approximation result; reasoning about the structure of our convex program might lead to even better bounds.

Since our focus has been on computational questions, we have not considered strategic aspects.  Lindahl equilibrium is well-known to have high informational requirements, and in particular we need to know the truthful valuations of the agents to compute it. Interpreted as a decision rule \citep{gul2020lindahlequilibriumcollectivechoice}, Lindahl equilibrium is not strategyproof and can be manipulated both in a free-riding sense \citep[Section 5.3]{brandl2021distribution}, and in some paradoxical ways \citep[Theorem 3(ii)]{aziz2020fairmixing}, even in the uncapped setting. Manipulability is unavoidable if one desires a Pareto-efficient and core-stable solution, both in the uncapped setting \citep[Theorem 2 and Theorem 3]{brandl2021distribution} and in the capped setting \citep[Theorem 6.2]{bei2024truthful}. These impossibilities apply even for approval (0/1) preferences. For more general linear utilities, strategyproofness is only attainable by dictatorial-type rules \citep{hylland1980lotteries}, even in the uncapped setting.

We leave several interesting technical questions open. Is the optimum of our program unique in utilities? This is known to be true for the uncapped setting, by strict convexity (in utilities) of the Eisenberg--Gale program. Can we develop first-order methods for the capped settings, or derive a natural dynamics converging to an equilibrium? Applying mirror descent to our program does not appear to lead to a nice closed-form update like in the uncapped setting. Are there versions of our convex program that work for non-linear utilities? Finally, can the cap constraint be generalized? For example, one could apply cap constraints on the total spending of \emph{sets} of public goods. This would allow us to model multi-issue and multi-round decision making settings \citep[see, e.g.,][Section 5]{banerjee2023proportionally}. It would also allow us to embed private goods in the model \citep[as in][]{conitzer2017fair}, and potentially connect the notions of Fisher market equilibrium and Lindahl equilibrium. 
\begin{acks}
	We thank the anonymous reviewers at EC 2025 for useful feedback.
	We thank Matthias Greger for suggesting the perturbation strategy used in \Cref{thm:approximate-lindahl}.
	Christian Kroer was supported by the Office of Naval Research awards N00014-22-1-2530 and N00014-23-1-2374, and the National Science Foundation awards IIS-2147361 and IIS-2238960.
	\mbox{Dominik Peters} was supported by the French government under management of Agence Nationale de la Recherche as part of the ``Investissements d’avenir'' program, reference ANR-19-P3IA-0001 (PRAIRIE 3IA Institute) as well as the project ANR-22-CE26-0019 (CITIZENS).
\end{acks}

\bibliographystyle{ACM-Reference-Format}

\begin{thebibliography}{90}


\ifx \showCODEN    \undefined \def \showCODEN     #1{\unskip}     \fi
\ifx \showISBNx    \undefined \def \showISBNx     #1{\unskip}     \fi
\ifx \showISBNxiii \undefined \def \showISBNxiii  #1{\unskip}     \fi
\ifx \showISSN     \undefined \def \showISSN      #1{\unskip}     \fi
\ifx \showLCCN     \undefined \def \showLCCN      #1{\unskip}     \fi
\ifx \shownote     \undefined \def \shownote      #1{#1}          \fi
\ifx \showarticletitle \undefined \def \showarticletitle #1{#1}   \fi
\ifx \showURL      \undefined \def \showURL       {\relax}        \fi
%
%
\providecommand\bibfield[2]{#2}
\providecommand\bibinfo[2]{#2}
\providecommand\natexlab[1]{#1}
\providecommand\showeprint[2][]{arXiv:#2}

\bibitem[Airiau et~al\mbox{.}(2023)]%
        {airiau2023portioning}
\bibfield{author}{\bibinfo{person}{St{\'e}phane Airiau}, \bibinfo{person}{Haris
  Aziz}, \bibinfo{person}{Ioannis Caragiannis}, \bibinfo{person}{Justin
  Kruger}, \bibinfo{person}{J{\'e}r{\^o}me Lang}, {and}
  \bibinfo{person}{Dominik Peters}.} \bibinfo{year}{2023}\natexlab{}.
\newblock \showarticletitle{Portioning using ordinal preferences: Fairness and
  efficiency}.
\newblock \bibinfo{journal}{\emph{Artificial Intelligence}}
  \bibinfo{volume}{314} (\bibinfo{year}{2023}), \bibinfo{pages}{103809}.
\newblock
\href{https://doi.org/10.1016/j.artint.2022.103809}{doi:\nolinkurl{10.1016/j.artint.2022.103809}}


\bibitem[Aziz et~al\mbox{.}(2020)]%
        {aziz2020fairmixing}
\bibfield{author}{\bibinfo{person}{Haris Aziz}, \bibinfo{person}{Anna
  Bogomolnaia}, {and} \bibinfo{person}{Herv\'{e} Moulin}.}
  \bibinfo{year}{2020}\natexlab{}.
\newblock \showarticletitle{Fair mixing: The case of dichotomous preferences}.
\newblock \bibinfo{journal}{\emph{ACM Transactions on Economics and Computation
  (TEAC)}} \bibinfo{volume}{8}, \bibinfo{number}{4}, Article
  \bibinfo{articleno}{18} (\bibinfo{year}{2020}), \bibinfo{numpages}{27}~pages.
\newblock
\href{https://doi.org/10.1145/3417738}{doi:\nolinkurl{10.1145/3417738}}


\bibitem[Aziz et~al\mbox{.}(2017)]%
        {aziz2017ejr}
\bibfield{author}{\bibinfo{person}{Haris Aziz}, \bibinfo{person}{Markus Brill},
  \bibinfo{person}{Vincent Conitzer}, \bibinfo{person}{Edith Elkind},
  \bibinfo{person}{Rupert Freeman}, {and} \bibinfo{person}{Toby Walsh}.}
  \bibinfo{year}{2017}\natexlab{}.
\newblock \showarticletitle{Justified Representation in approval-based
  committee voting}.
\newblock \bibinfo{journal}{\emph{Social Choice and Welfare}}
  \bibinfo{volume}{48}, \bibinfo{number}{2} (\bibinfo{year}{2017}),
  \bibinfo{pages}{461--485}.
\newblock
\href{https://doi.org/10.1007/s00355-016-1019-3}{doi:\nolinkurl{10.1007/s00355-016-1019-3}}


\bibitem[Aziz et~al\mbox{.}(2023a)]%
        {aziz2023bestofbothworlds}
\bibfield{author}{\bibinfo{person}{Haris Aziz}, \bibinfo{person}{Xinhang Lu},
  \bibinfo{person}{Mashbat Suzuki}, \bibinfo{person}{Jeremy Vollen}, {and}
  \bibinfo{person}{Toby Walsh}.} \bibinfo{year}{2023}\natexlab{a}.
\newblock \showarticletitle{Best-of-both-worlds fairness in committee voting}.
  In \bibinfo{booktitle}{\emph{Proceedings of the 19th International Conference
  on Web and Internet Economics (WINE)}}. \bibinfo{pages}{676}.
\newblock
\newblock
\shownote{Full version
  \href{https://arxiv.org/abs/2303.03642}{arXiv:2303.03642}}.


\bibitem[Aziz et~al\mbox{.}(2023b)]%
        {aziz2023peerreview}
\bibfield{author}{\bibinfo{person}{Haris Aziz}, \bibinfo{person}{Evi Micha},
  {and} \bibinfo{person}{Nisarg Shah}.} \bibinfo{year}{2023}\natexlab{b}.
\newblock \showarticletitle{Group fairness in peer review}. In
  \bibinfo{booktitle}{\emph{Advances in Neural Information Processing
  Systems}}, Vol.~\bibinfo{volume}{36}. \bibinfo{pages}{64885--64895}.
\newblock
\urldef\tempurl%
\url{https://proceedings.neurips.cc/paper_files/paper/2023/file/ccba10dd4e80e7276054222bb95d467c-Paper-Conference.pdf}
\showURL{%
\tempurl}


\bibitem[Aziz and Shah(2021)]%
        {aziz2021pbsurvey}
\bibfield{author}{\bibinfo{person}{Haris Aziz} {and} \bibinfo{person}{Nisarg
  Shah}.} \bibinfo{year}{2021}\natexlab{}.
\newblock \showarticletitle{Participatory budgeting: Models and approaches}.
\newblock In \bibinfo{booktitle}{\emph{Pathways Between Social Science and
  Computational Social Science: Theories, Methods, and Interpretations}},
  \bibfield{editor}{\bibinfo{person}{Tam{\'a}s Rudas} {and}
  \bibinfo{person}{G{\'a}bor P{\'e}li}} (Eds.). \bibinfo{publisher}{Springer},
  \bibinfo{pages}{215--236}.
\newblock
\href{https://doi.org/10.1007/978-3-030-54936-7_10}{doi:\nolinkurl{10.1007/978-3-030-54936-7_10}}


\bibitem[Banerjee et~al\mbox{.}(2023)]%
        {banerjee2023proportionally}
\bibfield{author}{\bibinfo{person}{Siddhartha Banerjee},
  \bibinfo{person}{Vasilis Gkatzelis}, \bibinfo{person}{Safwan Hossain},
  \bibinfo{person}{Billy Jin}, \bibinfo{person}{Evi Micha}, {and}
  \bibinfo{person}{Nisarg Shah}.} \bibinfo{year}{2023}\natexlab{}.
\newblock \showarticletitle{Proportionally fair online allocation of public
  goods with predictions}. In \bibinfo{booktitle}{\emph{Proceedings of the 32nd
  International Joint Conference on Artificial Intelligence (IJCAI)}}.
  \bibinfo{pages}{20--28}.
\newblock
\href{https://doi.org/10.24963/ijcai.2023/3}{doi:\nolinkurl{10.24963/ijcai.2023/3}}


\bibitem[Beck(2017)]%
        {beck2017firstorder}
\bibfield{author}{\bibinfo{person}{Amir Beck}.}
  \bibinfo{year}{2017}\natexlab{}.
\newblock \bibinfo{booktitle}{\emph{First-Order Methods in Optimization}}.
\newblock \bibinfo{publisher}{Society for Industrial and Applied Mathematics
  (SIAM)}.
\newblock
\href{https://doi.org/10.1137/1.9781611974997}{doi:\nolinkurl{10.1137/1.9781611974997}}


\bibitem[Bei et~al\mbox{.}(2024)]%
        {bei2024truthful}
\bibfield{author}{\bibinfo{person}{Xiaohui Bei}, \bibinfo{person}{Xinhang Lu},
  {and} \bibinfo{person}{Warut Suksompong}.} \bibinfo{year}{2024}\natexlab{}.
\newblock \showarticletitle{Truthful cake sharing}.
\newblock \bibinfo{journal}{\emph{Social Choice and Welfare}}
  (\bibinfo{year}{2024}), \bibinfo{pages}{1--35}.
\newblock
\href{https://doi.org/10.1007/s00355-023-01503-0}{doi:\nolinkurl{10.1007/s00355-023-01503-0}}


\bibitem[Birnbaum et~al\mbox{.}(2011)]%
        {birnbaum2011distributed}
\bibfield{author}{\bibinfo{person}{Benjamin Birnbaum},
  \bibinfo{person}{Nikhil~R. Devanur}, {and} \bibinfo{person}{Lin Xiao}.}
  \bibinfo{year}{2011}\natexlab{}.
\newblock \showarticletitle{Distributed algorithms via gradient descent for
  {Fisher} markets}. In \bibinfo{booktitle}{\emph{Proceedings of the 12th ACM
  Conference on Electronic Commerce (EC)}}. \bibinfo{pages}{127--136}.
\newblock
\href{https://doi.org/10.1145/1993574.1993594}{doi:\nolinkurl{10.1145/1993574.1993594}}


\bibitem[Bogomolnaia et~al\mbox{.}(2005)]%
        {bms2005}
\bibfield{author}{\bibinfo{person}{Anna Bogomolnaia}, \bibinfo{person}{Herv\'e
  Moulin}, {and} \bibinfo{person}{Richard Stong}.}
  \bibinfo{year}{2005}\natexlab{}.
\newblock \showarticletitle{Collective choice under dichotomous preferences}.
\newblock \bibinfo{journal}{\emph{Journal of Economic Theory}}
  \bibinfo{volume}{122}, \bibinfo{number}{2} (\bibinfo{year}{2005}),
  \bibinfo{pages}{165--184}.
\newblock
\href{https://doi.org/10.1016/j.jet.2004.05.005}{doi:\nolinkurl{10.1016/j.jet.2004.05.005}}


\bibitem[Brandl et~al\mbox{.}(2022)]%
        {brandl2022fundingpublicprojects}
\bibfield{author}{\bibinfo{person}{Florian Brandl}, \bibinfo{person}{Felix
  Brandt}, \bibinfo{person}{Matthias Greger}, \bibinfo{person}{Dominik Peters},
  \bibinfo{person}{Christian Stricker}, {and} \bibinfo{person}{Warut
  Suksompong}.} \bibinfo{year}{2022}\natexlab{}.
\newblock \showarticletitle{Funding public projects: A case for the Nash
  product rule}.
\newblock \bibinfo{journal}{\emph{Journal of Mathematical Economics}}
  \bibinfo{volume}{99} (\bibinfo{year}{2022}), \bibinfo{pages}{102585}.
\newblock
\href{https://doi.org/10.1016/j.jmateco.2021.102585}{doi:\nolinkurl{10.1016/j.jmateco.2021.102585}}


\bibitem[Brandl et~al\mbox{.}(2021)]%
        {brandl2021distribution}
\bibfield{author}{\bibinfo{person}{Florian Brandl}, \bibinfo{person}{Felix
  Brandt}, \bibinfo{person}{Dominik Peters}, {and} \bibinfo{person}{Christian
  Stricker}.} \bibinfo{year}{2021}\natexlab{}.
\newblock \showarticletitle{Distribution rules under dichotomous preferences:
  two out of three ain't bad}. In \bibinfo{booktitle}{\emph{Proceedings of the
  22nd ACM Conference on Economics and Computation (EC)}}.
  \bibinfo{pages}{158--179}.
\newblock
\href{https://doi.org/10.1145/3465456.3467653}{doi:\nolinkurl{10.1145/3465456.3467653}}


\bibitem[Brandt et~al\mbox{.}(2024)]%
        {brandt2024coordinatingcharitabledonations}
\bibfield{author}{\bibinfo{person}{Felix Brandt}, \bibinfo{person}{Matthias
  Greger}, \bibinfo{person}{Erel Segal-Halevi}, {and} \bibinfo{person}{Warut
  Suksompong}.} \bibinfo{year}{2024}\natexlab{}.
\newblock \bibinfo{title}{Coordinating charitable donations}.
\newblock
\showeprint[arxiv]{2305.10286}~[econ.TH]


\bibitem[Caragiannis et~al\mbox{.}(2019)]%
        {caragiannis2019unreasonable}
\bibfield{author}{\bibinfo{person}{Ioannis Caragiannis}, \bibinfo{person}{David
  Kurokawa}, \bibinfo{person}{Herv{\'e} Moulin}, \bibinfo{person}{Ariel~D.
  Procaccia}, \bibinfo{person}{Nisarg Shah}, {and} \bibinfo{person}{Junxing
  Wang}.} \bibinfo{year}{2019}\natexlab{}.
\newblock \showarticletitle{The unreasonable fairness of maximum {Nash}
  welfare}.
\newblock \bibinfo{journal}{\emph{ACM Transactions on Economics and Computation
  (TEAC)}} \bibinfo{volume}{7}, \bibinfo{number}{3} (\bibinfo{year}{2019}),
  \bibinfo{pages}{1--32}.
\newblock
\href{https://doi.org/10.1145/3355902}{doi:\nolinkurl{10.1145/3355902}}


\bibitem[Caragiannis et~al\mbox{.}(2024)]%
        {caragiannis2024proportional}
\bibfield{author}{\bibinfo{person}{Ioannis Caragiannis}, \bibinfo{person}{Evi
  Micha}, {and} \bibinfo{person}{Nisarg Shah}.}
  \bibinfo{year}{2024}\natexlab{}.
\newblock \showarticletitle{Proportional fairness in non-centroid clustering}.
  In \bibinfo{booktitle}{\emph{Advances in Neural Information Processing
  Systems}}, Vol.~\bibinfo{volume}{37}. \bibinfo{pages}{19139--19166}.
\newblock
\urldef\tempurl%
\url{https://proceedings.neurips.cc/paper_files/paper/2024/file/220cbc7435d6a56205c87d73d15d9eda-Paper-Conference.pdf}
\showURL{%
\tempurl}


\bibitem[Chatzigeorgiou(2013)]%
        {chatzigeorgiou2013}
\bibfield{author}{\bibinfo{person}{Ioannis Chatzigeorgiou}.}
  \bibinfo{year}{2013}\natexlab{}.
\newblock \showarticletitle{Bounds on the Lambert function and their
  application to the outage analysis of user cooperation}.
\newblock \bibinfo{journal}{\emph{IEEE Communications Letters}}
  \bibinfo{volume}{17}, \bibinfo{number}{8} (\bibinfo{year}{2013}),
  \bibinfo{pages}{1505--1508}.
\newblock
\href{https://doi.org/10.1109/LCOMM.2013.070113.130972}{doi:\nolinkurl{10.1109/LCOMM.2013.070113.130972}}
\showeprint[arxiv]{1601.04895}


\bibitem[Chaudhury et~al\mbox{.}(2022)]%
        {chaudhury2022federatedlearning}
\bibfield{author}{\bibinfo{person}{Bhaskar~Ray Chaudhury},
  \bibinfo{person}{Linyi Li}, \bibinfo{person}{Mintong Kang},
  \bibinfo{person}{Bo Li}, {and} \bibinfo{person}{Ruta Mehta}.}
  \bibinfo{year}{2022}\natexlab{}.
\newblock \showarticletitle{Fairness in federated learning via core-stability}.
  In \bibinfo{booktitle}{\emph{Advances in Neural Information Processing
  Systems}}, Vol.~\bibinfo{volume}{35}. \bibinfo{pages}{5738--5750}.
\newblock
\urldef\tempurl%
\url{https://proceedings.neurips.cc/paper_files/paper/2022/file/25e92e33ac8c35fd49f394c37f21b6da-Paper-Conference.pdf}
\showURL{%
\tempurl}


\bibitem[Chen et~al\mbox{.}(2019)]%
        {chen2019proportionally}
\bibfield{author}{\bibinfo{person}{Xingyu Chen}, \bibinfo{person}{Brandon
  Fain}, \bibinfo{person}{Liang Lyu}, {and} \bibinfo{person}{Kamesh Munagala}.}
  \bibinfo{year}{2019}\natexlab{}.
\newblock \showarticletitle{Proportionally fair clustering}. In
  \bibinfo{booktitle}{\emph{Proceedings of the 36th International Conference on
  Machine Learning (ICML)}}. PMLR, \bibinfo{pages}{1032--1041}.
\newblock
\urldef\tempurl%
\url{https://proceedings.mlr.press/v97/chen19d.html}
\showURL{%
\tempurl}


\bibitem[Chen and Teng(2009)]%
        {chen2009spending}
\bibfield{author}{\bibinfo{person}{Xi Chen} {and} \bibinfo{person}{Shang-Hua
  Teng}.} \bibinfo{year}{2009}\natexlab{}.
\newblock \showarticletitle{Spending is not easier than trading: On the
  computational equivalence of {Fisher} and {Arrow}-{Debreu} equilibria}. In
  \bibinfo{booktitle}{\emph{Proceedings of the 20th International Symposium on
  Algorithms and Computation (ISAAC)}}. \bibinfo{pages}{647--656}.
\newblock
\href{https://doi.org/10.1007/978-3-642-10631-6_66}{doi:\nolinkurl{10.1007/978-3-642-10631-6_66}}


\bibitem[Cheng et~al\mbox{.}(2020)]%
        {cheng2020groupfairness}
\bibfield{author}{\bibinfo{person}{Yu Cheng}, \bibinfo{person}{Zhihao Jiang},
  \bibinfo{person}{Kamesh Munagala}, {and} \bibinfo{person}{Kangning Wang}.}
  \bibinfo{year}{2020}\natexlab{}.
\newblock \showarticletitle{Group fairness in committee selection}.
\newblock \bibinfo{journal}{\emph{ACM Transactions on Economics and Computation
  (TEAC)}} \bibinfo{volume}{8}, \bibinfo{number}{4}, Article
  \bibinfo{articleno}{23} (\bibinfo{year}{2020}), \bibinfo{numpages}{18}~pages.
\newblock
\href{https://doi.org/10.1145/3417750}{doi:\nolinkurl{10.1145/3417750}}


\bibitem[Cole et~al\mbox{.}(2017)]%
        {cole2017convex}
\bibfield{author}{\bibinfo{person}{Richard Cole}, \bibinfo{person}{Nikhil~R.
  Devanur}, \bibinfo{person}{Vasilis Gkatzelis}, \bibinfo{person}{Kamal Jain},
  \bibinfo{person}{Tung Mai}, \bibinfo{person}{Vijay~V. Vazirani}, {and}
  \bibinfo{person}{Sadra Yazdanbod}.} \bibinfo{year}{2017}\natexlab{}.
\newblock \showarticletitle{Convex program duality, {Fisher} markets, and
  {Nash} social welfare}. In \bibinfo{booktitle}{\emph{Proceedings of the 2017
  ACM Conference on Economics and Computation (EC)}}.
  \bibinfo{pages}{459--460}.
\newblock
\href{https://doi.org/10.1145/3033274.3085109}{doi:\nolinkurl{10.1145/3033274.3085109}}


\bibitem[Conitzer et~al\mbox{.}(2017)]%
        {conitzer2017fair}
\bibfield{author}{\bibinfo{person}{Vincent Conitzer}, \bibinfo{person}{Rupert
  Freeman}, {and} \bibinfo{person}{Nisarg Shah}.}
  \bibinfo{year}{2017}\natexlab{}.
\newblock \showarticletitle{Fair public decision making}. In
  \bibinfo{booktitle}{\emph{Proceedings of the 2017 ACM Conference on Economics
  and Computation (EC)}}. \bibinfo{pages}{629--646}.
\newblock
\href{https://doi.org/10.1145/3033274.3085125}{doi:\nolinkurl{10.1145/3033274.3085125}}


\bibitem[Cover(1984)]%
        {cover1984algorithm}
\bibfield{author}{\bibinfo{person}{Thomas Cover}.}
  \bibinfo{year}{1984}\natexlab{}.
\newblock \showarticletitle{An algorithm for maximizing expected log investment
  return}.
\newblock \bibinfo{journal}{\emph{IEEE Transactions on Information Theory}}
  \bibinfo{volume}{30}, \bibinfo{number}{2} (\bibinfo{year}{1984}),
  \bibinfo{pages}{369--373}.
\newblock
\href{https://doi.org/10.1109/TIT.1984.1056869}{doi:\nolinkurl{10.1109/TIT.1984.1056869}}


\bibitem[Csisz{\'a}r(1974)]%
        {csiszar1974computation}
\bibfield{author}{\bibinfo{person}{Imre Csisz{\'a}r}.}
  \bibinfo{year}{1974}\natexlab{}.
\newblock \showarticletitle{On the computation of rate-distortion functions
  (corresp.)}.
\newblock \bibinfo{journal}{\emph{IEEE Transactions on Information Theory}}
  \bibinfo{volume}{20}, \bibinfo{number}{1} (\bibinfo{year}{1974}),
  \bibinfo{pages}{122--124}.
\newblock
\href{https://doi.org/10.1109/TIT.1974.1055146}{doi:\nolinkurl{10.1109/TIT.1974.1055146}}


\bibitem[Csisz{\'a}r(1984)]%
        {csiszar1984information}
\bibfield{author}{\bibinfo{person}{Imre Csisz{\'a}r}.}
  \bibinfo{year}{1984}\natexlab{}.
\newblock \showarticletitle{Information geometry and alternating minimization
  procedures}.
\newblock \bibinfo{journal}{\emph{Statistics and Decisions}}
  \bibinfo{volume}{Supplemental Issue No. 1} (\bibinfo{year}{1984}),
  \bibinfo{pages}{205--237}.
\newblock
\urldef\tempurl%
\url{https://dominik-peters.de/archive/csiszar1984.pdf}
\showURL{%
\tempurl}


\bibitem[Deligkas et~al\mbox{.}(2024)]%
        {deligkas2024constant}
\bibfield{author}{\bibinfo{person}{Argyrios Deligkas}, \bibinfo{person}{John
  Fearnley}, \bibinfo{person}{Alexandros Hollender}, {and}
  \bibinfo{person}{Themistoklis Melissourgos}.}
  \bibinfo{year}{2024}\natexlab{}.
\newblock \showarticletitle{Constant inapproximability for {Fisher} markets}.
  In \bibinfo{booktitle}{\emph{Proceedings of the 25th ACM Conference on
  Economics and Computation (EC)}}. \bibinfo{pages}{13--39}.
\newblock
\href{https://doi.org/10.1145/3670865.3673533}{doi:\nolinkurl{10.1145/3670865.3673533}}


\bibitem[Devanur et~al\mbox{.}(2008)]%
        {devanur2008market}
\bibfield{author}{\bibinfo{person}{Nikhil~R. Devanur},
  \bibinfo{person}{Christos~H. Papadimitriou}, \bibinfo{person}{Amin Saberi},
  {and} \bibinfo{person}{Vijay~V. Vazirani}.} \bibinfo{year}{2008}\natexlab{}.
\newblock \showarticletitle{Market equilibrium via a primal--dual algorithm for
  a convex program}.
\newblock \bibinfo{journal}{\emph{Journal of the ACM (JACM)}}
  \bibinfo{volume}{55}, \bibinfo{number}{5} (\bibinfo{year}{2008}),
  \bibinfo{pages}{1--18}.
\newblock
\href{https://doi.org/10.1145/1411509.1411512}{doi:\nolinkurl{10.1145/1411509.1411512}}


\bibitem[Dhara and Dutta(2011)]%
        {dhara2011optimality}
\bibfield{author}{\bibinfo{person}{Anulekha Dhara} {and}
  \bibinfo{person}{Joydeep Dutta}.} \bibinfo{year}{2011}\natexlab{}.
\newblock \bibinfo{booktitle}{\emph{Optimality Conditions in Convex
  Optimization: A Finite-Dimensional View}}.
\newblock \bibinfo{publisher}{CRC Press}.
\newblock
\href{https://doi.org/10.1201/b11156}{doi:\nolinkurl{10.1201/b11156}}


\bibitem[Eaves(1971)]%
        {eaves1971basic}
\bibfield{author}{\bibinfo{person}{B.~Curtis Eaves}.}
  \bibinfo{year}{1971}\natexlab{}.
\newblock \showarticletitle{On the basic theorem of complementarity}.
\newblock \bibinfo{journal}{\emph{Mathematical Programming}}
  \bibinfo{volume}{1}, \bibinfo{number}{1} (\bibinfo{year}{1971}),
  \bibinfo{pages}{68--75}.
\newblock
\href{https://doi.org/10.1007/BF01584073}{doi:\nolinkurl{10.1007/BF01584073}}


\bibitem[Eaves(1976)]%
        {eaves1976finite}
\bibfield{author}{\bibinfo{person}{B.~Curtis Eaves}.}
  \bibinfo{year}{1976}\natexlab{}.
\newblock \showarticletitle{A finite algorithm for the linear exchange model}.
\newblock \bibinfo{journal}{\emph{Journal of Mathematical Economics}}
  \bibinfo{volume}{3}, \bibinfo{number}{2} (\bibinfo{year}{1976}),
  \bibinfo{pages}{197--203}.
\newblock
\href{https://doi.org/10.1016/0304-4068(76)90028-8}{doi:\nolinkurl{10.1016/0304-4068(76)90028-8}}


\bibitem[Ebadian et~al\mbox{.}(2024)]%
        {ebadian2024optimized}
\bibfield{author}{\bibinfo{person}{Soroush Ebadian}, \bibinfo{person}{Anson
  Kahng}, \bibinfo{person}{Dominik Peters}, {and} \bibinfo{person}{Nisarg
  Shah}.} \bibinfo{year}{2024}\natexlab{}.
\newblock \showarticletitle{Optimized distortion and proportional fairness in
  voting}.
\newblock \bibinfo{journal}{\emph{ACM Transactions on Economics and Computation
  (TEAC)}} \bibinfo{volume}{12}, \bibinfo{number}{1}, Article
  \bibinfo{articleno}{3} (\bibinfo{year}{2024}), \bibinfo{numpages}{39}~pages.
\newblock
\href{https://doi.org/10.1145/3640760}{doi:\nolinkurl{10.1145/3640760}}


\bibitem[Ebadian and Micha(2025)]%
        {ebadian2025sortition}
\bibfield{author}{\bibinfo{person}{Soroush Ebadian} {and} \bibinfo{person}{Evi
  Micha}.} \bibinfo{year}{2025}\natexlab{}.
\newblock \showarticletitle{Boosting sortition via proportional
  representation}. In \bibinfo{booktitle}{\emph{Proceedings of 24th
  International Conference on Autonomous Agents and Multiagent Systems
  (AAMAS)}}. \bibinfo{pages}{667--675}.
\newblock
\urldef\tempurl%
\url{https://www.ifaamas.org/Proceedings/aamas2025/pdfs/p667.pdf}
\showURL{%
\tempurl}


\bibitem[Eisenberg(1961)]%
        {eisenberg1961aggregation}
\bibfield{author}{\bibinfo{person}{Edmund Eisenberg}.}
  \bibinfo{year}{1961}\natexlab{}.
\newblock \showarticletitle{Aggregation of utility functions}.
\newblock \bibinfo{journal}{\emph{Management Science}} \bibinfo{volume}{7},
  \bibinfo{number}{4} (\bibinfo{year}{1961}), \bibinfo{pages}{337--350}.
\newblock
\href{https://doi.org/10.1287/mnsc.7.4.337}{doi:\nolinkurl{10.1287/mnsc.7.4.337}}


\bibitem[Eisenberg and Gale(1959)]%
        {eisenberg1959consensus}
\bibfield{author}{\bibinfo{person}{Edmund Eisenberg} {and}
  \bibinfo{person}{David Gale}.} \bibinfo{year}{1959}\natexlab{}.
\newblock \showarticletitle{Consensus of subjective probabilities: The
  pari-mutuel method}.
\newblock \bibinfo{journal}{\emph{The Annals of Mathematical Statistics}}
  \bibinfo{volume}{30}, \bibinfo{number}{1} (\bibinfo{year}{1959}),
  \bibinfo{pages}{165--168}.
\newblock
\href{https://doi.org/10.1214/aoms/1177706369}{doi:\nolinkurl{10.1214/aoms/1177706369}}


\bibitem[Fain et~al\mbox{.}(2016)]%
        {fain2016core}
\bibfield{author}{\bibinfo{person}{Brandon Fain}, \bibinfo{person}{Ashish
  Goel}, {and} \bibinfo{person}{Kamesh Munagala}.}
  \bibinfo{year}{2016}\natexlab{}.
\newblock \showarticletitle{The core of the participatory budgeting problem}.
  In \bibinfo{booktitle}{\emph{Proceedings of the 12th International Conference
  on Web and Internet Economics (WINE)}}. \bibinfo{pages}{384--399}.
\newblock
\href{https://doi.org/10.1007/978-3-662-54110-4_27}{doi:\nolinkurl{10.1007/978-3-662-54110-4_27}}


\bibitem[Fain et~al\mbox{.}(2018)]%
        {fain2018indivisiblepublic}
\bibfield{author}{\bibinfo{person}{Brandon Fain}, \bibinfo{person}{Kamesh
  Munagala}, {and} \bibinfo{person}{Nisarg Shah}.}
  \bibinfo{year}{2018}\natexlab{}.
\newblock \showarticletitle{Fair allocation of indivisible public goods}. In
  \bibinfo{booktitle}{\emph{Proceedings of the 2018 ACM Conference on Economics
  and Computation (EC)}}. \bibinfo{pages}{575--592}.
\newblock
\href{https://doi.org/10.1145/3219166.3219174}{doi:\nolinkurl{10.1145/3219166.3219174}}


\bibitem[Faliszewski et~al\mbox{.}(2023)]%
        {faliszewski2023pabulib}
\bibfield{author}{\bibinfo{person}{Piotr Faliszewski},
  \bibinfo{person}{Jaros{\l}aw Flis}, \bibinfo{person}{Dominik Peters},
  \bibinfo{person}{Grzegorz Pierczy{\'n}ski}, \bibinfo{person}{Piotr Skowron},
  \bibinfo{person}{Dariusz Stolicki}, \bibinfo{person}{Stanis{\l}aw Szufa},
  {and} \bibinfo{person}{Nimrod Talmon}.} \bibinfo{year}{2023}\natexlab{}.
\newblock \showarticletitle{Participatory budgeting: Data, tools, and
  analysis}. In \bibinfo{booktitle}{\emph{Proceedings of the 32nd International
  Joint Conference on Artificial Intelligence (IJCAI)}}.
  \bibinfo{pages}{2667--2674}.
\newblock
\href{https://doi.org/10.24963/ijcai.2023/297}{doi:\nolinkurl{10.24963/ijcai.2023/297}}


\bibitem[Foley(1970)]%
        {foley1970lindahl}
\bibfield{author}{\bibinfo{person}{Duncan~K. Foley}.}
  \bibinfo{year}{1970}\natexlab{}.
\newblock \showarticletitle{Lindahl's solution and the core of an economy with
  public goods}.
\newblock \bibinfo{journal}{\emph{Econometrica}} (\bibinfo{year}{1970}),
  \bibinfo{pages}{66--72}.
\newblock
\href{https://doi.org/10.2307/1909241}{doi:\nolinkurl{10.2307/1909241}}


\bibitem[Garg et~al\mbox{.}(2015)]%
        {garg2015complementary}
\bibfield{author}{\bibinfo{person}{Jugal Garg}, \bibinfo{person}{Ruta Mehta},
  \bibinfo{person}{Milind Sohoni}, {and} \bibinfo{person}{Vijay~V. Vazirani}.}
  \bibinfo{year}{2015}\natexlab{}.
\newblock \showarticletitle{A complementary pivot algorithm for market
  equilibrium under separable, piecewise-linear concave utilities}.
\newblock \bibinfo{journal}{\emph{SIAM J. Comput.}} \bibinfo{volume}{44},
  \bibinfo{number}{6} (\bibinfo{year}{2015}), \bibinfo{pages}{1820--1847}.
\newblock
\href{https://doi.org/10.1137/140971002}{doi:\nolinkurl{10.1137/140971002}}


\bibitem[Garg et~al\mbox{.}(2021)]%
        {garg2021markets}
\bibfield{author}{\bibinfo{person}{Nikhil Garg}, \bibinfo{person}{Ashish Goel},
  {and} \bibinfo{person}{Benjamin Plaut}.} \bibinfo{year}{2021}\natexlab{}.
\newblock \showarticletitle{Markets for public decision-making}.
\newblock \bibinfo{journal}{\emph{Social Choice and Welfare}}
  \bibinfo{volume}{56}, \bibinfo{number}{4} (\bibinfo{year}{2021}),
  \bibinfo{pages}{755--801}.
\newblock
\href{https://doi.org/10.1007/s00355-020-01298-4}{doi:\nolinkurl{10.1007/s00355-020-01298-4}}


\bibitem[Greaves and Cotton-Barratt(2024)]%
        {greaves2023bargaining}
\bibfield{author}{\bibinfo{person}{Hilary Greaves} {and} \bibinfo{person}{Owen
  Cotton-Barratt}.} \bibinfo{year}{2024}\natexlab{}.
\newblock \showarticletitle{A bargaining-theoretic approach to moral
  uncertainty}.
\newblock \bibinfo{journal}{\emph{Journal of Moral Philosophy}}
  \bibinfo{volume}{21}, \bibinfo{number}{Issue 1-2} (\bibinfo{year}{2024}),
  \bibinfo{pages}{127--169}.
\newblock
\href{https://doi.org/10.1163/17455243-20233810}{doi:\nolinkurl{10.1163/17455243-20233810}}


\bibitem[Guerdjikova and Nehring(2014)]%
        {GuNe14a}
\bibfield{author}{\bibinfo{person}{Ani Guerdjikova} {and}
  \bibinfo{person}{Klaus Nehring}.} \bibinfo{year}{2014}\natexlab{}.
\newblock \showarticletitle{Weighing experts, weighing sources: The diversity
  value}.
\newblock  (\bibinfo{year}{2014}).
\newblock
\urldef\tempurl%
\url{https://dominik-peters.de/archive/guerdjikova2014.pdf}
\showURL{%
\tempurl}
\newblock
\shownote{Working Paper}.


\bibitem[Gul and Pesendorfer(2025)]%
        {gul2020lindahlequilibriumcollectivechoice}
\bibfield{author}{\bibinfo{person}{Faruk Gul} {and} \bibinfo{person}{Wolfgang
  Pesendorfer}.} \bibinfo{year}{2025}\natexlab{}.
\newblock \showarticletitle{Pseudo Lindahl equilibrium as a collective choice
  rule}.
\newblock \bibinfo{journal}{\emph{The Review of Economic Studies}}
  (\bibinfo{year}{2025}), \bibinfo{pages}{rdaf043}.
\newblock
\href{https://doi.org/10.1093/restud/rdaf043}{doi:\nolinkurl{10.1093/restud/rdaf043}}
\showeprint[arxiv]{2008.09932}~[econ.TH]


\bibitem[Hiriart-Urruty(1982)]%
        {hiriarturruty1982}
\bibfield{author}{\bibinfo{person}{Jean-Baptiste Hiriart-Urruty}.}
  \bibinfo{year}{1982}\natexlab{}.
\newblock \showarticletitle{$\varepsilon$-subdifferential calculus}.
\newblock In \bibinfo{booktitle}{\emph{Convex Analysis and Optimization}},
  \bibfield{editor}{\bibinfo{person}{Jean-Pierre Aubin} {and}
  \bibinfo{person}{Richard~B. Vinter}} (Eds.). \bibinfo{series}{Research Notes
  in Mathematics}, Vol.~\bibinfo{volume}{57}. \bibinfo{publisher}{Pitman},
  \bibinfo{pages}{43--92}.
\newblock
\urldef\tempurl%
\url{https://dominik-peters.de/archive/hiriarturruty1982.pdf}
\showURL{%
\tempurl}


\bibitem[Hiriart-Urruty and Lemar{\'e}chal(2013)]%
        {hiriart2013convex}
\bibfield{author}{\bibinfo{person}{Jean-Baptiste Hiriart-Urruty} {and}
  \bibinfo{person}{Claude Lemar{\'e}chal}.} \bibinfo{year}{2013}\natexlab{}.
\newblock \bibinfo{booktitle}{\emph{Convex Analysis and Minimization Algorithms
  I: Fundamentals}}. \bibinfo{series}{Grundlehren der mathematischen
  Wissenschaften}, Vol.~\bibinfo{volume}{305}.
\newblock \bibinfo{publisher}{Springer}.
\newblock
\href{https://doi.org/10.1007/978-3-662-02796-7}{doi:\nolinkurl{10.1007/978-3-662-02796-7}}


\bibitem[Hylland(1980)]%
        {hylland1980lotteries}
\bibfield{author}{\bibinfo{person}{Aanund Hylland}.}
  \bibinfo{year}{1980}\natexlab{}.
\newblock \bibinfo{title}{Strategyproofness of voting procedures with lotteries
  as outcomes and infinite sets of strategies}.  (\bibinfo{year}{1980}).
\newblock
\urldef\tempurl%
\url{https://dominik-peters.de/archive/hylland1980.pdf}
\showURL{%
\tempurl}
\newblock
\shownote{Unpublished}.


\bibitem[Jiang et~al\mbox{.}(2020)]%
        {jiang2020approximatelystable}
\bibfield{author}{\bibinfo{person}{Zhihao Jiang}, \bibinfo{person}{Kamesh
  Munagala}, {and} \bibinfo{person}{Kangning Wang}.}
  \bibinfo{year}{2020}\natexlab{}.
\newblock \showarticletitle{Approximately stable committee selection}. In
  \bibinfo{booktitle}{\emph{Proceedings of the 52nd Annual ACM SIGACT Symposium
  on Theory of Computing (STOC)}}. \bibinfo{pages}{463--472}.
\newblock
\href{https://doi.org/10.1145/3357713.3384238}{doi:\nolinkurl{10.1145/3357713.3384238}}


\bibitem[Kaneko(1977)]%
        {kaneko1977ratio}
\bibfield{author}{\bibinfo{person}{Mamoru Kaneko}.}
  \bibinfo{year}{1977}\natexlab{}.
\newblock \showarticletitle{The ratio equilibrium and a voting game in a public
  goods economy}.
\newblock \bibinfo{journal}{\emph{Journal of Economic Theory}}
  \bibinfo{volume}{16}, \bibinfo{number}{2} (\bibinfo{year}{1977}),
  \bibinfo{pages}{123--136}.
\newblock
\href{https://doi.org/10.1016/0022-0531(77)90001-1}{doi:\nolinkurl{10.1016/0022-0531(77)90001-1}}


\bibitem[Kellerhals and Peters(2024)]%
        {kellerhals2024proportional}
\bibfield{author}{\bibinfo{person}{Leon Kellerhals} {and}
  \bibinfo{person}{Jannik Peters}.} \bibinfo{year}{2024}\natexlab{}.
\newblock \showarticletitle{Proportional fairness in clustering: A social
  choice perspective}. In \bibinfo{booktitle}{\emph{Advances in Neural
  Information Processing Systems}}, Vol.~\bibinfo{volume}{37}.
  \bibinfo{pages}{111299--111317}.
\newblock
\urldef\tempurl%
\url{https://proceedings.neurips.cc/paper_files/paper/2024/file/c981fd12b1d5703f19bd8289da9fc996-Paper-Conference.pdf}
\showURL{%
\tempurl}


\bibitem[Kelly et~al\mbox{.}(1998)]%
        {kelly1998rate}
\bibfield{author}{\bibinfo{person}{Frank~P. Kelly}, \bibinfo{person}{Aman~K.
  Maulloo}, {and} \bibinfo{person}{David K.~H. Tan}.}
  \bibinfo{year}{1998}\natexlab{}.
\newblock \showarticletitle{Rate control for communication networks: Shadow
  prices, proportional fairness and stability}.
\newblock \bibinfo{journal}{\emph{Journal of the Operational Research Society}}
  \bibinfo{volume}{49}, \bibinfo{number}{3} (\bibinfo{year}{1998}),
  \bibinfo{pages}{237--252}.
\newblock
\href{https://doi.org/10.1057/palgrave.jors.2600523}{doi:\nolinkurl{10.1057/palgrave.jors.2600523}}


\bibitem[Komiya(1988)]%
        {komiya1988}
\bibfield{author}{\bibinfo{person}{Hidetoshi Komiya}.}
  \bibinfo{year}{1988}\natexlab{}.
\newblock \showarticletitle{Elementary proof for {Sion's} minimax theorem}.
\newblock \bibinfo{journal}{\emph{Kodai Mathematical Journal}}
  \bibinfo{volume}{11}, \bibinfo{number}{1} (\bibinfo{year}{1988}),
  \bibinfo{pages}{5 -- 7}.
\newblock
\href{https://doi.org/10.2996/kmj/1138038812}{doi:\nolinkurl{10.2996/kmj/1138038812}}


\bibitem[Lackner and Skowron(2023)]%
        {lackner2023abc}
\bibfield{author}{\bibinfo{person}{Martin Lackner} {and} \bibinfo{person}{Piotr
  Skowron}.} \bibinfo{year}{2023}\natexlab{}.
\newblock \bibinfo{booktitle}{\emph{Multi-Winner Voting with Approval
  Preferences}}.
\newblock \bibinfo{publisher}{Springer}.
\newblock
\href{https://doi.org/10.1007/978-3-031-09016-5}{doi:\nolinkurl{10.1007/978-3-031-09016-5}}


\bibitem[Li et~al\mbox{.}(2018)]%
        {li2018generalconvergencemirror}
\bibfield{author}{\bibinfo{person}{Yen-Huan Li}, \bibinfo{person}{Carlos~A.
  Riofrio}, {and} \bibinfo{person}{Volkan Cevher}.}
  \bibinfo{year}{2018}\natexlab{}.
\newblock \bibinfo{title}{A general convergence result for mirror descent with
  {Armijo} line search}.
\newblock
\showeprint[arxiv]{1805.12232}~[math.OC]


\bibitem[Lindahl(1919)]%
        {lindahl1919just}
\bibfield{author}{\bibinfo{person}{Erik Lindahl}.}
  \bibinfo{year}{1919}\natexlab{}.
\newblock \showarticletitle{Just taxation---a positive solution. Translated
  from German (\textit{Die Gerechtigkeit der Besteuerung}, Lund 1919, Part I,
  Chap. 4, pp. 85--98: Positive Lösung) by Elizabeth Henderson}.
\newblock In \bibinfo{booktitle}{\emph{Classics in the Theory of Public
  Finance}}, \bibfield{editor}{\bibinfo{person}{Richard~A. Musgrave} {and}
  \bibinfo{person}{Alan~T. Peacock}} (Eds.). \bibinfo{publisher}{Palgrave
  Macmillan (1958)}, \bibinfo{pages}{168--176}.
\newblock
\href{https://doi.org/10.1007/978-1-349-23426-4_11}{doi:\nolinkurl{10.1007/978-1-349-23426-4_11}}


\bibitem[Martinez~Mori and Toriello(2025)]%
        {martinez2025cooperation}
\bibfield{author}{\bibinfo{person}{Juan~Carlos Martinez~Mori} {and}
  \bibinfo{person}{Alejandro Toriello}.} \bibinfo{year}{2025}\natexlab{}.
\newblock \showarticletitle{Cooperation and the design of public goods}. In
  \bibinfo{booktitle}{\emph{Proceedings of the 26th ACM Conference on Economics
  and Computation (EC)}}. \bibinfo{pages}{511}.
\newblock
\href{https://doi.org/10.1145/3736252.3742582}{doi:\nolinkurl{10.1145/3736252.3742582}}
\showeprint[arxiv]{2506.05251}~[cs.GT]


\bibitem[Mas-Colell and Silvestre(1989)]%
        {mas1989costshare}
\bibfield{author}{\bibinfo{person}{Andreu Mas-Colell} {and}
  \bibinfo{person}{Joaquim Silvestre}.} \bibinfo{year}{1989}\natexlab{}.
\newblock \showarticletitle{Cost share equilibria: A Lindahlian approach}.
\newblock \bibinfo{journal}{\emph{Journal of Economic Theory}}
  \bibinfo{volume}{47}, \bibinfo{number}{2} (\bibinfo{year}{1989}),
  \bibinfo{pages}{239--256}.
\newblock
\href{https://doi.org/10.1016/0022-0531(89)90019-7}{doi:\nolinkurl{10.1016/0022-0531(89)90019-7}}


\bibitem[Michorzewski et~al\mbox{.}(2020)]%
        {michorzewski2020price}
\bibfield{author}{\bibinfo{person}{Marcin Michorzewski},
  \bibinfo{person}{Dominik Peters}, {and} \bibinfo{person}{Piotr Skowron}.}
  \bibinfo{year}{2020}\natexlab{}.
\newblock \showarticletitle{Price of fairness in budget division and
  probabilistic social choice}. In \bibinfo{booktitle}{\emph{Proceedings of the
  34th AAAI Conference on Artificial Intelligence (AAAI)}}.
  \bibinfo{pages}{2184--2191}.
\newblock
\href{https://doi.org/10.1609/aaai.v34i02.5594}{doi:\nolinkurl{10.1609/aaai.v34i02.5594}}


\bibitem[Moore(2006)]%
        {moore2006generalequilibrium}
\bibfield{author}{\bibinfo{person}{James~C. Moore}.}
  \bibinfo{year}{2006}\natexlab{}.
\newblock \bibinfo{booktitle}{\emph{General Equilibrium and Welfare Economics:
  An Introduction}}.
\newblock \bibinfo{publisher}{Springer}.
\newblock
\href{https://doi.org/10.1007/978-3-540-32223-8}{doi:\nolinkurl{10.1007/978-3-540-32223-8}}


\bibitem[Moulin(2004)]%
        {moulin2004fair}
\bibfield{author}{\bibinfo{person}{Herv{\'e} Moulin}.}
  \bibinfo{year}{2004}\natexlab{}.
\newblock \bibinfo{booktitle}{\emph{Fair Division and Collective Welfare}}.
\newblock \bibinfo{publisher}{MIT Press}.
\newblock
\href{https://doi.org/10.7551/mitpress/2954.001.0001}{doi:\nolinkurl{10.7551/mitpress/2954.001.0001}}


\bibitem[Munagala et~al\mbox{.}(2022a)]%
        {munagala2022auditing}
\bibfield{author}{\bibinfo{person}{Kamesh Munagala}, \bibinfo{person}{Yiheng
  Shen}, {and} \bibinfo{person}{Kangning Wang}.}
  \bibinfo{year}{2022}\natexlab{a}.
\newblock \showarticletitle{Auditing for core stability in participatory
  budgeting}. In \bibinfo{booktitle}{\emph{Proceedings of the 18th
  International Conference on Web and Internet Economics (WINE)}}.
  \bibinfo{pages}{292--310}.
\newblock
\href{https://doi.org/10.1007/978-3-031-22832-2_17}{doi:\nolinkurl{10.1007/978-3-031-22832-2_17}}


\bibitem[Munagala et~al\mbox{.}(2022b)]%
        {munagala2022coremultilinear}
\bibfield{author}{\bibinfo{person}{Kamesh Munagala}, \bibinfo{person}{Yiheng
  Shen}, \bibinfo{person}{Kangning Wang}, {and} \bibinfo{person}{Zhiyi Wang}.}
  \bibinfo{year}{2022}\natexlab{b}.
\newblock \showarticletitle{Approximate core for committee selection via
  multilinear extension and market clearing}. In
  \bibinfo{booktitle}{\emph{Proceedings of the 2022 Annual ACM-SIAM Symposium
  on Discrete Algorithms (SODA)}}. \bibinfo{pages}{2229--2252}.
\newblock
\href{https://doi.org/10.1137/1.9781611977073.89}{doi:\nolinkurl{10.1137/1.9781611977073.89}}


\bibitem[Nash(1950)]%
        {nash1950bargaining}
\bibfield{author}{\bibinfo{person}{John~F. Nash}.}
  \bibinfo{year}{1950}\natexlab{}.
\newblock \showarticletitle{The bargaining problem}.
\newblock \bibinfo{journal}{\emph{Econometrica}} \bibinfo{volume}{18},
  \bibinfo{number}{2} (\bibinfo{year}{1950}), \bibinfo{pages}{155--162}.
\newblock
\href{https://doi.org/10.2307/1907266}{doi:\nolinkurl{10.2307/1907266}}


\bibitem[Nedi{\'c} and Ozdaglar(2009)]%
        {nedic2009approximate}
\bibfield{author}{\bibinfo{person}{Angelia Nedi{\'c}} {and}
  \bibinfo{person}{Asuman Ozdaglar}.} \bibinfo{year}{2009}\natexlab{}.
\newblock \showarticletitle{Approximate primal solutions and rate analysis for
  dual subgradient methods}.
\newblock \bibinfo{journal}{\emph{SIAM Journal on Optimization}}
  \bibinfo{volume}{19}, \bibinfo{number}{4} (\bibinfo{year}{2009}),
  \bibinfo{pages}{1757--1780}.
\newblock
\href{https://doi.org/10.1137/070708111}{doi:\nolinkurl{10.1137/070708111}}


\bibitem[O'Donoghue et~al\mbox{.}(2016)]%
        {donoghue2016conic}
\bibfield{author}{\bibinfo{person}{Brendan O'Donoghue}, \bibinfo{person}{Eric
  Chu}, \bibinfo{person}{Neal Parikh}, {and} \bibinfo{person}{Stephen Boyd}.}
  \bibinfo{year}{2016}\natexlab{}.
\newblock \showarticletitle{Conic optimization via operator splitting and
  homogeneous self-dual embedding}.
\newblock \bibinfo{journal}{\emph{Journal of Optimization Theory and
  Applications}} \bibinfo{volume}{169}, \bibinfo{number}{3}
  (\bibinfo{date}{June} \bibinfo{year}{2016}), \bibinfo{pages}{1042--1068}.
\newblock
\urldef\tempurl%
\url{http://stanford.edu/~boyd/papers/scs.html}
\showURL{%
\tempurl}


\bibitem[Peters(2019)]%
        {peters2019effectivealtruism}
\bibfield{author}{\bibinfo{person}{Dominik Peters}.}
  \bibinfo{year}{2019}\natexlab{}.
\newblock \showarticletitle{Economic design for effective altruism}.
\newblock \bibinfo{journal}{\emph{The Future of Economic Design: The Continuing
  Development of a Field as Envisioned by Its Researchers}}
  (\bibinfo{year}{2019}), \bibinfo{pages}{381--388}.
\newblock
\href{https://doi.org/10.1007/978-3-030-18050-8_53}{doi:\nolinkurl{10.1007/978-3-030-18050-8_53}}


\bibitem[Peters(2025)]%
        {peters2025core}
\bibfield{author}{\bibinfo{person}{Dominik Peters}.}
  \bibinfo{year}{2025}\natexlab{}.
\newblock \showarticletitle{The core of approval-based committee elections with
  few seats}. In \bibinfo{booktitle}{\emph{Proceedings of the 34th
  International Joint Conference on Artificial Intelligence (IJCAI)}}.
\newblock
\showeprint[arxiv]{2501.18304}~[cs.GT]


\bibitem[Peters et~al\mbox{.}(2021b)]%
        {peters2021market}
\bibfield{author}{\bibinfo{person}{Dominik Peters}, \bibinfo{person}{Grzegorz
  Pierczy{\'n}ski}, \bibinfo{person}{Nisarg Shah}, {and} \bibinfo{person}{Piotr
  Skowron}.} \bibinfo{year}{2021}\natexlab{b}.
\newblock \showarticletitle{Market-based explanations of collective decisions}.
  In \bibinfo{booktitle}{\emph{Proceedings of the 35th AAAI Conference on
  Artificial Intelligence (AAAI)}}. \bibinfo{pages}{5656--5663}.
\newblock
\href{https://doi.org/10.1609/aaai.v35i6.16710}{doi:\nolinkurl{10.1609/aaai.v35i6.16710}}


\bibitem[Peters et~al\mbox{.}(2021a)]%
        {peters2021mes}
\bibfield{author}{\bibinfo{person}{Dominik Peters}, \bibinfo{person}{Grzegorz
  Pierczy\'{n}ski}, {and} \bibinfo{person}{Piotr Skowron}.}
  \bibinfo{year}{2021}\natexlab{a}.
\newblock \showarticletitle{Proportional participatory budgeting with additive
  utilities}. In \bibinfo{booktitle}{\emph{Advances in Neural Information
  Processing Systems}}, Vol.~\bibinfo{volume}{34}.
  \bibinfo{pages}{12726--12737}.
\newblock
\urldef\tempurl%
\url{https://proceedings.neurips.cc/paper_files/paper/2021/file/69f8ea31de0c00502b2ae571fbab1f95-Paper.pdf}
\showURL{%
\tempurl}


\bibitem[Peters and Skowron(2020)]%
        {peters2020welfarism}
\bibfield{author}{\bibinfo{person}{Dominik Peters} {and} \bibinfo{person}{Piotr
  Skowron}.} \bibinfo{year}{2020}\natexlab{}.
\newblock \showarticletitle{Proportionality and the limits of welfarism}. In
  \bibinfo{booktitle}{\emph{Proceedings of the 21st ACM Conference on Economics
  and Computation (EC)}}. \bibinfo{pages}{793--794}.
\newblock
\newblock
\shownote{Full version
  arXiv:\href{https://arxiv.org/abs/1911.11747}{1911.11747}}.


\bibitem[Pierczy{\'n}ski and Skowron(2022)]%
        {pierczynski2022core}
\bibfield{author}{\bibinfo{person}{Grzegorz Pierczy{\'n}ski} {and}
  \bibinfo{person}{Piotr Skowron}.} \bibinfo{year}{2022}\natexlab{}.
\newblock \showarticletitle{Core-stable committees under restricted domains}.
  In \bibinfo{booktitle}{\emph{Proceedings of the 18th International Conference
  on Web and Internet Economics (WINE)}}. \bibinfo{pages}{311--329}.
\newblock
\href{https://doi.org/10.1007/978-3-031-22832-2_18}{doi:\nolinkurl{10.1007/978-3-031-22832-2_18}}


\bibitem[Rey and Maly(2023)]%
        {rey2023comsocPBsurvey}
\bibfield{author}{\bibinfo{person}{Simon Rey} {and} \bibinfo{person}{Jan
  Maly}.} \bibinfo{year}{2023}\natexlab{}.
\newblock \bibinfo{title}{The (computational) social choice take on indivisible
  participatory budgeting}.
\newblock
\showeprint[arxiv]{2303.00621}~[cs.GT]
\urldef\tempurl%
\url{https://arxiv.org/abs/2303.00621}
\showURL{%
\tempurl}


\bibitem[Rockafellar(1970)]%
        {rockafellar1970}
\bibfield{author}{\bibinfo{person}{Ralph~Tyrell Rockafellar}.}
  \bibinfo{year}{1970}\natexlab{}.
\newblock \bibinfo{booktitle}{\emph{Convex Analysis}}.
  \bibinfo{series}{Princeton Mathematical Series}, Vol.~\bibinfo{volume}{28}.
\newblock \bibinfo{publisher}{Princeton University Press}.
\newblock
\href{https://doi.org/10.1515/9781400873173}{doi:\nolinkurl{10.1515/9781400873173}}


\bibitem[Ruszczynski(2011)]%
        {ruszczynski2011nonlinear}
\bibfield{author}{\bibinfo{person}{Andrzej Ruszczynski}.}
  \bibinfo{year}{2011}\natexlab{}.
\newblock \bibinfo{booktitle}{\emph{Nonlinear Optimization}}.
\newblock \bibinfo{publisher}{Princeton University Press}.
\newblock
\href{https://doi.org/10.2307/j.ctvcm4hcj}{doi:\nolinkurl{10.2307/j.ctvcm4hcj}}


\bibitem[Samuelson(1954)]%
        {samuelson1954pure}
\bibfield{author}{\bibinfo{person}{Paul~A. Samuelson}.}
  \bibinfo{year}{1954}\natexlab{}.
\newblock \showarticletitle{The pure theory of public expenditure}.
\newblock \bibinfo{journal}{\emph{The Review of Economics and Statistics}}
  \bibinfo{volume}{36}, \bibinfo{number}{4} (\bibinfo{year}{1954}),
  \bibinfo{pages}{387--389}.
\newblock
\href{https://doi.org/10.2307/1925895}{doi:\nolinkurl{10.2307/1925895}}


\bibitem[Shmyrev(1983)]%
        {shmyrev1983approach}
\bibfield{author}{\bibinfo{person}{Vadim~I. Shmyrev}.}
  \bibinfo{year}{1983}\natexlab{}.
\newblock \showarticletitle{On an approach to the determination of equilibrium
  in elementary exchange models}. In \bibinfo{booktitle}{\emph{Doklady Akademii
  Nauk SSSR}}, Vol.~\bibinfo{volume}{268:5}. \bibinfo{pages}{1062--1066}.
\newblock
\urldef\tempurl%
\url{https://www.mathnet.ru/eng/dan10141}
\showURL{%
\tempurl}
\newblock
\shownote{(In Russian)}.


\bibitem[Shmyrev(2009)]%
        {shmyrev2009algorithm}
\bibfield{author}{\bibinfo{person}{Vadim~I. Shmyrev}.}
  \bibinfo{year}{2009}\natexlab{}.
\newblock \showarticletitle{An algorithm for finding equilibrium in the linear
  exchange model with fixed budgets}.
\newblock \bibinfo{journal}{\emph{Journal of Applied and Industrial
  Mathematics}} \bibinfo{volume}{3}, \bibinfo{number}{4}
  (\bibinfo{year}{2009}), \bibinfo{pages}{505}.
\newblock
\href{https://doi.org/10.1134/S1990478909040097}{doi:\nolinkurl{10.1134/S1990478909040097}}


\bibitem[Strodiot et~al\mbox{.}(1983)]%
        {strodiot1983epsilon}
\bibfield{author}{\bibinfo{person}{Jean-Jacques Strodiot},
  \bibinfo{person}{V.~Hien Nguyen}, {and} \bibinfo{person}{Norbert Heukemes}.}
  \bibinfo{year}{1983}\natexlab{}.
\newblock \showarticletitle{$\varepsilon$-optimal solutions in
  nondifferentiable convex programming and some related questions}.
\newblock \bibinfo{journal}{\emph{Mathematical Programming}}
  \bibinfo{volume}{25} (\bibinfo{year}{1983}), \bibinfo{pages}{307--328}.
\newblock
\href{https://doi.org/10.1007/BF02594782}{doi:\nolinkurl{10.1007/BF02594782}}


\bibitem[Suzuki and Vollen(2024)]%
        {suzuki2024maxflow}
\bibfield{author}{\bibinfo{person}{Mashbat Suzuki} {and}
  \bibinfo{person}{Jeremy Vollen}.} \bibinfo{year}{2024}\natexlab{}.
\newblock \showarticletitle{Maximum flow is fair: A network flow approach to
  committee voting}. In \bibinfo{booktitle}{\emph{Proceedings of the 25th ACM
  Conference on Economics and Computation (EC)}}. \bibinfo{pages}{964--983}.
\newblock
\href{https://doi.org/10.1145/3670865.3673603}{doi:\nolinkurl{10.1145/3670865.3673603}}


\bibitem[Uzawa(1958)]%
        {uzawa1958iterative}
\bibfield{author}{\bibinfo{person}{Hirofumi Uzawa}.}
  \bibinfo{year}{1958}\natexlab{}.
\newblock \showarticletitle{Iterative methods for concave programming}.
\newblock In \bibinfo{booktitle}{\emph{Studies in Linear and Non-Linear
  Programming}}, \bibfield{editor}{\bibinfo{person}{K.~Arrow},
  \bibinfo{person}{L.~Hurwicz}, {and} \bibinfo{person}{H.~Uzawa}} (Eds.).
  \bibinfo{publisher}{Stanford University Press}, \bibinfo{pages}{154--165}.
\newblock
\href{https://doi.org/10.1017/CBO9780511664496.011}{doi:\nolinkurl{10.1017/CBO9780511664496.011}}
\newblock
\shownote{(reprint) \url{https://dominik-peters.de/archive/uzawa1958.pdf}}.


\bibitem[van~den Nouweland(2015)]%
        {vandennouweland2015}
\bibfield{author}{\bibinfo{person}{Anne van~den Nouweland}.}
  \bibinfo{year}{2015}\natexlab{}.
\newblock \showarticletitle{Lindahl and equilibrium}.
\newblock In \bibinfo{booktitle}{\emph{Individual and Collective Choice and
  Social Welfare: Essays in Honor of Nick Baigent}},
  \bibfield{editor}{\bibinfo{person}{Constanze Binder}, \bibinfo{person}{Giulio
  Codognato}, \bibinfo{person}{Miriam Teschl}, {and} \bibinfo{person}{Yongsheng
  Xu}} (Eds.). \bibinfo{publisher}{Springer}, \bibinfo{pages}{335--362}.
\newblock
\href{https://doi.org/10.1007/978-3-662-46439-7_18}{doi:\nolinkurl{10.1007/978-3-662-46439-7_18}}


\bibitem[Vardi and Lee(1993)]%
        {vardi1993image}
\bibfield{author}{\bibinfo{person}{Yehuda Vardi} {and} \bibinfo{person}{D.
  Lee}.} \bibinfo{year}{1993}\natexlab{}.
\newblock \showarticletitle{From image deblurring to optimal investments:
  Maximum likelihood solutions for positive linear inverse problems}.
\newblock \bibinfo{journal}{\emph{Journal of the Royal Statistical Society
  Series B: Statistical Methodology}} \bibinfo{volume}{55}, \bibinfo{number}{3}
  (\bibinfo{year}{1993}), \bibinfo{pages}{569--598}.
\newblock
\href{https://doi.org/10.1111/j.2517-6161.1993.tb01925.x}{doi:\nolinkurl{10.1111/j.2517-6161.1993.tb01925.x}}


\bibitem[Vardi et~al\mbox{.}(1985)]%
        {vardi1985statistical}
\bibfield{author}{\bibinfo{person}{Yehuda Vardi}, \bibinfo{person}{Larry~A.
  Shepp}, {and} \bibinfo{person}{Linda Kaufman}.}
  \bibinfo{year}{1985}\natexlab{}.
\newblock \showarticletitle{A statistical model for positron emission
  tomography}.
\newblock \bibinfo{journal}{\emph{Journal of the American statistical
  Association}} \bibinfo{volume}{80}, \bibinfo{number}{389}
  (\bibinfo{year}{1985}), \bibinfo{pages}{8--20}.
\newblock
\href{https://doi.org/10.1080/01621459.1985.10477119}{doi:\nolinkurl{10.1080/01621459.1985.10477119}}


\bibitem[Vazirani(2012)]%
        {vazirani2012rationalconvexprogram}
\bibfield{author}{\bibinfo{person}{Vijay~V. Vazirani}.}
  \bibinfo{year}{2012}\natexlab{}.
\newblock \showarticletitle{The notion of a rational convex program, and an
  algorithm for the {Arrow}-{Debreu} {Nash} bargaining game}.
\newblock \bibinfo{journal}{\emph{Journal of the ACM (JACM)}}
  \bibinfo{volume}{59}, \bibinfo{number}{2} (\bibinfo{year}{2012}),
  \bibinfo{pages}{1--36}.
\newblock
\href{https://doi.org/10.1145/2160158.2160160}{doi:\nolinkurl{10.1145/2160158.2160160}}


\bibitem[Vazirani and Yannakakis(2011)]%
        {vazirani2011separablePLC}
\bibfield{author}{\bibinfo{person}{Vijay~V. Vazirani} {and}
  \bibinfo{person}{Mihalis Yannakakis}.} \bibinfo{year}{2011}\natexlab{}.
\newblock \showarticletitle{Market equilibrium under separable,
  piecewise-linear, concave utilities}.
\newblock \bibinfo{journal}{\emph{Journal of the ACM (JACM)}}
  \bibinfo{volume}{58}, \bibinfo{number}{3} (\bibinfo{year}{2011}),
  \bibinfo{pages}{1--25}.
\newblock
\href{https://doi.org/10.1145/1970392.1970394}{doi:\nolinkurl{10.1145/1970392.1970394}}


\bibitem[Vishnoi(2021)]%
        {vishnoi2021algorithms}
\bibfield{author}{\bibinfo{person}{Nisheeth~K. Vishnoi}.}
  \bibinfo{year}{2021}\natexlab{}.
\newblock \bibinfo{booktitle}{\emph{Algorithms for Convex Optimization}}.
\newblock \bibinfo{publisher}{Cambridge University Press}.
\newblock
\href{https://doi.org/10.1017/9781108699211}{doi:\nolinkurl{10.1017/9781108699211}}
\newblock
\shownote{\url{https://convex-optimization.github.io/}}.


\bibitem[Wu and Zhang(2007)]%
        {wu2007proportional}
\bibfield{author}{\bibinfo{person}{Fang Wu} {and} \bibinfo{person}{Li Zhang}.}
  \bibinfo{year}{2007}\natexlab{}.
\newblock \showarticletitle{Proportional response dynamics leads to market
  equilibrium}. In \bibinfo{booktitle}{\emph{Proceedings of the 39th Annual ACM
  Symposium on Theory of Computing (STOC)}}. \bibinfo{pages}{354--363}.
\newblock
\href{https://doi.org/10.1145/1250790.1250844}{doi:\nolinkurl{10.1145/1250790.1250844}}


\bibitem[Zhang(2011)]%
        {zhang2011proportional}
\bibfield{author}{\bibinfo{person}{Li Zhang}.} \bibinfo{year}{2011}\natexlab{}.
\newblock \showarticletitle{Proportional response dynamics in the {Fisher}
  market}.
\newblock \bibinfo{journal}{\emph{Theoretical Computer Science}}
  \bibinfo{volume}{412}, \bibinfo{number}{24} (\bibinfo{year}{2011}),
  \bibinfo{pages}{2691--2698}.
\newblock
\href{https://doi.org/10.1016/j.tcs.2010.06.021}{doi:\nolinkurl{10.1016/j.tcs.2010.06.021}}


\bibitem[Zhao(2023)]%
        {zhao2023convergence}
\bibfield{author}{\bibinfo{person}{Renbo Zhao}.}
  \bibinfo{year}{2023}\natexlab{}.
\newblock \showarticletitle{Convergence rate analysis of the multiplicative
  gradient method for {PET}-type problems}.
\newblock \bibinfo{journal}{\emph{Operations Research Letters}}
  \bibinfo{volume}{51}, \bibinfo{number}{1} (\bibinfo{year}{2023}),
  \bibinfo{pages}{26--32}.
\newblock
\href{https://doi.org/10.1016/j.orl.2022.11.010}{doi:\nolinkurl{10.1016/j.orl.2022.11.010}}


\bibitem[Zhao and Freund(2023)]%
        {zhao2023analysis}
\bibfield{author}{\bibinfo{person}{Renbo Zhao} {and} \bibinfo{person}{Robert~M.
  Freund}.} \bibinfo{year}{2023}\natexlab{}.
\newblock \showarticletitle{Analysis of the {Frank}--{Wolfe} method for convex
  composite optimization involving a logarithmically-homogeneous barrier}.
\newblock \bibinfo{journal}{\emph{Mathematical Programming}}
  \bibinfo{volume}{199}, \bibinfo{number}{1} (\bibinfo{year}{2023}),
  \bibinfo{pages}{123--163}.
\newblock
\href{https://doi.org/10.1007/s10107-022-01820-9}{doi:\nolinkurl{10.1007/s10107-022-01820-9}}


\end{thebibliography}

\newpage
\appendix
\counterwithin{lemma}{section} %

\section{Convex Conjugates and Subdifferentials of Some Useful Functions}
\label{app:proofs}

\conjugatemaxexp*
\begin{proof}
	We compute the convex conjugate of $g$ using standard formulas for the conjugate of a separable function, rescalings of a function, and the exponential function \citep[see, e.g.,][Sections 4.3 and 4.4]{beck2017firstorder}. We also use Sion's minimax theorem \citep{komiya1988} which states that if $X$ is convex, $Y$ is convex and compact, and $f$ is a real-valued function on $X \times Y$ that is concave in its first argument and convex in its second argument, then $\sup_{x\in X}\min_{y\in Y} f(x,y) = \min_{y\in Y} \sup_{x\in X} f(x,y)$. Finally, given $\beta = (\beta_{ij})_{i \in N, j \in M}$, we write $\beta_j = (\beta_{i,j})_{i \in N}$.
	
	Putting all of this together, we derive that
	{\allowdisplaybreaks
		\begin{align*}
			g^*(\beta) &= \sup_{q}\, \langle \beta, q \rangle - g(q) \\
			&= \sup_{q} \min_{j \in M}\, \langle \beta, q \rangle - B \cdot g_j(q_j) \tag{definition of $g$} \\
			&= \sup_{q} \min_{x \in B \cdot \Delta^{m}} \langle \beta, q \rangle - \sum_{j \in M} x_j \cdot g_j(q_j) \tag{minimum attained at a vertex} \\
			&= \min_{x \in B \cdot \Delta^{m}} \sup_{q}\, \langle \beta, q \rangle - \sum_{j \in M} x_j \cdot g_j(q_j), \tag{Sion's minimax theorem} \\
			&= \min_{x \in B \cdot \Delta^{m}} \sum_{j\in M}\sup_{q_j}\, \langle \beta_j, q_j \rangle - x_j \cdot g_j(q_j), \tag{conjugate of a separable function} \\
			&= \min_{x \in B \cdot \Delta^{m}} \sum_{j \in M} x_j \cdot g_j^*(\beta_j/x_j) \tag{$(\alpha f)^*(y) = \alpha f^*(y/\alpha)$} \\
			&= \min_{x \in B \cdot \Delta^{m}} \sum_{j \in M} x_j \cdot \left( \sum_{i\in N} \tfrac{\beta_{ij}}{x_j} \log \tfrac{\beta_{ij}}{x_j} - \tfrac{\beta_{ij}}{x_j} \right) \tag{$\exp^*(y) = y \log y - y$} \\
			&= \min_{x \in B \cdot \Delta^{m}} \sum_{ij} \beta_{ij} \log \tfrac{\beta_{ij}}{x_j} - \beta_{ij} \\
			&= \sum_{ij} \beta_{ij} \log \tfrac{\beta_{ij}}{\tilde x_j(\beta)} - \beta_{ij}, \tag{\Cref{lem:simple simplex optimizations}(b)}
	\end{align*}}
	as required.
\end{proof}

In order to prove \cref{lem:shmyrev subdiff} we will need the following lemma regarding the convex conjugate of $h$.
\begin{lemma}
	\label{lem:convex-conjugate-h}
	Let $h : \mathbb{R}^n_{\ge 0} \to \mathbb R$ be the function given by $h(x) = \sum_i x_i \log\frac{x_i}{\sum_k x_k}$ for $x \neq 0$ and $h(0) = 0$ by the convention that $0 \log \frac00 = 0$. Its convex conjugate $h^*$ is the indicator of the set $\{ y : \sum_i e^{y_i} \leq 1 \}$, so $h^*(y) = 0$ if $\sum_i e^{y_i} \leq 1$ and $h^*(y) = +\infty$ otherwise.
\end{lemma}
\begin{proof}
	Let $\varphi(p) = \sum_i p_i \log p_i$ be the negative entropy function, and recall that the convex conjugate of the negative entropy function is the log-sum-exp function $\varphi^*(y) = \log \sum_i e^{y_i}$ \citep[see, e.g., ][Section 4.4.8]{beck2017firstorder}.
	Note that $h(x) = \|x\|_1 \varphi(x/\|x\|_1)$.
	Thus, writing $S = \|x\|_1$, we have
	\begin{align*}
		h^*(y) 
		&= \sup_{x \in \mathbb{R}^n_{\ge 0}} \langle y, x \rangle - h(x) \tag{definition of the convex conjugate} \\		
		&= \sup_{S \ge 0} \sup_{p \in \Delta^n} \langle y, Sp \rangle - S \varphi(p) \\
		&= \sup_{S \ge 0} S \sup_{p \in \Delta^n} \left( \langle y, p \rangle - \varphi(p) \right)  \\
		&= \sup_{S \ge 0} S \cdot \varphi^*(y) \tag{definition of the convex conjugate} \\
		&= \textstyle \sup_{S \ge 0} S \log \sum_i e^{y_i}.
	\end{align*}
	Thus, $h^*(y) = 0$ if $\sum_i e^{y_i} \leq 1$ (with the supremum attained at $S = 0$) and $h^*(y) = +\infty$ otherwise (letting $S \to +\infty$).
\end{proof}

\shmyrevsubdiff*
\begin{proof}
	We can write $f(b) = -\sum_{i\in N, j\in M_i} b_{ij} \log v_{ij} + \sum_{j \in M} h(b_j)$ where $b_j$ is the vector $(b_{ij})_{i \in N_j}$ and $h$ is the function defined as $h(x) = \sum_i x_i \log\frac{x_i}{\sum_k x_k}$ when all $x_i$ are non-negative, $h(0) = 0$ by the convention that $0 \log \frac00 = 0$, and $+\infty$ otherwise.
	
	We begin by computing the subdifferential of $h$, using that its convex conjugate $h^*(y)$ is $h^*(y) = 0$ if $\sum_i e^{y_i} \leq 1$ and $h^*(y) = +\infty$ otherwise, as shown in \Cref{lem:convex-conjugate-h}.
	Hence
	\begin{align*}
		\partial h(x) = \{ g : h(x) + h^*(g) \le \langle x, g\rangle \} = \{ g : \textstyle\sum_i e^{g_i} \le 1 \text{ and } h(x) \le \langle x, g\rangle \}.
	\end{align*}
	For $x = 0$, we have $h(x) \le \langle x, g \rangle$ for all $g$ since $h(0) = 0$. Hence $\partial h(0) = \{ g : \sum_i e^{g_i} \le 1 \}$. For $x \in \mathbb{R}^n_{\ge 0}$ such that $x_i > 0$ for all $i$, note that $h$ is differentiable, so $\partial h(x) = \{ (\log \frac{x_i}{\sum_k x_k})_{i = 1}^n \}$. For $x \in \mathbb{R}^n_{\ge 0}$ such that $x \neq 0$ but $x_i = 0$ for some $i$, we have $\partial h(x) = \emptyset$ since if $g \in \partial h(x)$, then $g_j = \log \frac{x_i}{\sum_k x_k}$ for every $j$ with $x_j > 0$ (like in the fully interior case), which gives that $\sum_{j : x_j > 0} e^{g_i} = 1$ which would force $g_i = -\infty$ in order to guarantee that $\sum_i e^{g_i} \le 1$.
	
	From this characterization of the subdifferential of $h$, the subdifferential of $f$ can be determined immediately because the subdifferential of a sum of convex functions is equal to the sum of the subdifferentials and noting that $\partial (-b_{ij} \log v_{ij}) = \{-\log v_{ij}\}$.
\end{proof}

\section{Approximately Optimal Solutions}
\label{app:approximations}

In this section, we will study properties of solutions that are $\epsilon$-optimal.
Given the problem of minimizing a function $f$ over some feasible set, we say that a feasible solution $x$  is \emph{$\epsilon$-optimal} if $f(x) \le f(y) + \epsilon$ for all feasible solutions $y$.

Our main tool for this analysis is a version of the KKT theorem for \emph{approximately} optimal solutions. It is stated in terms of the $\epsilon$-subdifferential of a convex function, which is defined as
\begin{align*}
	\partial_{\epsilon} f(x) 
	&= \{ g \in \mathbb{R}^n : f(y) \ge f(x) + \langle g, y - x \rangle - \epsilon \text{ for all } y \in \mathbb{R}^n \} \\
	&= \{ g \in \mathbb{R}^n : f(x) + f^*(g) \le \langle g, x \rangle + \epsilon \}
\end{align*}
where the second line gives a standard equivalent definition using the convex conjugate $f^*$ of $f$.

\begin{theorem}[\citealp{strodiot1983epsilon}, Thm 2.4 and Thm 3.2]
	\label{thm:epsilon-kkt}
	Let $x^*$ be an $\epsilon$-optimal solution to the program
	\[ \textup{minimize } f(x) \textup{ subject to } h_i(x) \le 0 \textup{ for $i = 1, \dots, m$} \]
	where $f : \mathbb{R}^n \to (-\infty, \infty]$ and $h_i : \mathbb{R}^n \to \mathbb{R}$, $i = 1, \dots, m$, are proper convex functions. 
	Assume the program satisfies the Slater's constraint qualification.
	Then there exist multipliers $\lambda_1, \dots, \lambda_m \ge 0$ and error values $\epsilon_0, \epsilon_1, \dots, \epsilon_m \ge 0$ such that
	\[
	\textstyle
	0 \in \partial_{\epsilon_0} f(x^*) + \sum_{i = 1}^m \lambda_i \partial_{\epsilon_i} h_i(x^*)
	\quad
	\text{and}
	\quad
	-\epsilon + \sum_{i = 0}^m \epsilon_i \le \sum_{i = 1}^m \lambda_i h_i(x^*) \le 0.
	\]
	In addition, the multipliers $(\lambda_1, \dots, \lambda_m)$ form an $\epsilon$-optimal solution to the dual program
	\[
	\textup{maximize } \inf\{f(x) + \langle \lambda, h(x) \rangle : x \in \mathbb{R}^n \} \textup{ subject to } \lambda_j \ge 0 \textup{ for $j = 1, \dots, m$}
	\]
	where $h(x) = (h_1(x), \dots, h_m(x))$.
\end{theorem}

It will be useful to have an upper bound on the size of the multipliers provided by this KKT theorem. The following bound is similar to results of \citet[Section 2]{uzawa1958iterative}, as well as \citet[p. 313, Remark 2.3.3]{hiriart2013convex} and \citet[Lemma 1]{nedic2009approximate}, but ours applies to multipliers associated with \emph{approximate} solutions. It shows that the optimal KKT multiplier of a constraint ``$h_i \le 0$'' is bounded if we can identify a feasible solution $\bar x$ that has a good objective function value $f(\bar x)$ and under which the constraint has enough slack $-h_i(\bar x)$.

\begin{lemma}
	\label{lem:multiplier-bound}
	Assume that the optimization program in \Cref{thm:epsilon-kkt} has finite optimum objective value $f^*$, and that it satisfies Slater's constraint qualification. Let $\bar x$ be a feasible solution, and consider one of the constraints $h_i(x) \le 0$ of the optimization program. Let $\lambda = (\lambda_1, \dots, \lambda_m)$ be the multipliers guaranteed by \Cref{thm:epsilon-kkt} for an $\epsilon$-optimal solution. Then
	\[
	\lambda_i \le \frac{f(\bar x) - f^* + \epsilon}{-h_i(\bar x)}.
	\]
\end{lemma}
\begin{proof}
	According to \Cref{thm:epsilon-kkt}, the multipliers $\lambda \ge 0$ form an $\epsilon$-optimal solution to the dual program. Write $g(\lambda) = \inf\{f(x) + \langle \lambda, h(x) \rangle : x \in \mathbb{R}^n \}$ for its objective function and $g^*$ for its maximum value subject to $\lambda \ge 0$. Then
	\begin{align*}
		f^*&= g^* \tag{strong duality} \\
		&\le  g(\lambda) + \epsilon \tag{$\epsilon$-optimality of $\lambda$} \\
		&= \inf\{f(x) + \langle \lambda, h(x) \rangle : x \in \mathbb{R}^n \} + \epsilon \\
		&\le f(\bar x) + \langle \lambda, h(\bar x) \rangle + \epsilon \\
		&\le f(\bar x) + \lambda_i h_i(\bar x) + \epsilon \tag{feasibility of $\bar x$, so $\lambda_j h_j(\bar x) \le 0$ for all $j$} \\
		&= f(\bar x) - \lambda_i (-h_i(\bar x)) + \epsilon.
	\end{align*}
	Rearranging, the result follows.
\end{proof}

\subsection{Uncapped Public Goods}
\label{app:approximations-uncapped}

In this section, we will show that a solution that in the uncapped public goods section, approximate solutions to our program also approximate maximum Nash social welfare and the proportional fairness criterion.

For the former result, we include here the proof of \Cref{lem:nash-bounded-by-shmyrev} which was used in the main body of the paper to prove \Cref{cor:convergence-rate-nash} (convergence rate of the proportional response dynamics in terms of Nash social welfare).

\nashboundedbyshmyrev*
\begin{proof}
	Write $x = x(b)$ for the allocation induced by $b$, and consider the following optimization problem:
	\begin{equation}
		\begin{aligned}
			\min_{b \ge 0}\quad &f^{\dagger}(b) \defeq - \sum_{\mathclap{i\in N, j\in M_i}} \: b_{ij} \left( \log v_{ij} - \log \left(b_{ij} / x_j\right) \right) \\[3pt]
			\text{s.t.}\quad & \sum_{j \in M} b_{ij} = B_i,\ \forall i\in N
		\end{aligned}
		\label[program]{eq:shmyrev fixed allocation}
	\end{equation}
	This is the same optimization problem as \Cref{eq:shmyrev cp min}, except that we replaced the term $x_j(b)$ by the constant $x_j$. Clearly, $b$ is a feasible solution for \Cref{eq:shmyrev fixed allocation} and $f(b) = f^{\dagger}(b)$. Let $b^*$ be an optimal solution to \Cref{eq:shmyrev fixed allocation}. Then $f^{\dagger}(b^*) \le f^{\dagger}(b)$. Noting that \Cref{eq:shmyrev fixed allocation} decomposes into a separate optimization problem for each $i \in N$, from \Cref{lem:simple simplex optimizations}(a), we can write the optimal solution $b^*$ explicitly as
	\begin{equation}
		\label{eq:characterize-optimum}
		b^*_{ij} = B_i \frac{v_{ij} x_j}{\langle v_i , x\rangle}.
	\end{equation}
	Hence
	\begin{align*}
		f^{\dagger}(b^*) 
		&= - \sum_{i \in N, j \in M_i} B_i \frac{v_{ij} x_j}{\langle v_i , x\rangle} \log \left( \frac{v_{ij}x_j}{B_i \frac{v_{ij} x_j}{\langle v_i , x\rangle}} \right) \\
		&= - \sum_{i \in N} B_i \log \left( \frac{\langle v_i , x\rangle}{B_i} \right) \sum_{j \in M_i} \frac{v_{ij} x_j}{\langle v_i , x\rangle} \\
		&= - \sum_{i \in N} B_i \log \langle v_i , x\rangle + \sum_{i \in N} B_i \log B_i
		= \varphi(x) + \sum_{i \in N} B_i \log B_i.
	\end{align*}
	Hence we have shown that $f(b) = f^{\dagger}(b) \ge f^{\dagger}(b^*) = \varphi(x(b)) + \sum_{i \in N} B_i \log B_i$, as required.
	
	Finally, if $b$ is an optimal solution to \Cref{eq:shmyrev cp min}, then the KKT conditions of that program imply that $b$ satisfies the equations \eqref{eq:characterize-optimum} (see \Cref{thm:lindahl shmyrev capped}), and thus $b^* = b$, so the claimed inequality holds with equality.
\end{proof}

For the latter result, we now prove that any outcome that approximately maximizes Nash social welfare also approximates the proportional fairness criterion.
We will do so via the $\epsilon$-KKT conditions from \Cref{thm:epsilon-kkt}. To apply these conditions, we will need the $\epsilon$-subdifferential of the negative logarithm, since this function appears in the definition of Nash social welfare. We are not aware of a formula for this $\epsilon$-subdifferential having appeared in the literature, so we derive it from scratch.
\begin{lemma}
	For $0 \le \epsilon < \frac12$, the $\epsilon$-subdifferential of the negative logarithm is
	\[
	\partial_{\epsilon}(-\log x)(x)
	=\Bigl[
	\tfrac{1}{x} W_{-1}\bigl(-e^{-(\epsilon+1)}\bigr),
	\tfrac{1}{x} W_{0}\bigl(-e^{-(\epsilon+1)}\bigr)
	\Bigr]
	\subseteq
	\Bigl[
	\tfrac{-1-2\sqrt{\epsilon}}{x},
	\tfrac{-1+\sqrt{2\epsilon}}{x}
	\Bigr],
	\]
	where $W$ refers to the \emph{Lambert $W$ function}.
\end{lemma}
\begin{proof}
	Let $0 \le \epsilon < \frac12$.
	We begin by considering the function $f(u) = u - \log u$ for $u > 0$, depicted below.
	
	\begin{center}
		\begin{tikzpicture}
			\begin{axis}[
				xlabel={$u$},
				ylabel={$f(u) = u - \log(u)$},
				domain=0.01:2,
				samples=100,
				grid=major,
				width=7cm,
				height=4cm,
				axis lines=center,
				enlargelimits=false,
				xmin=0, xmax=2.2,
				ymin=0, ymax=3.4,
				xtick={0,0.5,1,1.5,2},
				ytick={1,2,3},
				tick label style={font=\small},
				label style={font=\small},
				ylabel style={fill=white,xshift=7pt}
				]

				\addplot[blue, thick, smooth] {x - ln(x)};

				\addplot[black, mark=*, mark size=2pt] coordinates {(1,1)};

				\node[above right] at (1,1) {$(1,1)$};
			\end{axis}
		\end{tikzpicture}
	\end{center}
	
	Its derivatives are $f'(u) = 1 - \frac1u$ and $f''(u) = 1/u^2 > 0$. Thus, the function has its minimum at $u = 1$ with $f(1) = 1$, and is strictly decreasing on $(0,1)$ and strictly increasing on $(1, +\infty)$.
	
	We will show that
	\begin{equation}
		\label{eq:log-subdiff-condition-1}
		f(u) \le 1 + \epsilon 
		\iff 
		u \in [-W_{0}(-e^{-(1+\epsilon)}), -W_{-1}(-e^{-(1+\epsilon)})],
	\end{equation}
	where $W$ refers to the Lambert $W$ function (see \href{https://en.wikipedia.org/wiki/Lambert_W_function}{Wikipedia}).
	By continuity of $f$ and looking at its derivatives, the range of $u \in (0, +\infty)$ for which $f(u) \le 1 + \epsilon$ is clearly an interval, whose endpoints are the solutions to the equation $f(u) = 1 + \epsilon$. Note that 
	\begin{equation}
		\label{eq:log-subdiff-condition-2}
		u - \log u = 1 + \epsilon \Leftrightarrow \log u - u = -(1 + \epsilon) \Leftrightarrow u e^{-u} = e^{-(1+\epsilon)} \Leftrightarrow -u e^{-u} = -e^{-(1+\epsilon)}.
	\end{equation}
	
	Now, for given $z \in (-1/e, 0)$, the equation $we^w = z$ has two real solutions which are $W_{-1}(z)$ and $W_0(z)$ by definition of the Lambert $W$ function, with $W_{-1}(z) < -1 < W_{0}(z) < 0$. Thus,  condition \eqref{eq:log-subdiff-condition-2} is equivalent to
	\[
	u \in \{-W_{-1}(-e^{-(1+\epsilon)}), -W_{0}(-e^{-(1+\epsilon)})\},
	\]
	thereby establishing \eqref{eq:log-subdiff-condition-1}.
	
	It is useful to also give a looser version of \eqref{eq:log-subdiff-condition-1} that is phrased in terms of elementary functions. For this, we will use the following inequalities of the logarithm:
	\begin{alignat}{3}
		\log(1-t) &< -t - \tfrac{t^2}{2} \quad && \text{for $t > 0$,} \label{eq:log-inequality-1} \\
		\log(1+y) &< y - \tfrac{y^2}{4}&& \text{for $y \in (0, \sqrt{2})$.}
		\label{eq:log-inequality-2}
	\end{alignat}
	These can be obtained from Taylor expansions or by showing that the functions $h(t) = \log(1-t) - (-t - \frac{t^2}{2})$ and $k(y) = \log(1+y) - (y - \frac{y^2}{4})$ are strictly negative on the relevant domain, by analyzing the monotonicity properties of those functions through their derivatives.
	
	Now note that
	\begin{align*}
		f(1-\sqrt{2\epsilon}) &= (1-\sqrt{2\epsilon}) - \log(1-\sqrt{2\epsilon}) \overset{\eqref{eq:log-inequality-1}}{>} (1-\sqrt{2\epsilon}) + \sqrt{2\epsilon} + \epsilon = 1 + \epsilon \\
		\intertext{and}
		f(1+2\sqrt{\epsilon}) &= 1+2\sqrt{\epsilon} - \log(1+2\sqrt{\epsilon}) \overset{\eqref{eq:log-inequality-2}}{>} 1+2\sqrt{\epsilon} - 2\sqrt{\epsilon} + \epsilon = 1 + \epsilon,
	\end{align*}
	where we could invoke \eqref{eq:log-inequality-2} because $y = 2\sqrt{\epsilon} < 2 \sqrt{1/2} = \sqrt{2}$ since $\epsilon < \frac12$ by assumption.
	
	From this, it follows that
	\begin{equation}
		\label{eq:log-subdiff-condition-3}
		f(u) \le 1 + \epsilon 
		\implies 
		u \in [1 - \sqrt{2\epsilon}, 1 + 2 \sqrt{\epsilon}].
	\end{equation}
	A similar lower bound is obtained by \citet[Theorem 1]{chatzigeorgiou2013}.
	
	We now turn to determining the $\epsilon$-subdifferential of $g(x) = -\log x$. Recall that $g^*(y) = -1 -\log(-y)$. The $\epsilon$-subdifferential can be written using the convex conjugate: for all $x > 0$,
	\begin{align*}
		\partial_{\epsilon}g(x) 
		&= \{ y : g(x) + g^*(y) \le xy + \epsilon \} \tag{definition} \\
		&= \{ y : -\log(x) - 1 - \log(-y) \le xy + \epsilon \} \\
		&= \{ y : -xy - \log(-xy) \le 1 + \epsilon \} \\
		&= \{ y : -xy \in [-W_{0}(-e^{-(1+\epsilon)}), -W_{-1}(-e^{-(1+\epsilon)})] \} \tag{by \eqref{eq:log-subdiff-condition-1}} \\
		&= [\tfrac{1}{x}W_{-1}(-e^{-(1+\epsilon)}), \tfrac{1}{x} W_{0}(-e^{-(1+\epsilon)})] \\
		&\subseteq [\tfrac{-1-2\sqrt{\epsilon}}{x}, \tfrac{-1+\sqrt{2\epsilon}}{x}], \tag{by \eqref{eq:log-subdiff-condition-3}}
	\end{align*}
	as desired.
\end{proof}

Recall from \Cref{sec:convergence-pf-core} that the PF value of an allocation $x$ is
\[
\textup{PF}(x) = \max_{j \in M} \sum_{i \in N} B_i \frac{v_{ij}}{u_i(x)}.
\]
We will show that allocations that approximately maximize the Nash social welfare has a PF value not much higher than 1.
Recall that we write $\varphi(x) = -\sum_{i \in N} B_i \log \langle v_i, x\rangle$ for the (negative) Nash social welfare of an allocation $x$. Let $\varphi^*$ denote Nash social welfare of the optimum allocation (i.e., the minimum).
Write $B_{\min} = \min_{i \in N} B_i$.

\approxnashpf*
\begin{proof}
	To apply the $\epsilon$-KKT conditions, we will need some elementary calculus rules for the $\epsilon$-subdifferential \citep[see also \citealp{dhara2011optimality}, Section 2.6]{hiriarturruty1982}: for $\lambda > 0$, we have $\partial_\epsilon (\lambda f)(x) = \lambda \partial_{\epsilon/\lambda} f (x)$, and for two convex functions $f_1, f_2$ defined on the same domain, we have 
	\begin{align}
	\textstyle
	\partial_\epsilon(f_1 + f_2)(x) = \bigcup_{\epsilon_1, \epsilon_2 \ge 0, \epsilon_1 + \epsilon_2 \le \epsilon} \partial_{\epsilon_1} f_1(x) + \partial_{\epsilon_2} f_2(x).
    \label{eq:eps subdiff summation rule}
	\end{align}
	Finally, for an affine function, the $\epsilon$-subdifferential is singleton: $\partial_{\epsilon}(ax + b)(x) = \{a\}$ for all $\epsilon \ge 0$.
	
	Now consider the following program for maximizing Nash social welfare.
	\begin{equation*}
		\begin{aligned}
			\max_{x, u}&\ \sum_{i\in N} B_i \log u_i \\
			\text{s.t.}& \sum_{j\in M} x_j \leq B \\
			& u_i \le \sum_{j \in M} v_{ij} x_j && \text{for all $i \in N$}\\
			& x_j \ge 0 && \text{for all $j \in M$}
		\end{aligned}
	\end{equation*}
	
	Writing this in the notation of \Cref{thm:epsilon-kkt}, we are looking at the following program:
	\begin{equation*}
		\begin{aligned}
			\min_{x, u}&- \sum_{i\in N} B_i \log u_i \\
			\text{s.t.}& \sum_{j\in M} x_j - B \leq 0 \\
			& u_i - \sum_{j \in M} v_{ij} x_j \le 0 && \text{for all $i \in N$}\\
			& -x_j \le 0 && \text{for all $j \in M$}
		\end{aligned}
	\end{equation*}
	Consider an $\epsilon$-optimal solution $(x^*, u^*)$ to this program, and let $\lambda, (\mu_i)_{i \in N}, (\eta_j)_{j \in M} \ge 0$ be the multipliers promised by \Cref{thm:epsilon-kkt} corresponding to the three types of constraints, together with error values $\epsilon_i$ corresponding to the objective and constraints. However, due to the condition ``$-\epsilon + \sum_{i = 0}^m \epsilon_i \le 0$'' in \Cref{thm:epsilon-kkt}, we can just upper bound all these individual error values by $\epsilon$ and therefore won't need to track the individual ones. 
	
	Write $f(u) = - \sum_{i\in N} B_i \log u_i$ for the objective function. Then
	\begin{align*}
	\partial_{\epsilon} f(u^*) 
	&= 
	\bigcup_{\substack{(\bar \epsilon_i)_{i \in N} \ge 0 \\ \sum_{i \in N} \bar \epsilon_i \le \epsilon}}
	\{ (\bar u_i)_{i \in N} 
	: 
	B_i\tfrac{-1-2\sqrt{\bar \epsilon_i / B_i}}{u_i^*}
	\le \bar u_i \le
	B_i\tfrac{-1+\sqrt{2\bar \epsilon_i / B_i}}{u_i^*}
	\text{ for all $i \in N$}
	\} \\
	&\subseteq 
	\{ (\bar u_i)_{i \in N} 
	: 
	B_i\tfrac{-1-2\sqrt{\epsilon / B_i}}{u_i^*}
	\le \bar u_i \le
	B_i\tfrac{-1+\sqrt{2\epsilon / B_i}}{u_i^*}
	\text{ for all $i \in N$}
	\}.
	\tag{using $\bar \epsilon_i \le \epsilon$}
	\end{align*}
	Thus, from the stationarity condition in \Cref{thm:epsilon-kkt} applied to the $u_i^*$ variable, it follows that there exist $(\bar u_i)_{i \in N}$ 
	such that
	\begin{equation}
		\label{eq:approximate-nash-stationarity-u-1}
		0 = \bar u_i + \mu_i \quad \text{for all $i \in N$,}
	\end{equation}
	where $(\bar u_i)_{i \in N}$ is an element of $\partial_{\epsilon} f(u^*)$ and thus satisfies
	\begin{equation}
		\label{eq:approximate-nash-stationarity-u-2}
		B_i\tfrac{-1-2\sqrt{\epsilon / B_i}}{u_i^*}
		\le \bar u_i \le
		B_i\tfrac{-1+\sqrt{2\epsilon / B_i}}{u_i^*} \quad \text{for all $i \in N$}.
	\end{equation}
	By the stationarity condition applied to the variable $x_j$, we get
	\begin{equation}
		\label{eq:approximate-nash-stationarity-x-1}
		0 = \lambda - \sum_{i \in N} v_{ij} \mu_i - \eta_j \quad \text{for all $j \in M$.}
	\end{equation}
	Rearranging and multiplying by $x_j^*$ we deduce
	\begin{equation}
		\label{eq:approximate-nash-stationarity-x-2}
		\lambda x_j^* = \sum_{i \in N} \mu_i v_{ij} x_j^* + \eta_j x_j^* \quad \text{for all $j \in M$.}
	\end{equation}
	Note that the complementary slackness condition of \Cref{thm:epsilon-kkt} (which in weaker form guarantees ``$-\epsilon \le \sum_{i = 1}^m \lambda_i h_i(x^*)$'') implies $-\epsilon \le \sum_{j \in M} \eta_j(-x_j^*)$ (since we can drop terms corresponding to other constraints because they are all non-positive). Hence $\sum_{j \in M} \eta_j x_j^* \le \epsilon$.
	
	Now, summing \eqref{eq:approximate-nash-stationarity-x-2} over all $j \in M$, we obtain
	\begin{align}
		\lambda B 
		&= \sum_{i \in N} \mu_i u_i^* + \sum_{j \in M} \eta_j x_j^* \nonumber \\
		&\overset{\eqref{eq:approximate-nash-stationarity-u-1}}{=} \sum_{i \in N} (-\bar u_i) u_i^* + \sum_{j \in M} \eta_j x_j^* \nonumber  \\
		&\overset{\eqref{eq:approximate-nash-stationarity-u-2}}{\le} \sum_{i \in N}B_i (1+2\sqrt{\epsilon / B_i}) + \epsilon
		\label{eq:approximate-nash-stationarity-x-3}
	\end{align}
	Thus, from \eqref{eq:approximate-nash-stationarity-x-1}, we get
	\begin{equation}
		\label{eq:approximate-nash-conclusion-1}
		\sum_{i \in N} v_{ij} \mu_i = \lambda - \eta_j \le \lambda \overset{\eqref{eq:approximate-nash-stationarity-x-3}}{\le} \frac1B \sum_{i \in N}\left( B_i (1+2\sqrt{\epsilon / B_i}) + \epsilon \right).
	\end{equation}
	Also
	\begin{equation}
		\label{eq:approximate-nash-conclusion-2}
		\sum_{i \in N} v_{ij} \mu_i 
		\overset{\eqref{eq:approximate-nash-stationarity-u-1}}{=} 
		\sum_{i \in N} v_{ij} (-\bar u_i) 
		\overset{\eqref{eq:approximate-nash-stationarity-u-2}}{\ge}
		\sum_{i \in N} v_{ij} B_i\tfrac{1-\sqrt{2\epsilon / B_i}}{u_i^*} = \sum_{i \in N} B_i \frac{v_{ij}}{u_i^*} (1-\sqrt{2\epsilon / B_i}).
	\end{equation}
	Putting \eqref{eq:approximate-nash-conclusion-1} and \eqref{eq:approximate-nash-conclusion-2} together, we obtain
	\[
	\sum_{i \in N} B_i \frac{v_{ij}}{u_i^*} (1-\sqrt{2\epsilon / B_i})
	\le 
	\frac1B \sum_{i \in N}\left( B_i (1+2\sqrt{\epsilon / B_i}) + \epsilon \right)
	\]
	which we can simplify to 
	\[
	\sum_{i \in N} B_i \frac{v_{ij}}{u_i^*} (1-\sqrt{2\epsilon / B_i})
	\le 
	1 + \sum_{i \in N} \left( 2\sqrt{\epsilon B_i}/B  + \epsilon/B \right).
	\]
	Using $B_i \le B$ and noting that the assumption $\epsilon < \frac{B_{\min}}{2}$ implies that $1-\sqrt{2\epsilon / B_i} > 1-\sqrt{2\epsilon / B_{\min}} > 0$, we obtain
	\[
	\sum_{i \in N} B_i \frac{v_{ij}}{u_i^*}
	\le 
	\frac{1 + 2n\sqrt{\epsilon / B} + n\epsilon/B}{1-\sqrt{2\epsilon / B_{\min}}}.
	\]
	Since this holds for all $j \in M$, this proves the desired bound on the PF value of $x^*$.
\end{proof}

\subsection{Capped Public Goods}
\label{app:approximations-capped}

In this section, we use $\epsilon$-KKT conditions to show that $\epsilon$-optimal solutions to our convex program form approximate Lindahl equilibria. In the proof, we will need a version of Pinsker's inequality (connecting $L_1$ distance and KL divergence) that works even if one distribution is sub-normalized.
\begin{lemma}
	\label{lem:sub-pinsker}
	Let $p = (p_1, \dots, p_n) \ge 0$ and $q = (q_1, \dots, q_n) \ge 0$ be non-negative vectors with $\sum_i p_i = 1$ (so $p$ is a probability distribution) and $\sum_i q_i \le 1$ (so $q$ is a sub-probability distribution). Then $\| p - q \|_1 \le \sqrt{2\textup{KL}(p \| q)}$.
\end{lemma}
\begin{proof}
	Write $s = \sum_i q_i \le 1$, and consider the two probability distributions $\hat p = (p_1, \dots, p_n, 0)$ and $\hat q = (q_1, \dots, q_n, 1 - s)$. Note that $\textup{KL}(\hat p \| \hat q) = \textup{KL}(p \| q) + 0 \log \frac{0}{1-s} = \textup{KL}(p \| q) $. Now, by the standard version of Pinsker's inequality, $\| \hat p - \hat q \|_1 \le \sqrt{2\textup{KL}(\hat p \| \hat q)}$. Hence
	\[
		\| p - q \|_1
		= \textstyle\sum_i |p_i - q_i|
		 = \| \hat p - \hat q \|_1 - |(1 - s) - 0|
		\le \| \hat p - \hat q \| _1
		\le \sqrt{2\textup{KL}(\hat p \| \hat q)}
		= \sqrt{2\textup{KL}(p \| q)}.
		\qedhere
	\]
\end{proof}

As the first part of the argument, we show that approximate solutions satisfy an approximate utility maximization condition, under the assumption that each project receives some minimum amount of spending, i.e. $x_j \ge t$ for all $j \in M$.

\begin{lemma}
	\label{thm:approximate-lindahl-lower-bound}
	Let $t > 0$ and $0 \le \epsilon \le 1$. Assume that $v_{ij} > 1$ whenever $v_{ij} > 0$ and that $\text{cap}_j \le B$ for all $j \in M$. Let $b$ be an $\epsilon$-optimal solution to \Cref{eq:shmyrev cp capped} such that $x_j(b) \ge t$ for all $j \in M$.
	Fix an agent $i \in N$ such that $\sum_{j \in M_i} b_{ij} = B_i$, and consider the personalized prices $p_{ij} = b_{ij}/x_j$ for all $j \in M_i$. Then for 
	\[
	\epsilon' = (m+1)B e^{2F_{i}} (\sqrt{2\epsilon/t} + 2\epsilon/t), \qquad \text{where } F_i = \frac{B \log(v_{\max} \cdot n) + 1}{\frac1{2m} \min(B_{i}, \textup{cap}_{\min})},
	\]
	the allocation $x = x(b)$ maximizes $i$'s utility up to $\epsilon'$. More precisely, for every allocation $y$ satisfying the cap constraints and satisfying $\langle p_{i}, y\rangle \le B_{i}$, we have $u_{i}(x) \ge u_{i}(y) - \epsilon'$.
\end{lemma}

The assumption that $\text{cap}_j \le B$ for all $j \in M$ is used in the proof for technical reasons (to bound the magnitude of bundles $y$ considered in the utility maximization condition) but it is innocuous in a practical sense.

\begin{proof}
	Suppose $b$ forms an $\epsilon$-optimal solution to the convex program and is such that there is some $t > 0$ with $x_j(b) \ge t$ for all $j \in M$. Write $p_{ij} := b_{ij}/x_j(b)$ for $i \in N$, $j \in M_i$.
	
	By \Cref{thm:epsilon-kkt} ($\epsilon$-KKT), there exists $g \in \partial_{\epsilon} f(b)$ and multipliers $\lambda_i, \mu_j, \eta_{ij} \ge 0$ such that
	\[
	g_{ij} + \lambda_i  + \mu_j -  \eta_{ij}  = 0
	\]
	and $\epsilon$-complementary slackness holds with slackness errors $\epsilon_{\lambda_i}, \epsilon_{\mu_j}, \epsilon_{\eta_{ij}} \ge 0$ satisfying $\sum_{i \in N} \epsilon_{\lambda_i} + \sum_{j \in M} \epsilon_{\mu_j} + \sum_{i \in N, j \in M_i} \epsilon_{\eta_{ij}} \leq \epsilon$ and
	\[
	\textstyle
	\lambda_i(B_i-\sum_j b_{ij})\le \epsilon_{\lambda_i}, 
	\qquad 
	\mu_j(\text{cap}_j-\sum_i b_{ij})\le \epsilon_{\mu_j}, 
	\qquad 
	\eta_{ij}b_{ij}\le \epsilon_{\eta_{ij}}.
	\]
	
	\paragraph{Computing the subdifferential.}
	Next, let us understand the consequences of $g \in \partial_{\epsilon} f(b)$. For this, it is convenient to first compute the $\epsilon$-subdifferential of the function $h(x) = \sum_i x_i \log\frac{x_i}{\sum_k x_k}$. Using that its convex conjugate $h^*(y)$ satisfies $h^*(y) = 0$ if $\sum_i e^{y_i} \leq 1$ and $h^*(y) = +\infty$ otherwise, as shown in \Cref{lem:convex-conjugate-h}, we find that
	\begin{align*}
		\partial_{\epsilon}h(x) 
		&= \{ w : h(x) + h^*(w) \le \langle x, w\rangle + \epsilon \} \\
		&= \left\{ w : \sum_{i} e^{w_i} \le 1 \text{ and } \langle w, x\rangle \ge \sum_i x_i \log\frac{x_i}{\sum_k x_k} - \epsilon \right\}.
	\end{align*}
	Therefore, by the summation rule for $\epsilon$-subdifferentials (see \cref{eq:eps subdiff summation rule} and surrounding text), it follows that the vector $g \in \partial_{\epsilon} f(b)$ can be written as $g_{ij} = -\log v_{ij} + w_{ij}$ for some $(w_{ij})_{i \in N, j \in M_i}$ such that $\sum_{i \in N_j} e^{w_{ij}} \le 1$ for all $j \in M$ and that satisfy the following:
	\[
	\sum_{i \in N_j} b_{ij} w_{ij} \ge \sum_{i \in N_j} b_{ij} \log\frac{b_{ij}}{x_j(b)} - \epsilon
	\quad
	\text{for all $j \in M$}.
	\]
	We can rewrite this latter condition as
	\[
	\sum_{i \in N_j} b_{ij} \left( \log p_{ij} - \log e^{w_{ij}} \right) \le \epsilon
	\quad
	\text{for all $j \in M$}
	\]
	which implies
	\[
	x_j(b) \text{KL}\left( (p_{ij})_{i \in N_j} \: \middle\| \: (e^{w_{ij}})_{i \in N_j} \right) \le \epsilon
	\quad
	\text{for all $j \in M$}.
	\]
	Applying Pinsker's inequality (\Cref{lem:sub-pinsker}, which is applicable since $\sum_{i \in N_j} e^{w_{ij}} \le 1$) and using that $x_j(b) \ge t$, we find that
	\[
	\left\| (p_{ij})_{i \in N_j} - (e^{w_{ij}})_{i \in N_j} \right\|_1^2 \le 2\epsilon/t
	\quad
	\text{for all $j \in M$},
	\]
	and since $\|\cdot\|_{\infty} \le \|\cdot\|_1$, we get
	\begin{equation}
		\label{eq:epsilon-difference-to-price}
		\left|p_{ij} - e^{w_{ij}} \right| \le \sqrt{2 \epsilon / t} \quad \text{for all $j \in M$ and $i \in N_j$}.
	\end{equation}
	
	\paragraph{Stationarity condition.}
	Now, the stationarity conditions become
	\[
	\log v_{ij} = w_{ij} + \lambda_i + \mu_j - \eta_{ij}.
	\]
	Exponentiating,
	\begin{equation}
		\label{eq:epsilon-stationarity}
		v_{ij} = e^{w_{ij}} e^{\lambda_i} e^{\mu_j} e^{-\eta_{ij}}.
	\end{equation}
	
	\paragraph{Bounding the multipliers.}
	Fix an agent $i \in N$.
	We wish to bound the magnitude of the multipliers, using \Cref{lem:multiplier-bound}. Write
	\[
	q = \frac{\min(B_{i}, \text{cap}_{\min})}{2m}
	\]
	and consider the solution $b^{\circ}$ with $b^{\circ}_{ij} = q$ for all $j \in M_i$ and $b^{\circ}_{i'j} = 0$ for all other $i'j$ (where either $i' \neq i$ or $j \not\in M_{i}$).
	This solution is feasible since the total spending of $i$ is at most $B_{i}/2$ and each project receives spending of at most $\text{cap}_{\min}/(2m)$, thus respecting the cap constraint.
	
	Next, we check that various constraints have significant slack in the solution $b^{\circ}$. The budget constraint of $i$ has slack at least $B_{i}/2 \ge q$. The capacity constraint of each project $j \in M_{i}$ has slack at least $\text{cap}_{\min}/2 \ge q$. The non-negativity constraint associated with $b_{ij}$ has slack $q$. Thus, each of these types of constraints has slack at least $q$.
	
	Note that $f(b^{\circ}) \le 0$ (trivially). Also, we can give an upper bound on the objective value of any solution $\hat b$ to the program:
	\[
	-f(\hat b) = \sum_{i\in N, j\in M_i} \hat b_{ij} \log v_{ij} + \sum_{j \in M} x_j(\hat b) \left(-\sum_{i \in N_j} \frac{\hat b_{ij}}{x_j(\hat b)} \log \frac{\hat b_{ij}}{x_j(\hat b)} \right) \le B \log v_{\max} + B \log n,
	\]
	where we used the fact that the Shannon entropy of a distribution over $n$ elements is at most $\log n$.
	Hence $-f^* \le B \log(v_{\max} \cdot n)$. Thus, from \Cref{lem:multiplier-bound} and using $\epsilon \le 1$, we deduce that 
	\begin{equation}
		\label{eq:epsilon-multiplier-bound}
		\lambda_{i}, (\mu_j)_{j \in M_{i}}, (\eta_{ij})_{j \in M_{i}} \le \frac{B \log(v_{\max} \cdot n) + 1}{\frac1{2m} \min(B_{i}, \text{cap}_{\min})} =: F.
	\end{equation}
	
	\paragraph{Utility maximization}
	Continue to fix the agent $i \in N$, and like in the lemma statement assume that $\sum_{j \in M_i} b_{ij} = B_i$. We wish to show that the allocation $x = x(b)$ approximately maximizes the utility of $i$ with respect to the prices $p_{ij} = b_{ij}/x_j(b)$. Note that with these prices, $x$ is exactly affordable by our assumption, $\langle p_i, x \rangle = \sum_{j \in M_i} b_{ij}/x_j(b) \cdot x_j(b) = B_i$.
	Let $y$ be any other allocation satisfying the cap constraints, i.e., $0 \le y_j \le \text{cap}_j$, and which is affordable for agent $i$, i.e., $\langle p_i, y \rangle \le B_i$. 
	Since we assumed that $\text{cap}_j \le B$ for all $j \in M$, and thus $\sum_{j \in M} y_j \le \sum_{j \in M} \text{cap}_j \le m \cdot B$.
	
	To establish that $x$ is approximately utility-maximizing, we need to show that $u_i(y) - u_i(x)$ is small. Rewriting,
	\begin{align*}
		u_i(y) - u_i(x) &= \sum_{j \in M_i} v_{ij} (y_j - x_j) \\
		&\overset{\eqref{eq:epsilon-stationarity}}{=} \sum_{j \in M_i} e^{w_{ij}} e^{\lambda_i} e^{\mu_j} e^{-\eta_{ij}} (y_j - x_j) \\
		&= e^{\lambda_i} \sum_{j \in M_i} e^{w_{ij}} e^{\mu_j} e^{-\eta_{ij}} (y_j - x_j)
	\end{align*}
	For each $j \in M_i$, let us write $e^{w_{ij}} = p_{ij} + \delta_{ij}$ for some error value $\delta_{ij}$, where from \eqref{eq:epsilon-difference-to-price} we know $|\delta_{ij}| \le \sqrt{2\epsilon/t}$. Then
	\begin{align*}
		u_i(y) - u_i(x) &= e^{\lambda_i} \sum_{j \in M_i} (p_{ij} + \delta_{ij}) e^{\mu_j} e^{-\eta_{ij}} (y_j - x_j) \\
		&= \underbrace{e^{\lambda_i} \sum_{j \in M_i} p_{ij} e^{\mu_j} e^{-\eta_{ij}} (y_j - x_j)}_{\text{Main Term}} + \underbrace{e^{\lambda_i} \sum_{j \in M_i} \delta_{ij} e^{\mu_j} e^{-\eta_{ij}} (y_j - x_j)}_{E_1\text{: Error from subgradient approx.}}
	\end{align*}
	
	Let's bound the error term $E_1$:
	\begin{align*}
		|E_1| &\le e^{\lambda_i} \sum_{j \in M_i} |\delta_{ij}| e^{\mu_j} e^{-\eta_{ij}} |y_j - x_j| \\
		&\le e^{F} \left( \sqrt{2\epsilon/t} \right) e^{F} \cdot 1 \cdot \sum_{j \in M_i} |y_j - x_j| \tag{using $\eta_{ij} \ge 0$ so $e^{-\eta_{ij}} \le 1$} \\
		&\le e^{2F} \sqrt{2\epsilon/t} \left(\textstyle\sum_j y_j + \sum_j x_j\right) \\
		&\le e^{2F} \sqrt{2\epsilon/t} (mB+B) \tag{$\sum_{j \in M} y_j \le m \cdot B$} \\
		&= (m+1)B e^{2F} \sqrt{2\epsilon/t}.
	\end{align*}
	
	Now, let's analyze the Main Term. We can rewrite it as:
	\[
	\text{Main Term} = e^{\lambda_i} \sum_{j \in M_i} p_{ij}(y_j-x_j) + \underbrace{e^{\lambda_i} \sum_{j \in M_i} p_{ij} (e^{\mu_j}e^{-\eta_{ij}} - 1) (y_j - x_j)}_{E_2\text{: Error from slackness}}.
	\]
	The first part is non-positive because $y$ is affordable for $i$ and $x$ exhausts the budget of $i$:
	\[ e^{\lambda_i} \sum_{j \in M_i} p_{ij}(y_j-x_j) = e^{\lambda_i}(\langle p_i, y \rangle - \langle p_i, x \rangle) \le e^{\lambda_i}(B_i - B_i) = 0. \]
	So, we only need to bound the slackness error, $E_2$. Let's split the sum based on the sign of $(y_j - x_j)$.
	
	\textbf{Case 1: $y_j > x_j$.} The term $(y_j-x_j)$ is positive. We need an upper bound for $(e^{\mu_j}e^{-\eta_{ij}} - 1)$. Since $\eta_{ij} \ge 0$, we have $e^{-\eta_{ij}} \le 1$, so $e^{\mu_j}e^{-\eta_{ij}} - 1 \le e^{\mu_j} - 1$. Now,
	\begin{align*}
		p_{ij} (e^{\mu_j} - 1) (y_j - x_j) &\le p_{ij} (e^{\mu_j} - 1) (\text{cap}_j - x_j) \tag{since $y_j \le \text{cap}_j$} \\
		&\le p_{ij} e^{\mu_j} \mu_j (\text{cap}_j - x_j) \tag{using $e^z-1 \le z e^z$ for $z \ge 0$}
	\end{align*}
	From $\epsilon$-complementary slackness, we have $\mu_j(\text{cap}_j - x_j) \le \epsilon_{\mu_j}$ with $\sum_j \epsilon_{\mu_j} \le \epsilon$.
	The contribution from this case to $E_2$ is at most:
	\[ e^{\lambda_i} \sum_{j: y_j>x_j} p_{ij} e^{\mu_j} \epsilon_{\mu_j} \le e^{F} \sum_{j: y_j>x_j} (B/t) e^{F} \epsilon_{\mu_j} \le (B/t)e^{2F} \sum_j \epsilon_{\mu_j} \le (B/t)e^{2F} \epsilon. \]
	
	\textbf{Case 2: $y_j < x_j$.} The term $(y_j-x_j)$ is negative. We need a lower bound for $(e^{\mu_j}e^{-\eta_{ij}} - 1)$ to get an upper bound on the product. Using $e^{-z} \ge 1-z$ for $z \ge 0$:
	\[ e^{\mu_j}e^{-\eta_{ij}} - 1 \ge e^{\mu_j}(1-\eta_{ij}) - 1 = (e^{\mu_j}-1) - \eta_{ij}e^{\mu_j} \ge -\eta_{ij}e^{\mu_j}. \]
	The term $p_{ij} (e^{\mu_j}e^{-\eta_{ij}} - 1) (y_j - x_j)$ is therefore bounded above by $p_{ij} (-\eta_{ij}e^{\mu_j}) (y_j-x_j)$.
	From $\epsilon$-complementary slackness, $\eta_{ij}b_{ij} \le \epsilon_{\eta_{ij}}$ with $\sum_{ij} \epsilon_{\eta_{ij}} \le \epsilon$. This means $\eta_{ij} p_{ij} x_j \le \epsilon_{\eta_{ij}}$, so $\eta_{ij}p_{ij} \le \epsilon_{\eta_{ij}}/x_j \le \epsilon_{\eta_{ij}}/t$. Hence
	\begin{align*}
		p_{ij} (-\eta_{ij}e^{\mu_j}) (y_j-x_j) &= (- (y_j-x_j)) \cdot (p_{ij}\eta_{ij}) \cdot e^{\mu_j} \\
		&\le x_j \cdot (\epsilon_{\eta_{ij}}/t) \cdot e^{F} \le (B/t) e^{F} \epsilon_{\eta_{ij}}.
	\end{align*}
	The contribution from this case to $E_2$ is at most $e^{\lambda_i} \sum_{j: y_j<x_j} (B/t) e^{F} \epsilon_{\eta_{ij}} \le (B/t)e^{2F}\epsilon$.
	
	Combining both cases, we get $|E_2| \le (2B/t)e^{2F}\epsilon$.
	
	Finally, putting everything together:
	\begin{align*}
		u_i(y) - u_i(x) \le 0 + |E_1| + |E_2| 
		&\le (m+1)B e^{2F} \sqrt{2\epsilon/t} + (2B/t)e^{2F}\epsilon \\
		&\le (m+1) B e^{2F}(\sqrt{2\epsilon/t} + 2\epsilon/t). \tag{since $2 \le m+1$}
	\end{align*}
	This shows that $u_i(x) \ge u_i(y) - \epsilon'$ where $\epsilon' = (m+1) B e^{2F}(\sqrt{2\epsilon/t} + 2\epsilon/t)$, as required.
\end{proof}

We now use \Cref{thm:approximate-lindahl-lower-bound} to show that an approximate solution to our convex program produces an approximate Lindahl equilibrium, in the following sense discussed in \Cref{sec:capped-approximate}.

\defepsilonlindahl*

Since \Cref{thm:approximate-lindahl-lower-bound} makes an extra assumption that $x_j \ge t$ for all $j \in M$, our argument perturbs the input instance by adding additional voters with $\epsilon$ weight who ensure that in any feasible solution to the program, we have $x_j \ge \epsilon/m$ for all $j \in M$. We then invoke \Cref{thm:approximate-lindahl-lower-bound} with $t = \epsilon/m$, and show that the resulting solution for the perturbed instance forms an approximate Lindahl equilibrium with respect to the original instance.

\approximatelindahl*
\begin{proof}
	Given an instance $I$ of the capped public goods problem and a value of $\epsilon > 0$, our algorithm will construct a perturbed instance $I'$ on agent set $N' = N \cup \{i^{+}_1, \dots, i^{+}_m\}$, obtained by adding an extra agent $i^{+}_j$ for each project $j \in M$, with budget $B_{i^{+}_j} = \epsilon/m$ and with a ``single-minded'' valuation, meaning that $M_{i^{+}_j} = \{j\}$ (say $v_{i^{+}_j j} = 2$ and $v_{i^{+}_j k} = 0$ for all $k \neq j$). Thus, the overall budget available in $I'$ is $B' = B + m \cdot \frac{\epsilon}{m} = B + \epsilon$.
	
	Now let $B_{\min} = \min_{i \in N} B_i$ (where the minimum is taken only over the original agents) and set
	\[
	\epsilon_{\text{solver}} = 
	\frac%
	{\epsilon^3}%
	{\left(
		2 \gamma
		\right)^2 \cdot 2m}
	\qquad
	\text{for }
	\gamma := (m+ 1) B e^{2F}
	\text{ and }
	F := \frac
	{B \log(v_{\max} \cdot n) + 1}
	{\frac1{2m} \min(B_{\min}, \textup{cap}_{\min})}.
	\]
	The algorithm will compute an $\epsilon_{\text{solver}}$-optimal solution $b$ to  \Cref{eq:shmyrev cp capped} as applied to the perturbed instance $I'$. We can do this using the ellipsoid method \citep[see, e.g.,][Theorem 13.1]{vishnoi2021algorithms} whose runtime has a dependence on $\log \frac1{\epsilon_{\text{solver}}}$ which is polynomial (as $\epsilon_{\text{solver}}$ contains single-exponential terms).
	
	The algorithm then checks if there is a voter $i \in N'$ who doesn't spend all of their budget (i.e., $\sum_{j \in M_i} b_{ij} < B_i$) but approves a project that doesn't exhaust its cap (i.e., $x_j(b) < \text{cap}_j$ for some $j \in M_i$). In these cases, the algorithm increases $b_{ij}$ appropriately. This increase preserves feasibility and improves the objective function value since the partial derivative of the objective function of the program is $\partial g(b)/\partial b_{ij} = - \log v_{ij} + \log \frac{b_{ij}}{x_j(b)}$ (see \Cref{lem:shmyrev subdiff}) which is negative when $j \in M_i$ because $v_{ij} > 1$ by our normalization and $0 \le b_{ij}/x_j(b) \le 1$ by definition of $x_j(b)$. Thus, increasing $b_{ij}$ decreases the objective function which is to be minimized. Thus, this increase preserves being an $\epsilon$-optimal solution. This process is repeated until no such voter $i$ exists.
	
	Now, for every $j \in M$, we either have $x_j(b) = \text{cap}_j \ge \epsilon$, or else (by the postprocessing we just did) the entire budget of the extra voter $i^{+}_j$ must be going towards project $j$ and hence $x_j(b) \ge b_{i^{+}_jj} = B_{i^{+}_j} = \epsilon/m$. In either case $x_j(b) \ge \epsilon/m$. This will later allow us to invoke \Cref{thm:approximate-lindahl-lower-bound} with $t = \epsilon/m$. 
	
	Consider the personalized prices defined as $p_{ij} = b_{ij}/x_j$ for all $i \in N$ and $j \in M_i$. We will show that  $(x, (p_{ij})_{i \in N, j \in M})$ is an $\epsilon$-approximate Lindahl equilibrium (for the original instance $I$).
	\begin{itemize}
		\item For approximate budget-feasibility, note that $\sum_{j \in M} x_j \le B' = B + \epsilon$.
		\item For affordability, note that $\langle p_i, x\rangle = \sum_{j \in M} \frac{b_{ij}}{x_j} x_j = \sum_{j \in M_i} \frac{b_{ij}}{x_j} x_j = \sum_{j \in M_i} b_{ij} \le B_i$ for all $i \in N$, since $b$ is a feasible solution to \Cref{eq:shmyrev cp capped}.
		\item For approximate utility maximization, consider some agent $i \in N$. By our post-processing of the solution, if we have $\sum_{j \in M_i} b_{ij} < B_i$, then all projects that agent $i$ likes have been funded up to their cap, so $i$ receives the best-possible outcome and thus utility maximization is satisfied for $i$. Otherwise, $\sum_{j \in M_i} b_{ij} = B_i$ and we can invoke \Cref{thm:approximate-lindahl-lower-bound} with $\epsilon_{\text{solver}}$ and $t = \epsilon/m$, which shows that $x$ is utility-maximizing for $i$ up to an error of $\epsilon_{\text{util}}$ as specified in the statement of \Cref{thm:approximate-lindahl-lower-bound}, which is in terms of a parameter $F_i$ (involving a $1/B_i$ term) that we can upper bound by $F$ (which instead involves a $1/B_{\min}$ term). In the following computation, we will use that $2\gamma =  2(m+1)B e^{2F} \ge 1$ which follows since $2F > 0$ (which itself follows from $v_{\max} > 1$ due to the normalization of valuations) and $B \ge 1$ by assumption of the theorem. We get
		\begin{align*}
			\epsilon_{\text{util}} 
			&= (m+1)B e^{2F_{i}} \left(\sqrt{2\epsilon_{\text{solver}}/(\epsilon/m)} + 2\epsilon_{\text{solver}}/(\epsilon/m)\right) \tag{from \Cref{thm:approximate-lindahl-lower-bound}} \\
			&\le (m+1)B e^{2F_{i}} \left(\sqrt{2\epsilon_{\text{solver}}/(\epsilon/m)} + 2\epsilon_{\text{solver}}/(\epsilon/m)\right) \tag{as $F_i \le F$} \\
			&\le \gamma \left(\sqrt{2\epsilon_{\text{solver}}/(\epsilon/m)} + 2\epsilon_{\text{solver}}/(\epsilon/m)\right) \tag{def. of $\gamma$} \\
			&= \gamma \left(\sqrt{2\epsilon^3/((2\gamma)^2 2m \cdot\epsilon/m)} + 2\epsilon^3/((2\gamma)^2 2m \cdot\epsilon/m)\right) \tag{def. of $\epsilon_{\text{solver}}$} \\
			&= \gamma \left(\sqrt{\epsilon^2/(2\gamma)^2} + \epsilon^2/(2\gamma)^2\right) \tag{cancelling} \\
			&= \gamma \left(\epsilon/(2\gamma) + \epsilon^2/(2\gamma)^2\right) \tag{take square root} \\
			&\le \gamma \cdot (\epsilon/(2\gamma) + \epsilon/(2\gamma)) \tag{$\epsilon \le 1$ and $2\gamma \ge 1$} \\
			&= \epsilon,
		\end{align*}
		hence showing that $x$ is $\epsilon$-utility maximizing for $i$.
		\item For approximate profit maximization, note that for every $j \in M$ we have $\sum_{i \in N} p_{ij} = \sum_{i \in N} b_{ij}/x_j \le \sum_{i \in N'} b_{ij}/x_j = 1$ by definition of $x_j = x_j(b)$. Also note that by construction, we have $x_j > 0$ for every $j$, and we have $\sum_{i \in N} p_{ij} = \sum_{i \in N} b_{ij}/x_j = \sum_{i \in N'} b_{ij}/x_j - \sum_{k \in M} b_{i^{+}_kj}/x_j = 1 - b_{i^{+}_jj}/x_j \ge 1 - B_{i^{+}_j}/x_j = 1 - \frac{\epsilon}{mx_j} \ge 1 - \frac{\epsilon}{x_j} $.
	\end{itemize}
	Thus, the algorithm has successfully computed an $\epsilon$-approximate Lindahl equilibrium.
\end{proof} 
\end{document}